\newtheorem{theorem}{Theorem}
\newtheorem{lemma}{Lemma}
\newtheorem{assumption}{Assumption}
\newtheorem{proposition}{Proposition}
\newtheorem{corollary}{Corollary}
\newtheorem{example}{Example}
\theoremstyle{definition}
\newtheorem{definition}{Definition}
\newtheorem{remark}{Remark}
\def\EE{\mathbb{E}}
\def\PP{\mathbb{P}}
\def\QQ{\mathbb{Q}}
\def\RR{\mathbb{R}}
\def\calD{\mathcal{D}}
\def\calG{\mathcal{G}}
\def\calL{\mathcal{L}}
\def\calN{\mathcal{N}}
\def\calP{\mathcal{P}}
\def\calX{\mathcal{X}}
\def\calY{\mathcal{Y}}
\def\calZ{\mathcal{Z}}
\newcommand{\cc}{\textup{c}}
\newcommand{\Let}{\triangleq}
\def\1{\mathbbm{1}}
\newcommand\independent{\protect\mathpalette{\protect\independenT}{\perp}}
\def\independenT#1#2{\mathrel{\rlap{$#1#2$}\mkern2mu{#1#2}}}
\newcommand{\argmin}{\mathop{\mathrm{argmin}}}
\newcommand{\diff}{{\rm d}}
\newcommand{\lnorm}[2]{\left\|{#1} \right\|_{{#2}}}
\def\independenT#1#2{\mathrel{\rlap{$#1#2$}\mkern2mu{#1#2}}}
\newcommand{\bc}{\textup{a}}
\newcommand{\mmu}{\hat{\bm{\mu}}}
\def\EE{\mathbb{E}}
\def\PP{\mathbb{P}}
\def\QQ{\mathbb{Q}}
\def\RR{\mathbb{R}}
\def\calD{\mathcal{D}}
\def\calG{\mathcal{G}}
\def\calL{\mathcal{L}}
\def\calN{\mathcal{N}}
\def\calP{\mathcal{P}}
\def\calX{\mathcal{X}}
\def\calY{\mathcal{Y}}
\def\calZ{\mathcal{Z}}
\newcommand{\Wass}{\mathds W}
\def\1{\mathbbm{1}}
\def\independenT#1#2{\mathrel{\rlap{$#1#2$}\mkern2mu{#1#2}}}
\def\independenT#1#2{\mathrel{\rlap{$#1#2$}\mkern2mu{#1#2}}}
\definecolor{myblue}{rgb}{.8, .8, 1}
\definecolor{mathblue}{rgb}{0.2472, 0.24, 0.6} 
\definecolor{mathred}{rgb}{0.6, 0.24, 0.442893}
\definecolor{mathyellow}{rgb}{0.6, 0.547014, 0.24}
\begin{document}

%

%
\runningauthor{Sirui Lin, Zijun Gao, Jos\'{e} Blanchet, Peter Glynn}

\twocolumn[

\aistatstitle{Causal Partial Identification via Conditional Optimal Transport}

\aistatsauthor{
Sirui Lin$^{1, *}$ \And
Zijun Gao$^{2, *}$ \And
Jos\'{e} Blanchet$^{1}$ \And
Peter Glynn$^{1}$
}

\aistatsaddress{
$^{1}$Department of Management Science and Engineering, Stanford University, USA \\
$^{2}$Marshall School of Business, University of Southern California, USA
}
]

\begin{abstract}
  We study the estimation of causal estimand involving the joint distribution of treatment and control outcomes for a single unit. In typical causal inference settings, it is impossible to observe both outcomes simultaneously, which places our estimation within the domain of partial identification (PI). Pre-treatment covariates can substantially reduce estimation uncertainty by shrinking the partially identified set. Recent work has shown that covariate-assisted PI sets can be characterized through conditional optimal transport (COT) problems. However, finite-sample estimation of COT poses significant challenges, primarily because the COT functional is discontinuous under the weak topology, rendering the direct plug-in estimator inconsistent. To address this issue, existing literature relies on relaxations or indirect methods involving the estimation of non-parametric nuisance statistics. In this work, we demonstrate the continuity of the COT functional under a stronger topology induced by the adapted Wasserstein distance. Leveraging this result, we propose a direct, consistent, non-parametric estimator for COT value that avoids nuisance parameter estimation. We derive the convergence rate for our estimator and validate its effectiveness through comprehensive simulations, demonstrating its improved performance compared to existing approaches.
\end{abstract}

\section{Introduction}
The potential outcome model \cite{rubin1974estimating, imbens2015causal} is prevalent in causal inference, where each unit is associated with two potential outcomes: one under treatment and one under control. 
Since the two potential outcomes are never observed simultaneously, their joint distribution is not identified, making the causal estimands that depend on the joint distribution only partially identifiable.
Optimal transport~\cite{villani2009optimal}, which minimizes an expected cost over joint distributions respecting the marginals, has been used to recover partial identification (PI) sets of causal estimands~\cite{gao2024bridging}.
When pre-treatment covariates are available, they can often be used to reduce uncertainty about the joint distribution of potential outcomes, and produce smaller, more informative PI sets.
Conditional optimal transport (COT) has been used to characterize covariate-assisted PI sets~\cite{ji2023model, fan2025partial, lin2025tightening, balakrishnan2023conservative} by finding the optimal joint distribution preserving the outcome-covariate distributions.

Despite the potential to yield more meaningful covariate-assisted PI sets, COT poses more substantial statistical challenges compared to OT, especially with continuous covariates.
First, it is rare to observe a shared continuous covariate value in both treatment and control groups, leaving the conditional marginal distribution ill-defined at that value for at least one group. Specifically, when conditioning on an exact covariate value, there will typically be only a single unit in the sample that attains that value, with matched or repeated observations being unlikely.
Second, even if each covariate value is associated with both a treated and a control unit (such as in twin studies), there is typically only one observation per covariate value in each group.
The plug-in estimator of COT value, which uses point masses to estimate the conditional marginal distributions, may not converge to the true COT value as the sample size goes to infinity (\cite{chemseddine2024conditional} and Example~\ref{exp:counter}).
This issue suggests that the COT functional may not be continuous under the weak topology, rendering the COT value estimation more challenging than standard OT problems.

In this paper, we demonstrate the continuity of the COT functional under a stronger topology induced by the adapted Wasserstein distance, which requires uniform convergence of the outcome's conditional distribution at each covariate value. 
Leveraging the continuity result, we propose a direct, consistent, non-parametric estimator for the COT value. 
Our estimator is constructed using an adapted empirical distribution, which discretizes the covariate space by assigning observed covariates to a finite number of cells. 
The convergence of the adapted empirical distribution under the adapted Wasserstein distance ensures the consistency of our estimator, and we further establish its finite-sample error bounds.
Compared to existing estimators for COT value that rely on relaxations of the original COT formulation~\cite{manupriya2024consistent, lin2025tightening}, or utilize indirect approaches that require estimating nuisance functions~\cite{ji2023model}, or resort to heuristic procedures without statistical guarantees~\cite{tabak2021data, bunne2022supervised, hosseini2023conditional}, our proposed COT value estimator is direct and statistically consistent. 

\noindent\textbf{Contributions}. 
\begin{enumerate}
    \item We establish the COT formulation for the covariate-assisted PI sets of a class of causal estimands (Theorem~\ref{thm:optimal_causalbounds}). We prove the continuity of the COT functional under the topology induced by the adapted Wasserstein distance (Theorem~\ref{prop:con_adapted_topology}).

    \item Building on (a), under Bernoulli design, we propose a direct nonparametric estimator for the covariate-assisted PI set, prove its consistency (Theorem~\ref{thm:consist}), and further establish the finite-sample convergence rate (Theorem~\ref{thm:finite_sample}) together with a robustness guarantee under covariate's distributional shift (Theorem~\ref{rmk:robust}).
    We then extend the result to unconfounded designs with covariate-dependent treatment assignment (Theorem~\ref{thm:finite_sampleII}).
    In addition to its relevance for causal inference, our work advances the literature on COT by providing a consistent non-parametric estimator for the COT value with provable convergence rate.

    \item We validate the estimator's effectiveness through comprehensive empirical studies (Section~\ref{sec:simulation}). 
\end{enumerate}
\noindent \textbf{Organization}. The rest of the paper is structured as follows. 
Section~\ref{sec:problem_form} introduces the problem setting. 
Section~\ref{sec:connection} provides basic properties of COT values, which serve as building blocks for our proposal.
Section~\ref{sec:main_result} presents the construction of our COT value estimator and the main theoretical results on its finite-sample performance. In Section~\ref{sec:simulation}, we provide simulation studies comparing our method with existing approaches. Technical proofs are deferred to the appendix.

\noindent \textbf{Notation}. We use $Z(w)$ to denote the variable $Z$ observed within the control ($w=0$) and treatment ($w=1$) groups.
We denote $[n] = \{1,...,n\}, n\wedge m = \min(n,m), n \vee m = \max(n,m)$. We denote by $\calP(\Omega)$ the set of probability measures on $\Omega$. We use $\mu_{Y,Z}$ to denote the probability distribution (and its associated measure on $\calY \times \calZ$) of $(Y, Z)$ under $\mu$, with $\mu_{Z}$ denoting the marginal distribution of $Z$, and $\mu^{z}_{Y}$ the conditional distribution of $Y$ given $Z=z$. By $\mu = \mu_Z \otimes \mu_Y^{Z}$, we mean that  for any measurable function $g$, $\int g \diff \mu = \int_{\calZ} \int_{\calY} g(z,y) \diff \mu_Y^{z}(y) \diff \mu_Z(z)$. We denote the set of the joint couplings $\pi$ with marginals $\mu, \nu \in \calP(\calX)$ ($\calX$ may be $\calY, \calZ$ or $\calY \times \calZ$) by
\begin{equation}\label{eq:coupling_set}
    \Pi(\mu, \nu) \Let \left\{\pi \in \calP(\calX^2): \pi_{X} = \mu,~\pi_{X'} = \nu \right\}.
\end{equation}

\section{Problem Formulation}\label{sec:problem_form}
\subsection{Potential Outcome Model}\label{sec:potential.outcome.model}
We consider the potential outcome model with a binary treatment \cite{rubin1974estimating}. 
Suppose there are $N$ units.
Each unit is associated with two potential outcomes, \( Y_i(0) \) and \( Y_i(1) \in \calY \subseteq \mathbb{R}^{d_Y} \), $i \in [N]$, $d_Y \ge 1$,  $\calY$ a convex closed set. Here \( Y_i(0) \) is the outcome under control and \( Y_i(1) \) is the outcome under treatment\footnote{We allow the potential outcomes to be multivariate to accommodate trials with multiple endpoints, experiments with a primary and secondary outcome, or policy evaluations involving effectiveness and cost.}.
Each unit receives a binary treatment \( W_i \in \{0,1\} \), where \( W_i = 1 \) indicates treatment and \( W_i = 0 \) indicates control. 
For unit \( i \), only the outcome  \( Y_i(W_i) \) is observed.

Often, each unit is also associated with a covariate vector \( Z_i \in \calZ \subseteq \mathbb{R}^{d_Z} \), \( d_Z \ge 1 \), $\calZ $ a convex closed set. Covariates may often include continuous components, such as age or lab measurements. 
We denote the probability of receiving treatment conditional on the covariate value, typically referred to as propensity score, by \( e(z) := \mathbb{P}(W_i = 1 \mid Z_i = z) \).

We impose the following standard assumptions \cite{imbens2015causal} for the potential outcome model.
\begin{assumption}[Stable Unit Treatment Value Assumption (SUTVA)]\label{assu:SUTVA}
    Unit $i$'s potential outcomes only depend on $W_i$ but not $W_j$, $j \neq i$.
\end{assumption}

\begin{assumption}[Unconfoundedness]\label{assu:unconfoundedness}
    Treatment assignment is independent of potential outcomes given covariates, that is \( Y_i(w) \independent W_i \mid Z_i, \ w=0,1 \).
\end{assumption}

\begin{assumption}[Overlap]\label{assu:overlap} 
    There exists \( \delta > 0 \) such that for any possible covariate value $z$, the propensity score satisfies \( \delta < e(z) < 1 - \delta \).
\end{assumption}

To enable the asymptotic analysis, 
we adopt the super-population model~\cite{imbens2015causal} justified by Assp.~\ref{assu:SUTVA}, \ref{assu:unconfoundedness},
\begin{align*}
    (Y_i(0), Y_i(1), Z_i) 
    \stackrel{\text{i.i.d.}}{\sim} \mu \Let \mu_{Y(0), Y(1), Z},
\end{align*}
where \( \mu \) is the unknown joint distribution. 
The following result shows that, under the super-population model with unconfounded treatment assignment, the outcome-covariate distributions \( (Y_i(1), Z_i) \) and \( (Y_i(0), Z_i) \) are identifiable from the observed data.  
The outcome-covariate distributions will be later used in the COT formulation of the PI sets.
\begin{proposition}[Identifiable marginals]\label{prop:ID.marginal}
    Under~Assp. \ref{assu:SUTVA} - \ref{assu:overlap}, for $w=0,1$,
    \begin{align}\label{eq:ID.marginal}
        &\PP\left(Y_i(w) = y, Z_i = z\right) \notag\\ 
        =~& \tilde e_w(z)\  \PP\left(Y_i(W_i) = y, Z_i = z \mid W_i = w \right), 
    \end{align}
    where $\tilde e_w(z) = \frac{(1-w) - \bar{e}}{(1-w) - e(z)}$,  $\bar{e} \Let \int e(z) \diff \mu_z(z)$.
\end{proposition}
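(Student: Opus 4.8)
The plan is to recast the stated density identity as a reweighting identity between probability measures and then verify it piece by piece via disintegration over $Z$. I would read the right-hand side as the joint law $\nu_k$ of $(Y_i, Z_i)$ conditional on $\{W_i = k\}$, i.e.\ $\nu_k = \mu_{Y, Z \mid W = k}$ where $Y_i := Y_i(W_i)$ denotes the observed outcome; the claim is then exactly that $\mu_{Y(k), Z}$ is absolutely continuous with respect to $\nu_k$ with Radon--Nikodym density $w_k(z)$ depending on $z$ only. Writing both measures in disintegrated form, $\mu_{Y(k), Z} = \mu_Z \otimes \mu_{Y(k)}^{Z}$ and $\nu_k = (\nu_k)_Z \otimes (\nu_k)_Y^{Z}$, it suffices to establish two facts: (i) the conditional parts agree, $\mu_{Y(k)}^{z} = (\nu_k)_Y^{z}$ for $\mu_Z$-a.e.\ $z$; and (ii) the $Z$-marginals satisfy $\diff \mu_Z / \diff (\nu_k)_Z(z) = w_k(z)$.

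For (i), I would use SUTVA (consistency): on the event $\{W_i = k\}$ one has $Y_i = Y_i(k)$, so $(\nu_k)_Y^{z}$ is the conditional law of $Y_i(k)$ given $\{Z_i = z,\ W_i = k\}$. Unconfoundedness, $(Y_i(0), Y_i(1)) \independent W_i \mid Z_i$, implies that further conditioning on $W_i = k$ leaves this conditional law unchanged, hence $(\nu_k)_Y^{z} = \mu_{Y(k)}^{z}$; Overlap ensures $\PP(W_i = k \mid Z_i = z) \in (\delta, 1-\delta)$, so the conditioning on $\{W_i = k\}$ is legitimate for $\mu_Z$-a.e.\ $z$.

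For (ii), I would apply Bayes' rule to the covariate: since $\PP(W_i = 1 \mid Z_i = z) = e(z)$ and $\PP(W_i = 1) = \int e(z)\,\diff\mu_Z(z) = \bar e$ (and symmetrically $\PP(W_i = 0 \mid Z_i = z) = 1 - e(z)$, $\PP(W_i = 0) = 1 - \bar e$), the density of $(\nu_k)_Z$ against $\mu_Z$ equals $\PP(W_i = k \mid Z_i = z) / \PP(W_i = k)$, i.e.\ $e(z)/\bar e$ for $k = 1$ and $(1 - e(z))/(1 - \bar e)$ for $k = 0$. Overlap also yields $\bar e \in (\delta, 1 - \delta)$, so these densities are bounded away from $0$ and $\infty$ and may be inverted, giving $\diff \mu_Z / \diff (\nu_k)_Z(z) = w_k(z)$. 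Combining (i) and (ii) through the disintegration identity $\int g\,\diff\mu_{Y(k), Z} = \int_{\calZ}\int_{\calY} g(z,y)\,\diff\mu_{Y(k)}^{z}(y)\,\diff\mu_Z(z)$ then produces the displayed formula.

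I expect the only genuinely delicate point to be the measure-theoretic bookkeeping with (possibly continuous) covariates: the ``$=y,\ =z$'' notation in the statement must be interpreted via regular conditional distributions and Radon--Nikodym densities, and one must take mild care on $\mu_Z$-null sets. Overlap is exactly what rules out the degeneracies (no covariate stratum is assigned deterministically to a single arm), so beyond this the argument is a routine chain of consistency, conditional independence, and Bayes' rule.
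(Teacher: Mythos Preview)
Your proposal is correct and uses the same ingredients as the paper (consistency, unconfoundedness, Bayes' rule on the propensity). The paper's proof is a direct three-line chain at the joint level---writing $\PP(Y_i(1)=y, Z_i=z) = \PP(Y_i(1)=y, Z_i=z, W_i=1)/\PP(W_i=1 \mid Y_i(1)=y, Z_i=z)$ and then using unconfoundedness to reduce the denominator to $e(z)$---whereas you factor through the disintegration $\mu_Z \otimes \mu_{Y(k)}^Z$ and verify the $Z$-marginal and $Y\mid Z$-conditional pieces separately; this is a cosmetic reorganization rather than a genuinely different argument.
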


\subsection{PI Sets of Causal Estimands}\label{sec:partial.identificaion}

In this paper, we focus on causal estimands of the form 
\begin{align*}
    V^* \Let \mathbb{E}_{\mu}[h(Y(0), Y(1))].
\end{align*} 
Here, \( h: \calY^2 \rightarrow \mathbb{R} \) is a pre-specified objective function.

A fundamental challenge in causal inference is that \( Y_i(0) \) and \( Y_i(1) \) cannot be observed simultaneously for the same unit, making the joint distribution \( \mu_{Y(0), Y(1)} \) inherently unidentifiable. As a result, for causal estimands depending on the joint distribution, their values are not identifiable from observed data\footnote{Non-identifiability refers to the fact that two different joint distributions can yield different values of the causal estimand while producing the same distribution over observable quantities.
}.
This issue is commonly known as \textit{partial identification}.

\begin{example}[Variance of treatment effects]\label{exam:variance}
   For \( h(y_0, y_1) = (y_0 - y_1)^2 \), the corresponding estimand represents the variance of the treatment effect, which can be used to construct a confidence interval for the average treatment effect \( \mathbb{E}[Y(1) - Y(0)] \) in the classic Neymanian framework \cite{neyman1923application}. This estimand is only partially identifiable and has been studied in~\cite{aronow2014sharp, balakrishnan2025conservative}. 
\end{example}


For partially identifiable causal estimands, point estimation is infeasible.
The object of interest becomes the partial identification (PI) set, that is, the set of values consistent with the marginal distributions of \( (Y_i(1), Z_i) \) and \( (Y_i(0), Z_i) \). 
Mathematically, PI set problems can be formulated under the OT framework (Section~\ref{sec:COT}). First, define the coupling set as
\begin{align*}
    \Pi_{\cc} \Let \left\{ \pi \in \mathcal{P}(\calY^2 \times \calZ): \pi_{Y(w), Z} = \mu_{Y(w), Z}\ \   w=0,1\right\}.
\end{align*}
Then, the PI set is given by
\begin{align}\label{eq:COT.PI}
    \left\{ \mathbb{E}_\pi[h(Y(0), Y(1))] : \pi \in \Pi_{\cc} \right\} = [V_{\cc}, \tilde V_{\cc}].
\end{align}
Here, we use the fact that the set \( \Pi_{\cc} \) is convex, the resulting PI set~\eqref{eq:COT.PI} is a convex subset of \( \mathbb{R} \), and thus an interval. 
To construct this interval, it suffices to compute its lower and upper bounds, starting with
\begin{align}\label{eq:COT.lower}
    V_{\cc} = \min_{\pi \in \Pi_{\cc}}  \mathbb{E}_\pi[h(Y(0), Y(1))],
\end{align}
and analogously the upper bound $\tilde V_{\cc}$ via a maximization over the same coupling set.

Note that, for certain objective functions $h$ (including Example~\ref{exam:variance}), Hoeffding--type results yield a closed-form optimizer for \eqref{eq:COT.lower} (see, e.g., \cite{tchen1980inequalities}). In contrast, we develop estimators for \eqref{eq:COT.lower} that apply to broad classes of smooth functions $h$.

\subsection{Conditional Optimal Transport}\label{sec:COT}
We first review optimal transport (OT), followed by a discussion of conditional optimal transport (COT).

OT has a rich literature \cite{villani2009optimal} with broad applications in machine learning and statistics. In particular, OT has recently emerged as a powerful tool in causal inference~\cite{black2020fliptest, charpentier2023optimal, de2024transport, torous2024optimal}. 
In an OT problem, the goal is to couple two distributions in a way that minimizes a specified cost function (e.g., $h$) while preserving their marginal distributions. The OT distance is defined as 
\[
\Wass_h(\mu, \nu) \Let \min_{\pi \in \Pi(\mu, \nu)} \EE_{\pi}[h(X, X')],
\]
where $\Pi(\mu, \nu)$ is defined in Eq.~\eqref{eq:coupling_set}. When $h(x,x') = \|x - x'\|_2$, the OT distance is called Wasserstein $1$-distance and denoted by $\Wass_1$. As discussed in Section~\ref{sec:partial.identificaion}, OT naturally connects to the PI set problem, where $\Wass_h(\mu_{Y(0)}, \mu_{Y(1)})$ can be interpreted as a lower bound of $V^*$.

Compared to OT, COT aligns conditional measures across values of the covariates, which
has seen various applications, such as conditional sampling~\cite{wang2023efficient, hosseini2023conditional, baptista2024representation}, domain adaptation~\cite{rakotomamonjy2022optimal}, and Bayesian flow matching~\cite{kerrigan2024dynamic, chemseddine2024conditional}.
In this paper, we focus on the natural connection between COT and the PI problem when covariates are available.
Particularly, recall that in the PI set problem with $Z_i$, it becomes essential to preserve the outcome-covariate distributions of \( (Y_i(0), Z_i) \) and \( (Y_i(1), Z_i) \), where \( Z_i \) is shared across both marginals. 
The corresponding covariate-assisted lower bound is defined in Eq.~\eqref{eq:COT.lower},
which, as shown below, is the optimal objective value of a COT problem.

\begin{proposition}[Recursive formulation]\label{prop:recursive_form}
    For a measurable function $h: \calY \times \calY \rightarrow \RR_{+}$, 
    we have
    \[
    V_{\cc} = \int \Wass_{h}\left(\mu_{Y(0)}^{z}, \mu_{Y(1)}^{z}\right) \diff \mu_Z(z).
    \]
\end{proposition}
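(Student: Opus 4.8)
The statement claims that the covariate-assisted lower bound $V_{\cc} = \min_{\pi \in \Pi_{\cc}} \EE_\pi[h(Y(0),Y(1))]$ decomposes as an integral over $z$ of the unconditional OT values between the conditional marginals $\mu_{Y(0)}^z$ and $\mu_{Y(1)}^z$. So I want to prove $V_{\cc} = \int \Wass_h(\mu_{Y(0)}^z, \mu_{Y(1)}^z)\, d\mu_Z(z)$.

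The strategy is a disintegration / gluing argument, establishing the two inequalities separately.

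For "$\ge$": take any $\pi \in \Pi_{\cc}$. Since $\pi_{Y(0),Z} = \mu_{Y(0),Z}$ and $\pi_{Y(1),Z}=\mu_{Y(1),Z}$, both have the same $Z$-marginal $\mu_Z$. Disintegrate $\pi$ with respect to $Z$, writing $\pi = \mu_Z \otimes \pi^z$ where $\pi^z \in \calP(\calY^2)$ for $\mu_Z$-a.e.\ $z$. The marginal constraints on $\pi$ translate (after disintegrating $\mu_{Y(0),Z}$ and $\mu_{Y(1),Z}$ the same way) into $\pi^z \in \Pi(\mu_{Y(0)}^z, \mu_{Y(1)}^z)$ for $\mu_Z$-a.e.\ $z$. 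Then

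I should establish the main obstacle is measurability of $z \mapsto \Wass_h(\mu_{Y(0)}^z, \mu_{Y(1)}^z)$ and, especially, the existence of a measurable selection of near-optimal (or optimal) couplings $z \mapsto \pi^z$ so that the glued measure $\mu_Z \otimes \pi^z$ is a well-defined element of $\Pi_{\cc}$.

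For the "$\le$" direction one needs, for each $z$, an optimal (or $\varepsilon$-optimal) coupling $\pi^z \in \Pi(\mu_{Y(0)}^z,\mu_{Y(1)}^z)$ attaining (or nearly attaining) $\Wass_h(\mu_{Y(0)}^z, \mu_{Y(1)}^z)$, chosen measurably in $z$; the glued measure $\pi \Let \mu_Z \otimes \pi^z$ then lies in $\Pi_{\cc}$ and satisfies $\EE_\pi[h] = \int \EE_{\pi^z}[h]\, d\mu_Z(z) \le \int \Wass_h(\mu_{Y(0)}^z,\mu_{Y(1)}^z)\, d\mu_Z(z) + \varepsilon$. Letting $\varepsilon \downarrow 0$ gives $V_{\cc} \le \int \Wass_h\, d\mu_Z$.

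Here is the plan in order. First, fix regular conditional distributions $z \mapsto \mu_{Y(0)}^z$, $z \mapsto \mu_{Y(1)}^z$ (these exist since $\calY, \calZ$ are Polish). Second, for the "$\ge$" direction: given $\pi \in \Pi_{\cc}$, disintegrate against $\mu_Z$, identify the fiberwise coupling constraint, and integrate the pointwise lower bound $\EE_{\pi^z}[h] \ge \Wass_h(\mu_{Y(0)}^z,\mu_{Y(1)}^z)$; since $h \ge 0$ all integrals are well-defined in $[0,\infty]$ and Fubini/Tonelli applies. Third, for the "$\le$" direction: invoke a measurable selection theorem for optimal transport plans — the map $(\mu',\nu') \mapsto \text{(an optimal plan)}$ admits a universally measurable selection (e.g.\ via the Kuratowski--Ryll-Nardzewski theorem applied to the set-valued map of $\varepsilon$-optimal plans, using lower semicontinuity of $\pi \mapsto \EE_\pi[h]$ and compactness of $\Pi(\mu',\nu')$ in the weak topology when $\calY$ is, say, compact, or a truncation argument otherwise) — to obtain a measurable family $z \mapsto \pi^z_\varepsilon$; glue and take $\varepsilon \to 0$. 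Fourth, combine the two inequalities.

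The main obstacle is the measurable-selection step in the "$\le$" direction: one must ensure that a jointly measurable choice of (near-)optimal couplings exists so that the glued plan is a genuine probability measure in $\Pi_{\cc}$. This is standard but technical; the cleanest route is to note that $(\mu',\nu') \mapsto \{\pi \in \Pi(\mu',\nu') : \EE_\pi[h] \le \Wass_h(\mu',\nu') + \varepsilon\}$ is a measurable set-valued map with nonempty closed values into $\calP(\calY^2)$ (Polish), hence admits a measurable selector; measurability of $z \mapsto (\mu_{Y(0)}^z, \mu_{Y(1)}^z)$ is automatic from the disintegration, and measurability of $(\mu',\nu') \mapsto \Wass_h(\mu',\nu')$ follows from lower semicontinuity of the objective and weak-compactness of the coupling set (or, when $h$ is unbounded, from an increasing-limit argument over truncations $h \wedge M$). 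A secondary point to handle carefully: if $V_{\cc} = +\infty$ the identity still holds because both sides are then $+\infty$ (the "$\ge$" direction shows the RHS bounds $V_{\cc}$ from above... — more precisely, the two inequalities together force equality in $[0,\infty]$), so no separate case analysis is needed beyond being consistent about working in the extended reals throughout.
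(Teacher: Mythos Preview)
Your overall strategy---prove the two inequalities separately, using disintegration of $\pi \in \Pi_{\cc}$ for ``$\ge$'' and a measurable selection of $\varepsilon$-optimal couplings (then glue and let $\varepsilon\downarrow 0$) for ``$\le$''---is exactly the paper's approach. The ``$\ge$'' direction is fine as written.

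There is, however, a technical gap in your proposed route for the selection step. You want to apply Kuratowski--Ryll-Nardzewski to the multifunction $z \mapsto \{\pi \in \Pi(\mu_{Y(0)}^z,\mu_{Y(1)}^z): \int h\,\diff\pi \le \Wass_h(\mu_{Y(0)}^z,\mu_{Y(1)}^z)+\varepsilon\}$, and for that you need closed values, which in turn requires $\pi \mapsto \int h\,\diff\pi$ to be lower semicontinuous in the weak topology. But the proposition only assumes $h$ is \emph{measurable} and nonnegative, not lower semicontinuous, so this LSC can fail and the $\varepsilon$-sublevel sets need not be closed. Your truncation remark $h\wedge M$ addresses unboundedness but not this measurability-vs-LSC issue.

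The paper sidesteps this by working in the lower-semianalytic/universally-measurable framework of Bertsekas--Shreve (\emph{Stochastic Optimal Control}, \S7.7): since $h$ is Borel, $(z,\gamma)\mapsto \int h\,\diff\gamma$ is Borel and hence lower semianalytic; the constraint set $\{(z,\gamma):\gamma\in\Pi(\mu_{Y(0)}^z,\mu_{Y(1)}^z)\}$ is Borel; Proposition~7.47 there gives that the fiberwise infimum $z\mapsto \Wass_h(\mu_{Y(0)}^z,\mu_{Y(1)}^z)$ is lower semianalytic (hence universally measurable, so the RHS integral is well defined), and Proposition~7.50(b) supplies a universally measurable $\varepsilon$-optimal selector $z\mapsto \gamma^z_\varepsilon$ without any LSC hypothesis on $h$. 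Gluing via Proposition~7.45 then yields the admissible $\pi_\varepsilon\in\Pi_{\cc}$. If you want to stay with K--RN, you would need an additional approximation argument (e.g., monotone approximation of $h$ by bounded LSC functions) to reduce to the LSC case; otherwise, switching to the Bertsekas--Shreve machinery closes the gap cleanly.
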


\subsection{Adapted Wasserstein Distance and Its Topology}

We conclude this section by introducing the topology that is natural for the COT functional and that motivates our finite-sample estimator in Section~\ref{sec:main_result}.

As we will demonstrate in Section~\ref{sec:connection}, a key distinction between OT and COT as functionals lies in their topological properties. While the classical OT functional is continuous with respect to the outcome marginals under the weak topology (metrized by $\Wass_1$), the COT functional $V_{\cc}$ is generally discontinuous with respect to the joint outcome--covariate marginals under the weak topology. This discrepancy arises because COT is inherently a \emph{conditional} transport problem: its geometry is driven by transition kernels rather than full joint distributions.

A natural way to formalize this structure is to restrict the set of admissible couplings to those that respect the conditional structure of the marginals, i.e., to (bi-)causal or adapted couplings; see, e.g., \cite{pflug2012distance, lassalle2018causal, backhoff2020adapted}. From this perspective, the \textit{adapted} Wasserstein distance can be understood as the optimal transport distance obtained when couplings are constrained to match conditional distributions. This interpretation is particularly natural in causal inference settings, where one aims to compare conditional outcome laws given covariates rather than arbitrary joint couplings of $(Y,Z)$ and $(Y',Z')$.

We therefore equip $\calP(\calY \times \calZ)$ with the topology induced by the adapted Wasserstein distance.

\begin{definition}[Adapted Wasserstein distance]\label{defi:adapted_wassI}
    For $\mu, \nu \in \calP(\calY \times \calZ)$, the adapted Wasserstein distance between $\mu$ and $\nu$ is defined as  
    \begin{align*}
        & \Wass_{\bc}(\mu, \nu) \\
        = & \min_{\pi \in \Pi(\mu_{Z}, \nu_Z)} \int \lnorm{z - z'}{2} + \Wass_{1}(\mu_Y^{z}, \nu_Y^{z'})~\diff \pi(z, z').
    \end{align*}
\end{definition}

Compared to the conventional Wasserstein distance, which measures proximity between full joint distributions of $(Y,Z)$ and $(Y',Z')$, the adapted Wasserstein distance quantifies discrepancies at the level of transition kernels. It therefore captures the transportation cost of conditional outcome laws while optimally aligning covariate marginals. As we show in Theorem~\ref{prop:con_adapted_topology}, under mild regularity conditions, $V_{\cc}$ becomes continuous with respect to this stronger, causally structured topology.

\section{Basic Properties of COT Value}\label{sec:connection}

The following assumption enables our analysis.
\begin{assumption}[Compact covariate domain]\label{a:compact_domain}
    Assume that $\calZ \subseteq [0,1]^{d_Z}$, and $\calY\subseteq \RR^{d_Y}$ is compact.
\end{assumption}
\begin{assumption}[Continuous objective]\label{a:cont_obj}
    Assume that $h: \calY\times\calY \rightarrow\RR$ is a continuous function. 
\end{assumption}
Here, we assume that the outcome $Y$ and the covariate $Z$ are supported on a compact set, which is often satisfied in practice—for example, when the outcome includes test scores, earnings, or health indicators and the covariate includes bounded physical measurements like age, height, or weight.

\subsection{Optimal Covariate-Assisted PI Sets}

In this section, we show that if we only have access to i.i.d.~samples drawn from $\mu_{Y(0), Z}$ and $\mu_{Y(1), Z}$, respectively and no additional side information, then the interval $[V_{\cc}, \tilde V_{\cc}]$ defined in \eqref{eq:COT.PI} is the \emph{optimal} PI set for $V^*$ that can be identified. Note that, in the remaining parts of this work, we will use $((Y_i(w), Z_i(w))$ to denote the outcome and covariate values collected from the sample $i$ within the control group ($w=0$) and treatment group ($w=1$). \textit{Note that in this work, $Z_i(w)$ only contains pre-treatment covariates, and the $w$ in $Z_i(w)$ is the group index of the observed pre-treatment covariates.}

\begin{theorem}[Optimal PI bounds]\label{thm:optimal_causalbounds}
    Under Assp.~\ref{a:compact_domain}-\ref{a:cont_obj},  suppose that the sample $((Y_i(w), Z_i(w)), i \in [n])$ i.i.d.~follows $\mu_{Y(w), Z}$ for $w=0,1$. Then, we have (i) $V_{\cc}, \tilde V_{\cc}$ are identifiable from the given sample, (ii) $[V_{\cc}, \tilde V_{\cc}]$ is the interval that exactly contains all possible values of $\EE_{\pi}[h(Y(0), Y(1))]$, where $\pi$ satisfies: $\pi_{Y(0), Z} = \mu_{Y(0), Z}, \pi_{Y(1), Z} = \mu_{Y(1), Z}$.
\end{theorem}

\subsection{Stronger Topology for COT Functional}

As indicated by Proposition~\ref{prop:recursive_form}, the estimation of $V_{\cc}$ requires estimating the conditional distributions $\mu_{Y(0)}^{z}, \mu_{Y(1)}^{z}$. Thus, a direct plug-in estimator does not yield consistent estimates of $V_{\cc}$, a fact also observed in \cite{chemseddine2024conditional}. To further illustrate this issue, we consider an idealized case where the $i$-th covariate in the treated group is identical to that in the control group, i.e., $Z_i(1) = Z_i(0)~\forall i$ (e.g.~a twin study), presenting an example adapted from \cite[Example 9]{chemseddine2024conditional}.
\begin{example}[Inconsistent estimator]\label{exp:counter}
    Let $n_0=n_1$ and $h=|y_0 - y_1|$. Suppose that under $\mu$, $Y(0)$, $Y(1)$, $Z$ are independent, and $\mu_{Y(0), Z} = \mu_{Y(1), Z}$, thus $V_{\cc} = 0$. Let $\hat \mu_{w} = \frac{1}{n_w}\sum_{i=1}^{n_w} \delta_{Y_i(w), Z_i(w)}, w=0,1$. Further, assume that (i) $Z_i(1) = Z_i(0)~\forall i$, and (ii) the distribution $\mu_Z$ has a density. Then,
    \begin{align*}
        & \lim_{n_0, n_1\rightarrow \infty} \min_{\pi \in \hat \Pi_{\cc}}\EE_{\pi}[h(Y(0), Y(1))] \\
        = &\EE_{\mu}[|Y(0) - Y(1)|] > 0 = V_{\cc}\qquad \text{a.s.},
    \end{align*}
    where $\hat \Pi_{\cc} = \left\{ \pi: \pi_{Y(0), Z} = \hat \mu_{0}, \pi_{Y(1), Z} = \hat \mu_{1}\right\}$. This is because we aim for a perfect match in $Z$ while almost surely the values of covariates in the treated group are distinct, i.e., $Z_i(1) \neq Z_j(1)$ for $ i \neq j$. Therefore, $\hat \Pi_{\cc}$ only contains a single coupling: $\frac{1}{n_0}\sum_{i} \delta_{Y_i(0), Y_i(1), Z_i(0)}$. Thus, the objective simplifies to $\frac{1}{n_0}\sum_i |Y_i(1) - Y_i(0)| $, which converges to the population quantity $\mathbb{E}_{\mu}\left[|Y(1) - Y(0)|\right]$.
\end{example}

From a topological perspective, if \( V_{\cc} \) is viewed as a functional of \( \mu_{Y(0), Z} \) and \( \mu_{Y(1), Z} \), Example~\ref{exp:counter} implies that \( V_{\cc} \) is discontinuous under the weak topology. However, under mild assumptions, \( V_{\cc} \) becomes continuous at $\mu$ under the stronger topology induced by the adapted Wasserstein distance (see Definition~\ref{defi:adapted_wassI}). 

\begin{theorem}[Continuity under $\Wass_{\bc}$]\label{prop:con_adapted_topology}
    Let $(\nu_n = \nu_{Y(0), Y(1), Z, n}, n \geq 1)$ be a sequence in $\calP(\calY^2 \times \calZ)$. Suppose $g_w(z) = \mu_{Y(w)}^{z}\ w=0,1$ are continuous under the weak topology and $h$ is a continuous function. Under Assp.~\ref{a:compact_domain}-\ref{a:cont_obj}, if $\lim_{n\rightarrow \infty}\Wass_{\bc}(\nu_{Y(w), Z,n}, \mu_{Y(w), Z}) = 0$ for $w=0,1$, then
    \begin{align*}
        &\lim_{n\rightarrow \infty}\int \Wass_{h}\left(\nu_{Y(0), n}^{z}, \nu_{Y(1), n}^{z}\right) \diff \nu_{Z, n}(z) \\
        = &\int \Wass_{h}\left(\mu_{Y(0)}^{z}, \mu_{Y(1)}^{z}\right) \diff \mu_Z(z) = V_{\cc}.
    \end{align*}
\end{theorem}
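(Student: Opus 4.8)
The plan is to transport the base covariate law $\nu_{Z,n}$ onto $\mu_Z$ and, simultaneously, the outcome conditionals $\nu_{Y(k),n}^z$ onto the corresponding conditionals $\mu_{Y(k)}^{z'}$, using the $Z$-couplings that realize the two adapted Wasserstein distances, and then to read off $V_{\cc}$ via Proposition~\ref{prop:recursive_form}. The one analytic ingredient needed is the Lipschitz stability of the OT value in its marginals: because $h$ is $L_h$-Lipschitz (Assumption~\ref{a:h_func}), gluing an $h$-optimal coupling with two $\Wass_1$-optimal couplings gives
\[
  \bigl|\Wass_h(\mu_0,\mu_1)-\Wass_h(\mu_0',\mu_1')\bigr| \;\le\; L_h\,\Wass_1(\mu_0,\mu_0') + L_h\,\Wass_1(\mu_1,\mu_1')
\]
for all probability measures on $\calY$ with finite first moments. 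The finiteness/integrability issues (first moments of $\nu_{Y(k),n}^z$ for $\nu_{Z,n}$-a.e.\ $z$, and integrability in $z$ of the $\Wass_h$-integrand) reduce, via the triangle inequality along the couplings constructed below, to Assumption~\ref{a:mom_Y}, and I will treat them as routine.

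The steps are: (1) For $k=0,1$, unwind $\lim_n\Wass_{\bc}(\nu_{Y(k),Z,n},\mu_{Y(k),Z})=0$ by choosing $\pi_n^{(k)}\in\Pi(\nu_{Z,n},\mu_Z)$ nearly attaining the minimum in Definition~\ref{defi:adapted_wassI}, so that $\int\|z-z'\|_2\,d\pi_n^{(k)}\to 0$ and $\int\Wass_1(\nu_{Y(k),n}^z,\mu_{Y(k)}^{z'})\,d\pi_n^{(k)}\to 0$. (2) Since $\pi_n^{(0)},\pi_n^{(1)}$ have common first marginal $\nu_{Z,n}$, glue them over $\nu_{Z,n}$ to obtain $\rho_n\in\calP(\calZ^3)$ with coordinates $(z,z_0',z_1')$ whose $(z,z_0')$- and $(z,z_1')$-marginals are $\pi_n^{(0)},\pi_n^{(1)}$ and whose $z_0'$- and $z_1'$-marginals are both $\mu_Z$. (3) Write $\int\Wass_h(\nu_{Y(0),n}^z,\nu_{Y(1),n}^z)\,d\nu_{Z,n}(z)=\int\Wass_h(\nu_{Y(0),n}^z,\nu_{Y(1),n}^z)\,d\rho_n$ and apply the stability inequality: this differs from $\int\Wass_h(\mu_{Y(0)}^{z_0'},\mu_{Y(1)}^{z_1'})\,d\rho_n$ by at most $L_h\sum_k\int\Wass_1(\nu_{Y(k),n}^z,\mu_{Y(k)}^{z'})\,d\pi_n^{(k)}\to0$. (4) Apply the stability inequality again, now with the Lipschitz-kernel bound $\Wass_1(\mu_{Y(1)}^{z_1'},\mu_{Y(1)}^{z_0'})\le L_Z\|z_0'-z_1'\|_2$ (Assumption~\ref{a:Lip_kernel}), to replace $z_1'$ by $z_0'$ at the cost of $L_hL_Z\int\|z_0'-z_1'\|_2\,d\rho_n\le L_hL_Z\bigl(\int\|z-z_0'\|_2\,d\pi_n^{(0)}+\int\|z-z_1'\|_2\,d\pi_n^{(1)}\bigr)\to0$. (5) Since the $z_0'$-marginal of $\rho_n$ is exactly $\mu_Z$, $\int\Wass_h(\mu_{Y(0)}^{z_0'},\mu_{Y(1)}^{z_0'})\,d\rho_n=\int\Wass_h(\mu_{Y(0)}^{z},\mu_{Y(1)}^{z})\,d\mu_Z(z)=V_{\cc}$ by Proposition~\ref{prop:recursive_form}. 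Combining (3)--(5) yields the claim, since the bound in (5) is independent of $n$.

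The main obstacle is steps (2)--(4): the hypothesis controls each conditional family $\{\nu_{Y(k),n}^z\}$ only separately, so the two covariate-transport plans $\pi_n^{(0)},\pi_n^{(1)}$ are a priori unrelated and must be glued over their shared source $\nu_{Z,n}$; one then has to show that the resulting cross-term $\int\|z_0'-z_1'\|_2\,d\rho_n$ — which appears in neither adapted distance — vanishes, and this is exactly where the $\|z-z'\|_2$ component of the adapted cost is indispensable (weak convergence of the conditionals alone would not suffice, cf.\ Example~\ref{exp:counter}) and where the Lipschitz-kernel hypothesis converts covariate proximity into proximity of outcome conditionals. A minor technical point in step (4) is that Assumption~\ref{a:Lip_kernel} is stated only $\mu_Z$-a.e., so one should fix an $L_Z$-Lipschitz representative of $z\mapsto\mu_{Y(k)}^z$ on a $\mu_Z$-full-measure set and note that $z_0',z_1'$ both lie in it $\rho_n$-a.s.\ (each has law $\mu_Z$). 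The $\Wass_h$-stability lemma and the gluing constructions are standard and routine.
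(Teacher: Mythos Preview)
Your proof is correct and uses essentially the same ingredients as the paper's: the Lipschitz stability of $\Wass_h$ in its marginals (the paper's Lemma~\ref{lemm:LipWassh}), the Lipschitz-kernel assumption to convert $\|z-z'\|_2$ into control on $\Wass_1(\mu_{Y(k)}^{z},\mu_{Y(k)}^{z'})$, and the $Z$-couplings realizing each adapted Wasserstein distance. The only organizational difference is that you glue the two $Z$-couplings into a single triple coupling $\rho_n$ on $\calZ^3$ and telescope there, whereas the paper keeps the two couplings separate and telescopes via the hybrid intermediate $\int\Wass_h(\mu_{Y(0)}^z,\nu_{Y(1),n}^z)\,d\nu_{Z,n}(z)$; both routes yield the same quantitative bound $L_h(L_Z\vee 1)\bigl(\Wass_{\bc}(\mu_{Y(0),Z},\nu_{Y(0),Z,n})+\Wass_{\bc}(\mu_{Y(1),Z},\nu_{Y(1),Z,n})\bigr)$.
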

In the next section, we construct an adapted empirical distribution that converges under $\Wass_{\bc}$, which serves as the foundation for our adapted COT value estimator.

\section{Estimator of COT Value}\label{sec:main_result}
\subsection{Bernoulli Design}

\subsubsection{Adapted COT Value Estimator}\label{sec:estimator}
In this section, we introduce our adapted COT value estimator in the setting where the treatment is independent of both the observed outcome and covariates. This setting is related to a scenario referred to as the Bernoulli design (and also the randomized controlled trial) in causal inference, where $W_i$ are i.i.d.~Bernoulli with a constant probability parameter. In this setting, the sample from the control group, $((Y_i(0), Z_i(0)), i \in [n_0])$, can be viewed as an i.i.d.~sample drawn from $\mu_{Y(0), Z}$, with an analogous interpretation for the treated group.

\begin{assumption}[Bernoulli design]\label{a:sample_dist_random}
Assume that $((Y_i(w), Z_i(w)), i \in [n_w])$ are drawn i.i.d.~from $\mu_{Y(w), Z}$ for $w=0,1$.
\end{assumption}

Inspired by \cite{backhoff2022estimating}, we introduce the adapted empirical distribution for constructing our COT value estimator. First, we introduce the map below for discretization.
\begin{definition}[Cell-center projection]\label{defi:adapt_emp}
    Let \( r > 0\). For \( n \geq 1 \), partition the cube \( [0,1]^{d_Z} \) into a disjoint union of \( n^{r d_Z} \) smaller cubes, each with edge length \( n^{-r} \). Define \( \varphi_r^n: [0,1]^{d_Z} \to [0,1]^{d_Z} \) as the mapping that assigns each point to the center of the small cube it resides in. 
\end{definition}

\begin{definition}[Adapted empirical distribution]\label{defi:dist_est}
The adapted empirical distributions are defined as
\begin{align}\label{eq:mmu}
    \mmu_{Y(w), Z(w), n_w} &= \frac{1}{n_w}\sum_{i=1}^{n_w} \delta_{Y_i(w), \varphi_r^{n_w}(Z_i(w))}
\end{align}
for $w=0,1$, where $\varphi_r^n$ is defined in Definition \ref{defi:adapt_emp}.
\end{definition}
Compared to \cite{backhoff2022estimating}, our construction of $\mmu$ has a key distinction: we discretize only the covariate \(Z\), not the outcome \(Y\), since discretization is required only for the variables being conditioned on.

If $z \mapsto \mu_{Y(w)}^{z}$ is continuous under the weak topology, the adapted empirical distributions \eqref{eq:mmu} provide consistent approximations to the conditional distributions $\mu_{Y(w)}^z, w=0,1$, thereby mitigating the issue that each sample of $Z(0)$ or $Z(1)$ is associated with only a single observed outcome $Y$ when $\mu_Z$ has a density. Further, \cite{backhoff2022estimating} suggests that $(\mmu_{Y(w), Z(w), n_w}, w = 0,1)$ converges to their weak limits under $\Wass_{\bc}$, which is aligned with the conditions introduced in Theorem~\ref{prop:con_adapted_topology}. 

Note that larger discretization cells contain more observations, and the resulting local distribution estimator is associated with smaller variance but larger bias. We can choose the cell size to balance the bias-variance trade-off, where the optimal cell size depends on factors including the sample size. As the sample size may differ between treatment and control groups, for example, when units are assigned to treatment with relatively low probability, we discretize each group separately to allow for distinct optimal cell sizes.
In addition, in typical causal inference datasets, the covariate data for the treatment and control groups are not perfectly matched, also leading to a potential mismatch in the supports of the adapted empirical distributions in~\eqref{eq:mmu}. To address this issue, we match the discretized covariates \( Z(0) \) and \( Z(1) \) that may differ in value. This matching is formalized as a coupling with marginals $\mmu_{Z(0), n_0}$ and $\mmu_{Z(1), n_1}$.
Based on this coupling, we define our adapted COT value estimator as follows.
\begin{definition}[Adapted COT value estimator]\label{defi:cot_est}
    Given a coupling $\hat \pi_{n_0, n_1} \in \Pi(\mmu_{Z(0), n_0}, \mmu_{Z(1), n_1})$, the COT value estimator is defined as
    \begin{align*}
       \hat V_{n_0, n_1} &\Let \int \Wass_{h}(\mmu^{z_0}_{Y(0), n_0}, \mmu^{z_1}_{Y(1), n_1}) \diff \hat \pi_{n_0,n_1}(z_0, z_1).
    \end{align*}
\end{definition}

The COT value estimator achieves consistency as the sample size grows, provided that the coupling $\hat \pi_{n_0,n_1}$ converges to a joint distribution that exactly matches the limiting marginals.
\begin{theorem}[Consistency]\label{thm:consist}
    Under Assp.~\ref{a:compact_domain}-\ref{a:sample_dist_random}, if as $n_0,n_1 \rightarrow \infty$, $\hat \pi_{n_0,n_1}$ weakly converges to $(\mathrm{id}, \mathrm{id})_{\#}\mu_Z$ almost surely, then $\hat V_{n_0,n_1}$ converges to $V_{\cc}$ almost surely.
\end{theorem}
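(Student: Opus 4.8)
\textbf{Proof plan for Theorem~\ref{thm:consist}.}

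The plan is to combine the continuity result of Theorem~\ref{prop:con_adapted_topology} with the convergence of the adapted empirical distributions under $\Wass_{\bc}$ (the latter supplied by \cite{backhoff2022estimating}), then handle the additional complication that our estimator is built on a coupling $\hat\pi_{n,m}$ that matches the two \emph{separate} discretizations rather than evaluating at a common covariate value. The first step is to invoke \cite{backhoff2022estimating} to obtain $\Wass_{\bc}(\mmu_{Y(k),Z(k),n},\mu_{Y(k),Z})\to 0$ almost surely for $k=0,1$; here I would check that the hypotheses of that reference are met under our Assumptions~\ref{a:compact_domain}--\ref{a:mom_Y} (compact covariate support, finite third moment of $Y$), and that the chosen discretization rate $n^{-r}$ in Definition~\ref{defi:adapt_emp} falls in the admissible range so that the adapted empirical measure is consistent in $\Wass_{\bc}$.

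The second step is to reduce the estimator $\hat V_{n,m}$ in \eqref{eq:Vmnp} to an object to which Theorem~\ref{prop:con_adapted_topology} directly applies. Define a joint measure $\nu_{n,m} \in \calP(\calY^2 \times \calZ)$ by pushing forward: given $\hat\pi_{n,m}(z_0,z_1)$ and the conditional kernels $\mmu^{z_0}_{Y(0),n}$, $\mmu^{z_1}_{Y(1),m}$, let the $Z$-marginal be (say) $\mmu_{Z(0),n}$ and glue the two $Y$-kernels along $\hat\pi_{n,m}$. Then $\int \Wass_h(\mmu^{z_0}_{Y(0),n},\mmu^{z_1}_{Y(1),m})\,\diff\hat\pi_{n,m} = \int \Wass_h(\nu^{z}_{Y(0),n},\nu^{z}_{Y(1),n})\,\diff\nu_{Z,n}(z)$, so $\hat V_{n,m}$ has exactly the form appearing in Theorem~\ref{prop:con_adapted_topology}. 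It then remains to verify $\Wass_{\bc}(\nu_{Y(k),Z,n},\mu_{Y(k),Z})\to 0$ a.s. for $k=0,1$. For $k=0$ this is immediate since $\nu_{Y(0),Z,n} = \mmu_{Y(0),Z(0),n}$ by construction. For $k=1$, the marginal $\nu_{Y(1),Z,n}$ is the pushforward of $\mmu_{Y(1),Z(1),m}$ under the map that relabels the covariate $z_1$ to its $\hat\pi_{n,m}$-partner $z_0$; using the definition of $\Wass_{\bc}$ and the coupling $\hat\pi_{n,m}$ itself as a candidate transport plan, $\Wass_{\bc}(\nu_{Y(1),Z,n},\mmu_{Y(1),Z(1),m}) \le \int \|z_0 - z_1\|_2\,\diff\hat\pi_{n,m}(z_0,z_1)$, and this integral vanishes a.s. because $\hat\pi_{n,m}$ weakly converges to $(\mathrm{id},\mathrm{id})_{\#}\mu_Z$ (which is supported on the diagonal) and the covariate domain is compact, so the bounded continuous function $\|z_0-z_1\|_2$ integrates to $0$ in the limit. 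Combining with the triangle inequality for $\Wass_{\bc}$ and step one gives $\Wass_{\bc}(\nu_{Y(1),Z,n},\mu_{Y(1),Z})\to 0$ a.s.

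The third and final step is to apply Theorem~\ref{prop:con_adapted_topology} along the (random) sequence $\nu_{n,m}$ on the almost-sure event where both $\Wass_{\bc}$-convergences hold, concluding $\hat V_{n,m}\to V_{\cc}$ a.s. One subtlety worth spelling out: Theorem~\ref{prop:con_adapted_topology} is stated for a deterministic sequence, so I would note that its conclusion holds pathwise for any sequence satisfying the convergence hypothesis, hence applies on the full-probability event; and I should make sure $m=m(n)\to\infty$ as $n\to\infty$ so that both empirical measures are consistent. I expect the main obstacle to be the bookkeeping in step two — precisely defining the glued measure $\nu_{n,m}$ and verifying that the relabeled $Y(1)$-marginal is $\Wass_{\bc}$-close to $\mmu_{Y(1),Z(1),m}$ — since the kernels of $\nu_{n,m}$ are indexed by $z_0$ while the data kernels are indexed by $z_1$, and one must confirm that the matching error contributes only through $\|z_0-z_1\|_2$, controlled by the weak convergence of $\hat\pi_{n,m}$ to the diagonal.
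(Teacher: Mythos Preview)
Your plan has a genuine gap in Step~2. Both the claimed equality
\[
\int \Wass_h\bigl(\mmu^{z_0}_{Y(0),n},\mmu^{z_1}_{Y(1),m}\bigr)\,\diff\hat\pi_{n,m}(z_0,z_1)
\;=\;
\int \Wass_h\bigl(\nu^{z}_{Y(0),n},\nu^{z}_{Y(1),n}\bigr)\,\diff\nu_{Z,n}(z)
\]
and the subsequent bound $\Wass_{\bc}(\nu_{Y(1),Z,n},\mmu_{Y(1),Z(1),m}) \le \int \|z_0 - z_1\|_2\,\diff\hat\pi_{n,m}$ tacitly assume that $\hat\pi_{n,m}$ is a Monge coupling: each $z_0$ must have a unique ``$\hat\pi_{n,m}$-partner $z_0$''. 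In general this fails, since $\mmu_{Z(0),n}$ and $\mmu_{Z(1),m}$ are discrete with different supports and weights. When $\hat\pi_{n,m}(\,\cdot\mid z_0)$ is not a point mass, the only sensible kernel for the glued measure is the mixture $\nu^{z_0}_{Y(1)} = \int \mmu^{z_1}_{Y(1),m}\,\diff\hat\pi_{n,m}(z_1\mid z_0)$, and then $\Wass_h(\mmu^{z_0}_{Y(0),n},\nu^{z_0}_{Y(1)}) \le \int \Wass_h(\mmu^{z_0}_{Y(0),n},\mmu^{z_1}_{Y(1),m})\,\diff\hat\pi_{n,m}(z_1\mid z_0)$ by convexity, with no matching lower bound. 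Likewise, under the candidate plan $\hat\pi_{n,m}$ the kernel contribution to $\Wass_{\bc}$ is $\Wass_1(\nu^{z_0}_{Y(1)},\mmu^{z_1}_{Y(1),m})$, which need not vanish: the empirical kernels $z_1\mapsto\mmu^{z_1}_{Y(1),m}$ have no Lipschitz regularity, so concentration of $\hat\pi_{n,m}$ near the diagonal does not make this term small.

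The paper's proof sidesteps this obstruction by \emph{first} replacing the empirical kernels with the true ones. Using Lemma~\ref{lemm:LipWassh} it splits
\[
|\hat V_{n,m}-V_{\cc}|
\le L_h\sum_{k=0,1}\int \Wass_1\bigl(\mmu^{z_k}_{Y(k)},\mu^{z_k}_{Y(k)}\bigr)\,\diff\mmu_{Z(k)}(z_k)
\;+\;\Bigl|\int \Wass_h\bigl(\mu^{z_0}_{Y(0)},\mu^{z_1}_{Y(1)}\bigr)\,\diff\hat\pi_{n,m}-V_{\cc}\Bigr|.
\]
The first piece is controlled by $\Wass_{\bc}(\mmu_{Y(k),Z(k)},\mu_{Y(k),Z})\to 0$ (your Step~1, via Lemma~\ref{lemm:conv_emp_dist}). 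For the second piece the integrand now involves only the \emph{true} kernels, so by Assp.~\ref{a:Lip_kernel} the map $(z_0,z_1)\mapsto\Wass_h(\mu^{z_0}_{Y(0)},\mu^{z_1}_{Y(1)})$ is Lipschitz, hence bounded continuous on the compact $\calZ^2$, and the weak convergence of $\hat\pi_{n,m}$ to $(\mathrm{id},\mathrm{id})_{\#}\mu_Z$ finishes the argument directly---no gluing construction and no Monge assumption needed. If you wish to repair your route, the fix is essentially the same move: bound $\Wass_{\bc}(\nu_{Y(1),Z,n},\mu_{Y(1),Z})$ directly (not via $\mmu_{Y(1),Z(1),m}$) and insert the true kernel $\mu^{z_1}_{Y(1)}$ so that Assp.~\ref{a:Lip_kernel} can absorb the $\|z_0-z_1\|$ mismatch; but at that point you have reproduced the paper's decomposition.
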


\subsubsection{Convergence Rate Analysis}
Next, we provide the finite-sample convergence rate for the adapted COT value estimator. We introduce the following assumptions.

\begin{assumption}[Smooth objective]\label{a:h_func}
    Assume that $h$ is $L_h$-Lipschitz continuous: for $\forall y_0, y_0', y_1, y_1' \in \calY$,
    \[
        \left| h(y_0, y_1) - h(y_0', y_1') \right| \leq L_h\left(\lnorm{y_0 - y_0'}{2} + \lnorm{y_1 - y_1'}{2}\right).
    \]
\end{assumption}

\begin{assumption}[Lipschitz kernel]\label{a:Lip_kernel}
    Assume that there is a constant $L_{Z} > 0$ such that for $\textup{$\mu_{Z}$-a.e.}~z_0, z_1 \in \calZ$,
    \begin{align*}
        &\Wass_1\left(\mu_{Y(w)}^{z_0}, \mu_{Y(w)}^{z_1}\right) \leq L_{Z} \lnorm{z_0 - z_1}{2} \ \  w = 0,1.
    \end{align*}
\end{assumption}
Assp.~\ref{a:h_func}-\ref{a:Lip_kernel} are aligned with standard regularity conditions in nonparametric estimation theory (e.g., \cite[Theorem 5.65]{wasserman2006all}).

\begin{assumption}[Coupling gap]\label{a:coupling_gap}
    Assume that $\forall n_0, n_1$ $:\EE[\int \lnorm{z_0 - z_1}{2} \diff \hat \pi_{n_0,n_1}(z_0, z_1)] \leq C_{\hat \pi}(n_0\wedge n_1)^{-r}$, where $C_{\hat \pi}$ is a constant independent of $n_0,n_1$.
\end{assumption}
Assp.~\ref{a:coupling_gap} enables quantifying the discrepancy between the estimated coupling \( \hat{\pi}_{n_0,n_1} \) and the identity coupling \( (\mathrm{id}, \mathrm{id})_{\#}\mu_Z \), and to ensure that this error aligns with the granularity of the discretization, as controlled by the parameter \( r \) in the map \( \varphi_r^n \).

\begin{theorem}[Finite-sample complexity]\label{thm:finite_sample}
Under Assp.~\ref{a:compact_domain}-\ref{a:coupling_gap} with $r = \frac{1}{d_Z + 2 \vee d_Y}$, we have 
    \[
    \EE\left[|\hat V_{n_0,n_1} - V_{\cc}|\right] \leq 
    \bar C \gamma_{d_Y, d_Z}(n_0 \wedge n_1),
    \]
    where
    \[
        \gamma_{d_Y, d_Z}(N) \Let 
    \begin{cases}
         N^{-\frac{1}{d_Z + 2 \vee d_Y}} \log(N) & \textup{if}\  d_Y \neq 2,\\
        N^{-\frac{1}{d_Z + 2 \vee d_Y}} & \textup{if}\  d_Y = 2.
    \end{cases}
    \]
    Here, $\bar C$ is a constant that depends on $d_Z, d_Y$, $L_h$ (Assp.~\ref{a:h_func}), and $L_Z$ (Assp.~\ref{a:Lip_kernel}).
\end{theorem}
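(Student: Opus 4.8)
The plan is to decompose the error $|\hat V_{n,m} - V_{\cc}|$ into three pieces via the triangle inequality, corresponding to (i) the discretization/conditional-distribution estimation error for the control group, (ii) the analogous error for the treated group, and (iii) the coupling gap $\int \|z_0 - z_1\|_2 \diff \hat\pi_{n,m}$. Concretely, introduce the intermediate quantity $\int \Wass_h(\mmu_{Y(0),n}^{z}, \mu_{Y(1)}^{z})\diff\mmu_{Z(0),n}(z)$ (and a symmetric one), compare $\hat V_{n,m}$ to it using the $L_h$-Lipschitz property of $h$ together with the $1$-Lipschitz dependence of $\Wass_h$ on each of its arguments in $\Wass_1$, and bound the cross term using Assumption~\ref{a:Lip_kernel} to control $\Wass_1(\mu_{Y(1)}^{z_0},\mu_{Y(1)}^{z_1})$ by $L_Z\|z_0-z_1\|_2$ along the coupling $\hat\pi_{n,m}$; this is where Assumption~\ref{a:coupling_gap} enters, contributing $O((n\wedge m)^{-r})$.

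The heart of the argument is bounding, in expectation, the adapted empirical error
$\EE\big[\int \Wass_h(\mmu_{Y(0),n}^{z}, \mu_{Y(0)}^{z})\,\diff\mmu_{Z(0),n}(z)\big]$ (and its treated analogue). First I would rewrite this as a sum over the discretization cells: on each cell $Q$ of edge length $n^{-r}$ containing covariates $z$, the local empirical measure $\mmu_{Y(0),n}^{\cdot}$ restricted to $Q$ is the empirical distribution of the $Y_i(0)$'s whose covariates landed in $Q$, and $\mu_{Y(0)}^{\varphi_r^n(z)}$ is its population target at the cell center. I would split this into a \textbf{bias} term — the deviation between $\mu_{Y(0)}^{z}$ and the cell-averaged conditional $\mu_{Y(0)}^{\varphi_r^n(z)}$-type mixture, controlled by Assumption~\ref{a:Lip_kernel}, giving $O(n^{-r}\sqrt{d_Z})$ — and a \textbf{variance/statistical} term measuring the $\Wass_1$-distance between the within-cell empirical measure on $N_Q \approx n \cdot n^{-r d_Z} = n^{1 - r d_Z}$ points and the corresponding conditional mixture. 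For the latter I would invoke the standard rate for empirical $\Wass_1$ convergence in $\RR^{d_Y}$ under a third-moment condition (Assumption~\ref{a:mom_Y}) — namely $\EE[\Wass_1(\hat\mu_N,\mu)] \lesssim N^{-1/(2\vee d_Y)}$ with a $\log N$ factor when $d_Y = 2$ (this is the source of the $\gamma_{d_Y,d_Z}$ dichotomy) — applied conditionally on the cell counts, and then average over cells using Jensen / concavity in $N_Q$ together with $\sum_Q N_Q = n$. Summing the per-cell contributions and optimizing gives a total of order $n^{-r} + (n^{1-rd_Z})^{-1/(2\vee d_Y)}$ times logarithmic factors; the choice $r = \tfrac{1}{d_Z + 2\vee d_Y}$ balances these two and yields the stated rate $\gamma_{d_Y,d_Z}(n)$, and replacing $n$ by $n\wedge m$ handles both groups simultaneously.

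The main obstacle I anticipate is the rigorous handling of the within-cell empirical-measure analysis when the number of points $N_Q$ in a cell is itself random (multinomial with parameter $\mu_Z(Q)$) and possibly small or zero for cells of low covariate mass. One must take care that cells with $N_Q = 0$ contribute nothing, that the conditional-on-$N_Q$ $\Wass_1$ bound is applied with the correct (conditional) moment bound uniformly in the cell via Assumption~\ref{a:mom_Y}, and that the aggregation over cells — weighting each cell's error by $N_Q/n$ — combines correctly with the concave rate function $N \mapsto N^{-1/(2\vee d_Y)}$; here a convexity/Jensen argument of the form $\sum_Q \tfrac{N_Q}{n} N_Q^{-1/(2\vee d_Y)} = \tfrac1n \sum_Q N_Q^{1 - 1/(2\vee d_Y)}$, bounded using $\sum_Q N_Q = n$ and the number of nonempty cells being at most $n^{r d_Z}$, is the delicate combinatorial step. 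A secondary technical point is justifying that $\Wass_h$ inherits a Lipschitz modulus of $L_h$ with respect to $\Wass_1$ on each marginal — this follows from the Kantorovich–Rubinstein duality bound $|\Wass_h(\mu,\nu) - \Wass_h(\mu',\nu)| \le L_h \Wass_1(\mu,\mu')$ — but it should be stated cleanly as a lemma before the main decomposition.
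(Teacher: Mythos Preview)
Your approach is essentially the paper's: decompose via the $L_h$-Lipschitz continuity of $\Wass_h$ in each of its arguments (the paper states this as Lemma~\ref{lemm:LipWassh}), isolate the coupling gap via Assumption~\ref{a:Lip_kernel} and Assumption~\ref{a:coupling_gap}, and control the per-group conditional-distribution error by a bias--variance split over cells, invoking the Fournier--Guillin rate within each cell and aggregating via concavity of $N\mapsto N\,R_{d_Y}(N)$ (this is exactly Lemma~\ref{lemm:rate_emp}, and the Jensen/concavity handling of random cell counts that you anticipate is precisely how the paper proceeds).

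There is one piece your three-term decomposition omits. After your steps (i)--(iii) you land at $\int \Wass_h(\mu_{Y(0)}^{z}, \mu_{Y(1)}^{z})\,\diff\mmu_{Z(0),n}(z)$, which still integrates the \emph{population} COT integrand against the \emph{adapted empirical} covariate marginal rather than against $\mu_Z$. The paper isolates this residual as a separate ``Term~D'' and controls it in two steps: since $z\mapsto \Wass_h(\mu_{Y(0)}^z,\mu_{Y(1)}^z)$ is $2L_hL_Z$-Lipschitz, passing from $\mmu_{Z(0),n}$ to $\hat\mu_{Z(0),n}$ costs $O(n^{-r})$ (Lemma~\ref{lemm:gap_mmu}), and the remaining difference $\int \Wass_h(\mu_{Y(0)}^z,\mu_{Y(1)}^z)\,\diff\hat\mu_{Z(0),n}(z) - V_{\cc}$ is an i.i.d.\ average minus its mean, bounded at rate $O(n^{-1/2})$ by Markov's inequality once the second moment of the integrand is finite (Lemma~\ref{lemm:var_wassh}). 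This term is dominated by the final rate, so the gap is easy to close, but it is not covered by your pieces (i)--(iii) and must be added explicitly. (Minor point: in your ``heart of the argument'' sentence you write $\Wass_h(\mmu_{Y(0),n}^{z}, \mu_{Y(0)}^{z})$ where you presumably mean $\Wass_1$.)
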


One natural choice of $\hat \pi_{n_0, n_1}$ that satisfies Assp.~\ref{a:coupling_gap} is an optimal coupling of $\mmu_{Z(0), n_0}$ and $\mmu_{Z(1), n_1}$:
\begin{equation}\label{eq:pi_choice_ot}
    \hat \pi_{n_0,n_1} \in \argmin_{\pi \in \Pi(\mmu_{Z(0), n_0}, \mmu_{Z(1), n_1})} \EE_{\pi}[\|Z(0) - Z(1)\|_2].
\end{equation}

An additional advantage of the coupling \( \hat{\pi}_{n_0,n_1} \)~\eqref{eq:pi_choice_ot} is that it enjoys a robustness guarantee against potential data perturbations to the covariate distributions.

\begin{theorem}[Robustness]\label{rmk:robust}
    Define $\hat \pi_{n_0,n_1}$ by \eqref{eq:pi_choice_ot}. Suppose that the sample of $Z(w)$ is i.i.d.~drawn from $\mu^{\prime}_{w}$, satisfying \( \Wass_1(\mu_Z, \mu^{\prime}_{w}) \leq \epsilon \) for $w=0,1$. Then, 
    \[
        \EE\left[|\hat V_{n_0,n_1} - V_{\cc}|\right] \leq 2L_h L_Z \epsilon + \bar C \gamma_{d_Y, d_Z}(n_0\wedge n_1).
    \]
\end{theorem}

\subsection{Covariate-Dependent Treatment}\label{sec:propensity_score}

Unlike Bernoulli design where treatment assignment is independent of covariates by design, observational studies typically exhibit covariate-dependent treatment assignment. This results in non-constant propensity scores
$
e(z) = \PP(W_i = 1 \mid Z_i = z),
$
reflecting heterogeneity in treatment likelihood across covariate profiles. In this section, we discuss how the adapted COT value estimator can be tailored to this setting.

Under the unconfoundedness assumption (Assp.~\ref{assu:unconfoundedness}), the conditional distributions of potential outcomes given covariates remain identifiable, i.e., consistent with \( \mu_{Y(w)}^z \) for \( w = 0,1 \). However, since the treatment assignment probability depends on covariates, the marginal distribution of covariates may differ between the treatment group and the control group, inducing a covariate shift. 
To formalize this setting with covariate shift, we consider the following super-population model of $Y(0), Y(1), Z, W$ following the joint distribution \( \mu = \mu_{Y(0), Y(1), Z, W} \).
\begin{assumption}[Covariate shift]\label{a:covariate_shift}
    Assume that $((Z_i(w), Y_i(w)), i \in [n_w])$ are drawn i.i.d.~from $\mu_{Z | W=w} \otimes \mu_{Y(w)}^Z$ for $w=0,1$.
\end{assumption}

Since our estimand of interest $V^*$ relies on the marginal covariate distribution, irrespective of treatment assignment, it is essential to correct for the distributional shift induced by non-constant propensity scores. Specifically, we reweight the samples to align with the marginal covariate distribution. 
By Proposition~\ref{prop:ID.marginal}, when the propensity score $e(z)$ is known, the appropriate population-level weighting is given by
\begin{align*}
    \xi_{w,i} = \frac{\mu_{Z}(Z_i(w))}{\mu_{Z \mid W = w}(Z_i(w))} \propto \frac{1}{1 - w -e(Z_i(w))}, ~ W_i = w.
\end{align*}
On finite samples, when the propensity score \( e(z) \) is unknown but estimated by \( \hat{e}(z) \), we adopt a group-wise self-normalized weight: for $w=0,1$, $\hat \xi_{w,G}$ is defined as
\begin{align}\label{eq:IPW}
    \frac{(1 - w-\hat e(\varphi_r^{n_w}(G)))^{-1}}{\sum_{G} (1 - w -\hat e(\varphi_r^{n_w}(G)))^{-1} |\{i: Z_i(w) \in G\}|} 
\end{align}
to ensure the weights sum to one. Here, $G \in \Phi_r^n$ is a cell corresponding to projection map $\varphi_r^n$, where $\Phi_r^n \Let \left\{ (\varphi_r^n)^{-1}(\{x\}): x \in \varphi_r^n([0,1]^{d_Z}) \right\}$, and $\varphi_r^n(G)$ is the cell center of $G\in \Phi_r^n$. Note that the weight \eqref{eq:IPW} is different from the common self-normalized weights that are based on the sample value of $Z$. This is used to adapt the construction of the adapted COT value estimator with discretization to the setting in Assp.~\ref{a:covariate_shift}.
Finally, our reweighted adapted empirical distributions are defined by: for $w=0,1$,
\[
\begin{aligned}
    \mmu_{Y(w), Z(w), n_w} &= \sum_{G \in \Phi_r^{n_w}} \sum_{i: Z_i(w) \in G} \hat \xi_{w,G} \delta_{ Y_i(w), \varphi_r^{n_w}(Z_i(w))}.  
\end{aligned}
\] 
We then still employ $\mmu_{Y(w), Z(w), n_w}, w=0,1$ along with $\hat \pi_{n_0,n_1}$ defined in Eq.~\eqref{eq:pi_choice_ot},
to construct the adapted COT value estimator as formulated in Definition~\ref{defi:cot_est}.

Specifically, we implement a two-fold cross-fitting procedure to ensure the independence between the estimated propensity score $\hat e$ and the adapted empirical distributions $\mmu$. We randomly split the data into two folds of approximately equal size. For each fold, we construct $\hat e$ using data from the other fold, and then construct $\mmu_{Y(w), Z(w), n_w}$ on the current fold with the estimated propensity score plugged in. This yields two estimates of the lower bound, and we further average them to form our final estimator $\hat V_{n_0,n_1}$.

We introduce the assumptions on the estimation error of \( \hat{e}(\cdot) \) to enable the convergence analysis of \( \hat{V}_{n_0,n_1} \).
\begin{assumption}[Lipschitz $\hat e$]\label{a:lip_propensity_score}
    Assume that there is a constant $L_e$, such that $|\hat e(z) - \hat e(z')| \leq L_e\lnorm{z- z'}{2}$ for any $z,z' \in \calZ^2$.
\end{assumption}
\begin{assumption}[Bounds of $\hat e$]\label{a:bound_e}
    Assume that there is a constant $\eta \in (0,\frac{1}{2})$, such that $\eta \leq \hat e(z) \leq 1 - \eta$ for any $z \in \calZ$.
\end{assumption}
\begin{assumption}[Average error of propensity score]\label{a:est_propensity_score} Assume that for $n_0,n_1 \geq 1$,
    \begin{align*}
        \EE\left[\frac{1}{n_w}\sum_{i=1}^{n_w} \left|n_w \hat \xi_{w,i}  -  \frac{\diff \mu}{\diff \mu_{|W=w}} (Z_i(w)) \right|\right] &\leq C_{\textup{w}} n_w^{-r},
    \end{align*}
    where $C_{\textup{w}}$ is a constant independent of $n_0, n_1$ and 
    \begin{align*}
        \hat \xi_{w,i} \Let \frac{(1 - w -\hat e(Z_i(w)))^{-1}}{\sum_{l=1}^n (1- w -\hat e(Z_l(w)))^{-1}}, w=0,1.
    \end{align*}
\end{assumption}

Assp.~\ref{a:est_propensity_score} characterizes the average estimation error of the estimated propensity score function \( \hat{e} \). Notably, Assp.~\ref{a:est_propensity_score} uses the sample-based self-normalized weight to be compatible with the common setting in the literature. Also, we impose a mild convergence rate requirement of order \( O(n^{-r}) \), where \( r \) will be chosen to be less than \( 1/d_Z \). The rate \( O(n^{-1/d_Z}) \) aligns with typical nonparametric convergence rates for nuisance parameter estimation. Assp.~\ref{a:lip_propensity_score} and~\ref{a:bound_e} impose boundedness and smoothness conditions on \( \hat{e}(\cdot) \). Particularly, the smoothness condition is necessary because we evaluate \( \hat{e} \) at the centers of discretized cells, as discussed in the following.

\begin{theorem}[Finite-sample complexity with covariate shift]\label{thm:finite_sampleII}
    Under Assp.~\ref{assu:overlap}-\ref{a:cont_obj}, 
    \ref{a:h_func}-\ref{a:est_propensity_score} with $r = \frac{1}{d_Z + 2 \vee d_Y}$, then
    \[
        \EE\left[|\hat V_{n_0,n_1} - V_{\cc}|\right] \leq C^{\dagger} \gamma_{d_Y, d_Z}(n_0\wedge n_1),
    \]
    where $C^{\dagger}$ is a constant that depends on $d_Z, d_Y$, 
    $\delta$ (Assp.~\ref{assu:overlap}), $L_h$ (Assp.~\ref{a:h_func}), $L_Z$ (Assp.~\ref{a:Lip_kernel}), 
    $L_e$ (Assp.~\ref{a:lip_propensity_score}), $\eta$ (Assp.~\ref{a:bound_e}) and $C_{\textup{w}}$ (Assp.~\ref{a:est_propensity_score}).
\end{theorem}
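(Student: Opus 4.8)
The plan is to prove Theorem~\ref{thm:finite_sampleII} by mirroring the argument behind Theorem~\ref{thm:finite_sample} and absorbing two new error sources: the covariate shift of Assp.~\ref{a:covariate_shift}, corrected by the inverse-propensity reweighting, and the replacement of the true propensity score $e$ by the cross-fitted estimate $\hat e$ evaluated at cell centers. First I would condition on the fold used to fit $\hat e$; by the two-fold cross-fitting construction $\hat e$ is then independent of the data forming $\mmu$, so it may be treated as a fixed function obeying Assp.~\ref{a:lip_propensity_score}--\ref{a:bound_e}. Averaging the two fold-wise estimators only affects constants, and each fold carries $\Theta(n)$, $\Theta(m)$ samples, so I suppress this. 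As in the proof of Theorem~\ref{thm:finite_sample}, the $L_h$-Lipschitzness of $h$ (Assp.~\ref{a:h_func}) makes $\Wass_h$ Lipschitz in each argument with respect to $\Wass_1$, and combined with the quantitative form of Theorem~\ref{prop:con_adapted_topology} established there this gives
\[
\big|\hat V_{n,m}-V_{\cc}\big|\ \lesssim\ \int\|z_0-z_1\|_2\,\diff\hat\pi_{n,m}(z_0,z_1)\ +\ \sum_{k=0,1}\Wass_{\bc}\!\big(\mmu_{Y(k),Z(k)},\,\mu_{Y(k),Z}\big),
\]
with constant depending on $L_h,L_Z$. For the OT coupling \eqref{eq:pi_choice_ot} the first term is at most $\Wass_1(\mmu_{Z(0),n},\mu_Z)+\Wass_1(\mu_Z,\mmu_{Z(1),m})=O((m\wedge n)^{-r})$ by the marginal bound below, so it remains to show $\EE[\Wass_{\bc}(\mmu_{Y(k),Z(k)},\mu_{Y(k),Z})]=O(\gamma_{d_Y,d_Z}(m\wedge n))$ for $k=0,1$.

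\textbf{Oracle weights.} Introduce the oracle measure $\tilde\mu_{Y(0),Z(0),n}$ obtained from $\mmu_{Y(0),Z(0),n}$ by replacing $\hat e$ with $e$. Bounding $\Wass_{\bc}(\tilde\mu_{Y(0),Z(0),n},\mu_{Y(0),Z})$ follows the decomposition in Theorem~\ref{thm:finite_sample}: (a) the covariate marginal is a self-normalized importance-sampling average of i.i.d.\ draws from $\mu_{Z\mid W=0}$ with weights bounded above and below by Assp.~\ref{assu:overlap}, hence $\Wass_1$-close to $\mu_Z$ at rate $O(n^{-1/d_Z})$, with the cell-center projection contributing $O(n^{-r})$; (b) within a cell $G$ of edge length $n^{-r}$, overlap and the bounded density of $\mu_{Z\mid W=0}$ give $\EE|\{i:Z_i(0)\in G\}|\asymp n^{1-rd_Z}$, and—because the reweighting multiplies the local outcome empirical measure by bounded weights—it concentrates around its reweighted mean at the same rate as an unweighted empirical measure of $n^{1-rd_Z}$ points, i.e.\ $O(n^{-r})$ with the $\log$ factor inherited from Theorem~\ref{thm:finite_sample} when $d_Y\neq 2$; (c) by unconfoundedness (Assp.~\ref{assu:unconfoundedness}) the conditional law $\mu_{Y(0)}^z$ does not depend on $W$, so identifying the cell mixture of $\mu_{Y(0)}^z$, $z\in G$, with $\mu_{Y(0)}^{\varphi_r^n(G)}$ costs $O(L_Z n^{-r})$ by Assp.~\ref{a:Lip_kernel}. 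Summing over the $O(n^{rd_Z})$ cells and invoking Assp.~\ref{a:mom_Y} for the outcome tails yields $\EE[\Wass_{\bc}(\tilde\mu_{Y(0),Z(0),n},\mu_{Y(0),Z})]=O(\gamma_{d_Y,d_Z}(n))$, and symmetrically for $k=1$.

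\textbf{Propensity error.} Next I would bound $\Wass_{\bc}(\mmu_{Y(0),Z(0),n},\tilde\mu_{Y(0),Z(0),n})$ through the perturbation of the cell weights. By Assp.~\ref{a:bound_e} the maps $1/(1-\hat e)$ and $1/(1-e)$ are bounded, and by Assp.~\ref{a:lip_propensity_score} they are $L_e/\eta^2$-Lipschitz, so evaluating at a cell center rather than at $Z_i(0)$ shifts each weight by $O(n^{-r})$; Assp.~\ref{a:est_propensity_score} (together with a law of large numbers for the self-normalization denominator) then gives $\EE\sum_i|\hat w_i-w_i|=O(n^{-r})$, where $w_i$ is the oracle self-normalized weight. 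Since the two measures have the same atoms $Y_i(0)$ and the same discretized covariates, a coupling that only moves mass between the common atoms shows $\Wass_{\bc}(\mmu_{Y(0),Z(0),n},\tilde\mu_{Y(0),Z(0),n})$ is at most a bounded multiple of $\sum_i|\hat w_i-w_i|$ times the diameter of the occupied outcome cells, which—after a routine truncation at a slowly growing level and a tail estimate via Assp.~\ref{a:mom_Y}—is $O(n^{-r})$ in expectation. Collecting the three steps over the two folds and using $n^{-r}\le\gamma_{d_Y,d_Z}(m\wedge n)$ gives the claimed bound.

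\textbf{Main obstacle.} The crux is step (b) above: obtaining the within-cell concentration of the \emph{weighted} local outcome empirical measure at the sharp rate $n^{-r}$—with random, possibly small cell counts and the correct $\log$ behavior—while simultaneously absorbing the cell-center and kernel-Lipschitz biases; this is what pins down $r=1/(d_Z+2\vee d_Y)$ and the precise form of $\gamma_{d_Y,d_Z}$, and everything else enters only as an $O(n^{-r})$ perturbation. A secondary but essential point is that the cross-fitting decoupling is what licenses treating $\hat e$ as a fixed Lipschitz, bounded function in all the concentration steps; without it the weights and the outcome empirical measures would be dependent and the above bounds would not go through directly.
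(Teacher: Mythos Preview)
Your high-level decomposition is sound and matches the paper's route: the bound via the Lipschitzness of $\Wass_h$ and the quantitative form of Theorem~\ref{prop:con_adapted_topology}, leading to a coupling term plus a term controlled by $\Wass_{\bc}$ (equivalently, the paper's ``Term~A'' $\int \Wass_1(\mmu_Y^z,\mu_Y^z)\,\diff\mmu_Z$), is correct, and cross-fitting is used exactly as you describe.

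However, you are missing the structural observation that drives the paper's argument and dissolves your ``main obstacle''. Because the construction evaluates $\hat e$ at the \emph{cell center} $\varphi_r^n(G)$ rather than at each $Z_i(0)$, every sample in a given cell $G$ receives the \emph{same} weight. Consequently the within-cell conditional is the \emph{unweighted} empirical measure
\[
\mmu_{Y(0),n}^G \;=\; \frac{1}{|\{i:Z_i(0)\in G\}|}\sum_{i:\,Z_i(0)\in G}\delta_{Y_i(0)},
\]
independent of $\hat e$ altogether. There is no ``weighted local outcome empirical measure'' to concentrate: Fournier--Guillin applies directly, exactly as in Lemma~\ref{lemm:rate_emp}, after conditioning on the cell counts (Lemma~\ref{lem:conditional_independ_y}). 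The propensity weights enter \emph{only} through the outer factor $\mmu_{Z(0)}(G)$, and since Assp.~\ref{a:bound_e} gives $\mmu_{Z(0)}(G)\le \eta^{-1}\,|\{i:Z_i(0)\in G\}|/n$, one reduces immediately to the same concave-Jensen step as in Theorem~\ref{thm:finite_sample}.

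Your oracle-intermediate step (bounding $\Wass_{\bc}(\mmu,\tilde\mu)$) is therefore unnecessary and is also where your proposal becomes genuinely delicate: under any bicausal coupling, mass that moves between different cells incurs a cost of order $\Wass_1(\mmu_Y^G,\tilde\mu_Y^{G'})$, which on an unbounded outcome space forces the truncation-and-tail argument you sketch. The paper avoids this entirely by separating roles: the $Y$-conditional error (Term~A) does not see $\hat e$ at all, while the propensity error and Assp.~\ref{a:est_propensity_score} enter only through the $Z$-marginal on the \emph{compact} set $\calZ$ (Lemmas~\ref{lem:linear_gap}--\ref{lem:discrete_gapII}, which control $\Wass_1(\mmu_{Z(0)},\mu_Z)$ and hence Terms~C and~D). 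Once you recognize the unweighted-conditional structure, your plan collapses to precisely the paper's proof and no truncation is needed.
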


\section{Simulation}\label{sec:simulation}
\subsection{Selection of Cell Size}\label{sec:select_cellsize}
In Definition~\ref{defi:adapt_emp}, we design the discretization cells with edge length growing in the order of  \( n^{-r} \), resulting in \( O(n^{r d_Z}) \) small cells, in order to balance the bias-variance tradeoff with respect to the sample size $n$. In practice, the constant in $O(n^{r d_Z})$ may also affect the estimation accuracy. Specifically, in this section we set the number of cells to the nearest integer of $c n^{r d_Z}$.

We conduct a sensitivity analysis of our method with respect to $c$ under the quadratic location model (see Section~\ref{sec:sim_setting} (b)) with $n_0 = n_1 = 300$. As shown in Table~\ref{tab:sensitivity}, we compute the average relative estimation error (and their standard error), i.e., $\EE[|\hat V_{n_0, n_1} - V_{\cc}|] / V_{\cc}$ for $5$ choices of $c$, where $\EE$ is approximated through $500$ Monte Carlo simulation runs. Compared to the baseline $c = 1$, we can see that when $c$ is relatively small, increasing $c$ improves the estimation accuracy while its performance tends to stabilize as $c$ grows larger. 

\begin{center}
\vspace{-0.12in}
\captionof{table}{Sensitivity to $c$ (baseline $c=1$).}\label{tab:sensitivity}
\begin{tabular}{ccc}
\toprule
$c$ & Relative error & $\Delta$ vs. baseline \\
\midrule
0.6  & 0.21 (5.4e-3) & $+38.1 \% $ \\ 
0.8  & 0.17 (5.2e-3)  & $+13.4 \% $ \\ 
\textbf{1.0} & \textbf{0.15 (4.9e-3)}  & \textbf{0.0} \\ 
1.2  & 0.14 (4.8e-3) & $-6.4 \% $ \\ 
1.4  & 0.14 (4.9e-3) & $-7.8 \%$ \\ 
\bottomrule
\end{tabular}
\end{center}
In practice, since the true value of $V_{\cc}$ is unknown, we suggest the following approach to select a $c$ that achieves a relatively higher estimation accuracy: (i) given a dataset with size $N$, generate $B=50$ bootstrap samples, each with size $N$, (ii) compute the average $\hat V_{n_0, n_1}$ value of the bootstrap samples for each candidate of $c$, and plot a curve on the average values, (iii) pick the ``elbow'' point of the curve (the point with largest distance to the line connecting the start and end point of the curve), and choose the corresponding $c$ value. An illustration is shown in Figure~\ref{fig:elbow}. 
\begin{center}
\includegraphics[width=0.9\linewidth]{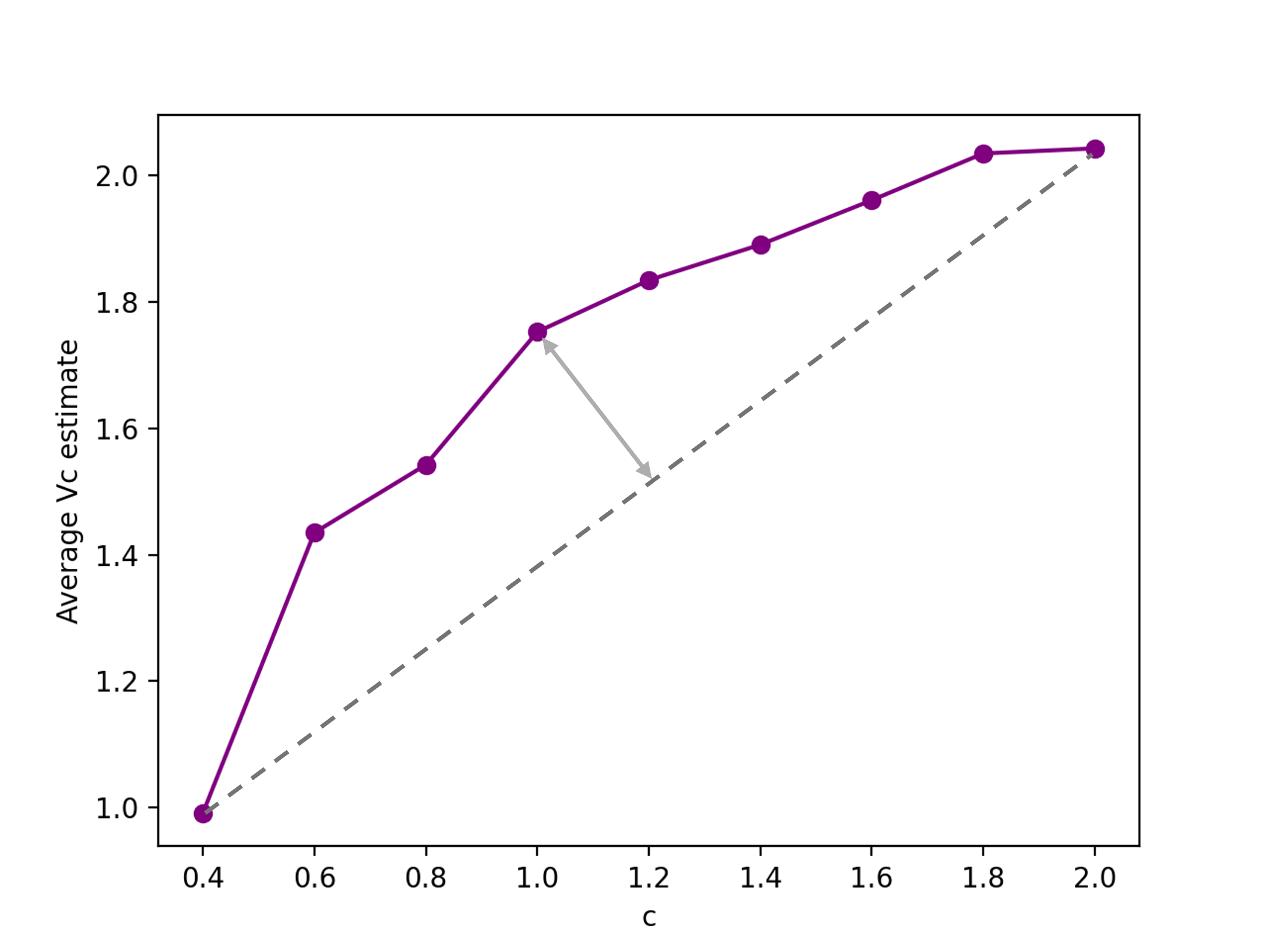}
\captionof{figure}{Selection of the elbow point.}\label{fig:elbow}
\end{center}

\begin{figure*}[!t]
  \centering
  \begin{subfigure}{0.32\textwidth}
    \centering
    \includegraphics[width=\linewidth]{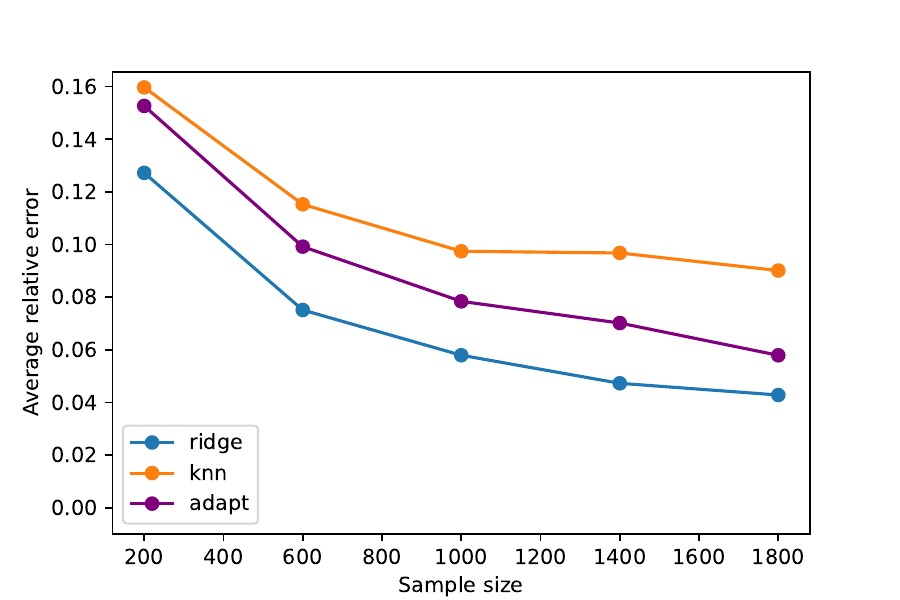}
    \subcaption{Linear location model\\(Bernoulli design)}\label{fig:a}
  \end{subfigure}\hfill
  \begin{subfigure}{0.32\textwidth}
    \centering
    \includegraphics[width=\linewidth]{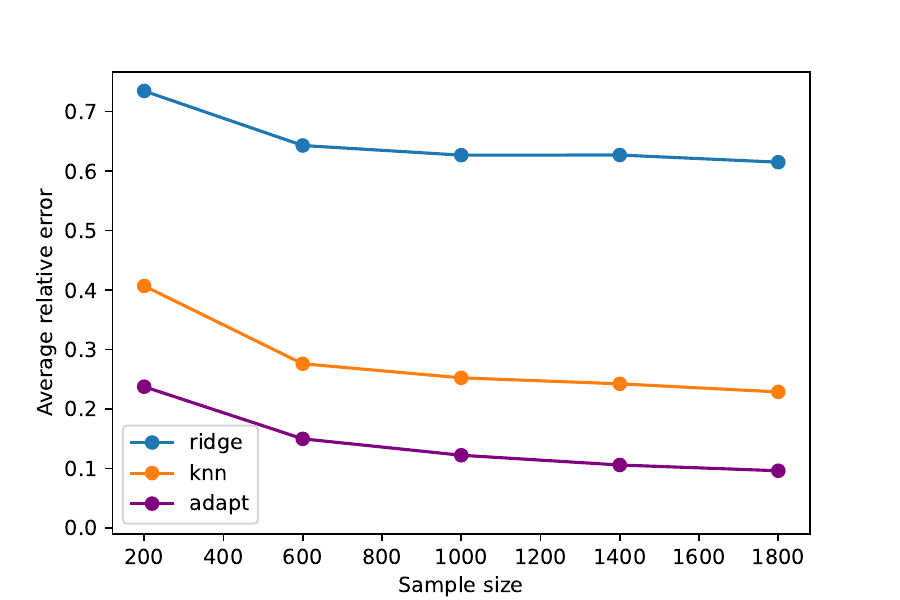}
    \subcaption{Quadratic location model\\(Bernoulli design)}\label{fig:b}
  \end{subfigure}\hfill
  \begin{subfigure}{0.32\textwidth}
    \centering
    \includegraphics[width=\linewidth]{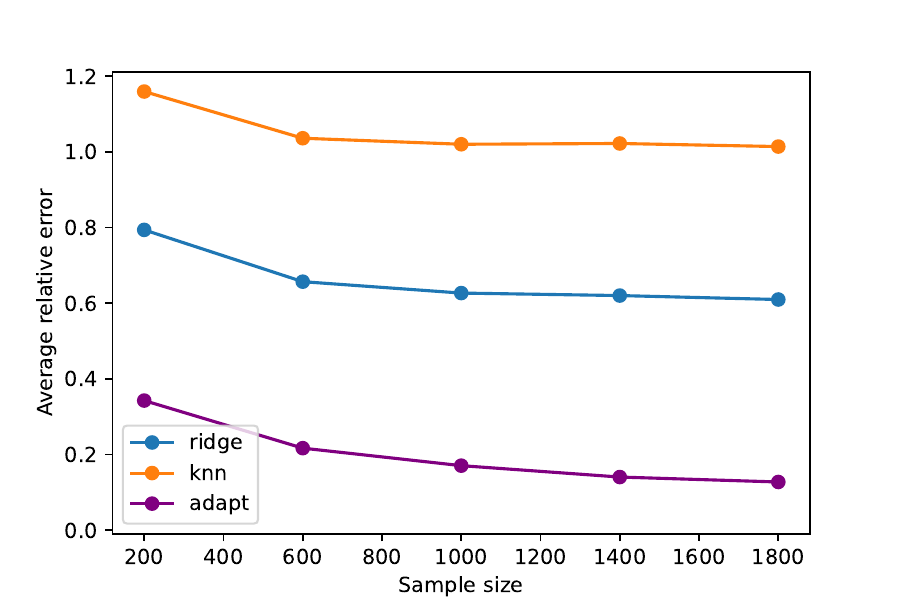}
    \subcaption{Scale model\\(Bernoulli design)}\label{fig:c}
  \end{subfigure}

  
  \begin{subfigure}{0.32\textwidth}
    \centering
    \includegraphics[width=\linewidth]{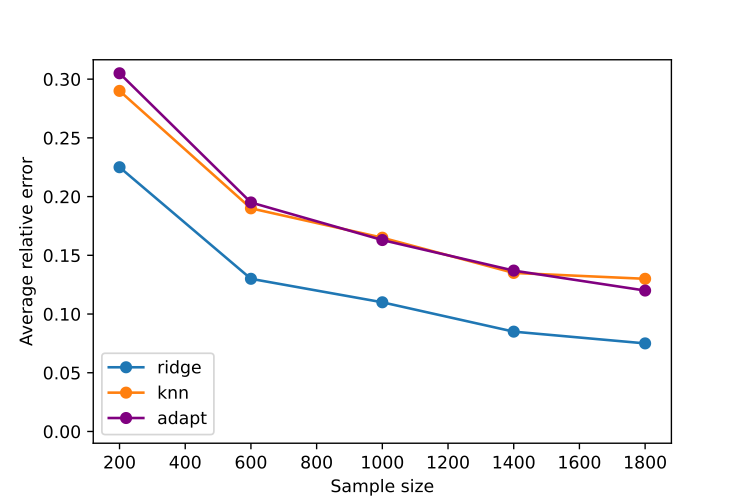}
    \subcaption{Linear location model\\(Covariate-dependent)}\label{fig:a2}
  \end{subfigure}\hfill
  \begin{subfigure}{0.32\textwidth}
    \centering
    \includegraphics[width=\linewidth]{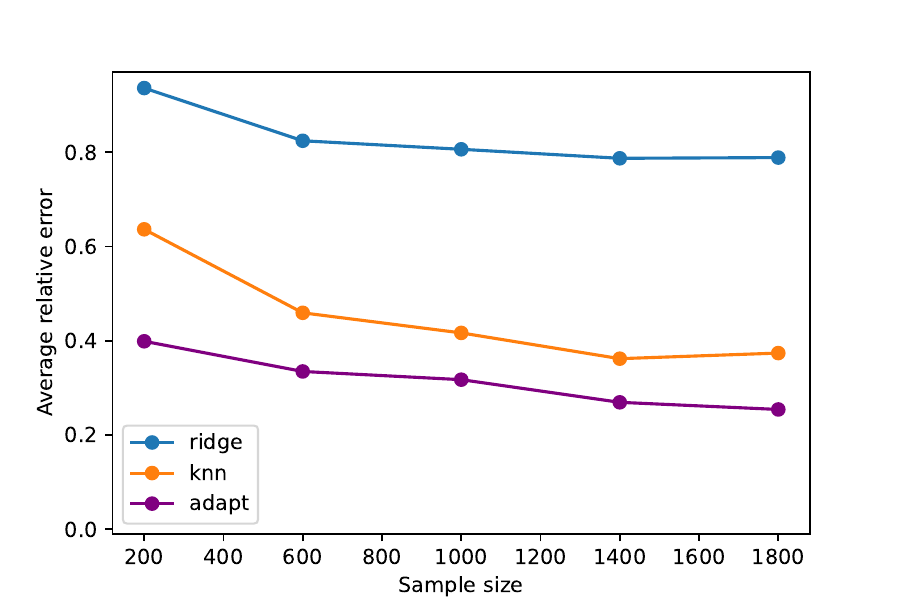}
    \subcaption{Quadratic location model\\(Covariate-dependent)}\label{fig:b2}
  \end{subfigure}\hfill
  \begin{subfigure}{0.32\textwidth}
    \centering
    \includegraphics[width=\linewidth]{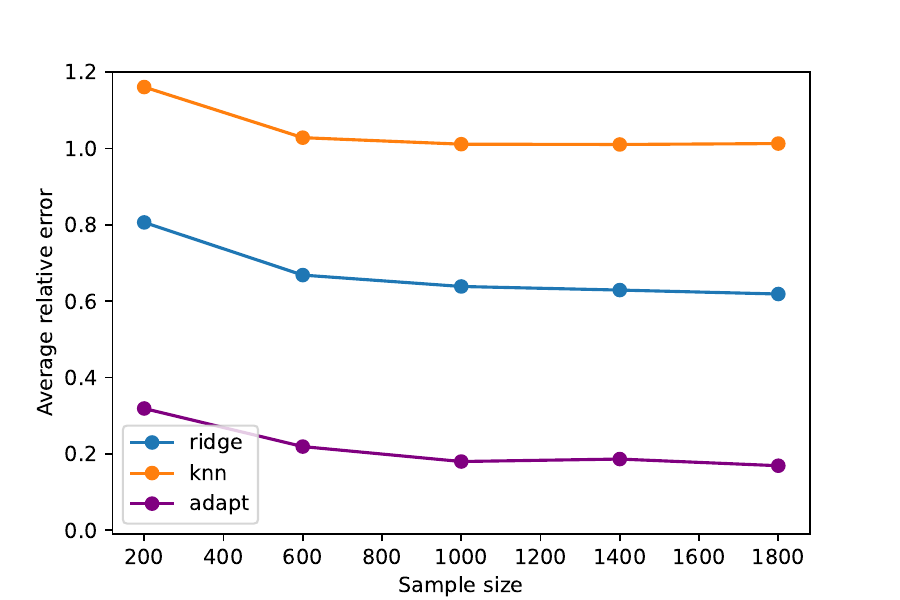}
    \subcaption{Scale model\\(Covariate-dependent)}\label{fig:c2}
  \end{subfigure}
  \caption{Estimation accuracy comparison between the adapted COT estimator and \texttt{DualBounds}. In the legend,  ``adapt'' refers to the adapted COT estimator, ``ridge'' refers to the ridge regression–based \texttt{DualBounds}, and ``knn'' refers to the KNN–based \texttt{DualBounds}. The average relative error is computed over 500 Monte Carlo repetitions. For the uncertainty, we compute the standard error of the mean (SEM), i.e. $\textup{SD} / \sqrt{500}$, where $\textup{SD}$ is the standard deviation. The numerical values of SEM satisfy: (a): $< 0.005$; (b)(c): $< 0.02$; (d)(e)(f): $<0.018$. }
  \label{fig:three-wide2}
\end{figure*}
\begin{table*}[!t]
\centering
\caption{Average relative error (SEM) over 500 Monte Carlo repetitions on the STAR-based semi-synthetic data.}
\label{tab:STAR}
\begin{tabular}{c|cccc}
\toprule
Method ($N = n_0 + n_1 = 2n_0$) & $N=400$ & $N=800$ & $N=1200$ & $N=1600$ \\
\midrule
ridge regression–based \texttt{DualBounds} & $4.0\text{e}{-2}$ (1.3e-3) & $3.2\text{e}{-2}$ (1.0e-3) & $2.9\text{e}{-2}$ (0.9e-3) & $2.7\text{e}{-2}$ (0.8e-3) \\
KNN–based \texttt{DualBounds}   & $5.1\text{e}{-2}$ (1.6e-3) & $4.5\text{e}{-2}$ (1.3e-3) & $4.5\text{e}{-2}$ (1.1e-3) & $4.4\text{e}{-2}$ (1.0e-3) \\
\textbf{Our method} & $\mathbf{3.6e\text{-}2 
 ~(1.2e\text{-}3)}$  & $\mathbf{2.9e\text{-}2 
 ~(1.0e\text{-}3)}$  & $\mathbf{2.6e\text{-}2 
 ~(0.8e\text{-}3)}$  & $\mathbf{2.5e\text{-}2 
 ~(0.8e\text{-}3)}$  \\
\bottomrule
\end{tabular}
\end{table*}

\subsection{Simulation Results}\label{sec:sim_setting}
\textbf{Synthetic data}. To evaluate the effectiveness of the adapted COT value estimator in simulation\footnote{The code and additional experimental results are available at 
\href{https://github.com/siruilin1998/causalPIviaCOT}
{\texttt{github.com/siruilin1998/causalPIviaCOT}}.}, we introduce the following synthetic data-generating mechanism. For $w=0,1$, $Y(w) = F_{m}(f_w(Z), \varepsilon_w)$, where $Z, \varepsilon_w, Y(w)$ are independent with each other. The model $F_m$ includes: (i) location model $F_{1}(u,v) = u + v$, and (ii) scale model $F_2(u,v) = (u_j v_j, j \in [d_Y])$. More concretely, we consider $d_Y = d_Z = 1$, $Z, \varepsilon_w \sim N(0,1)$, and three models:
\begin{itemize}
    \item [(a)] Linear location model ($m=1$) with $f_0(z) = -0.6 z, f_1(z) = 1.6z$.
    \item [(b)] Quadratic location model ($m=1$) with $f_0(z) = -0.2 z^2, f_1(z) = 0.6 z ^2$.
    \item [(c)] Scale model ($m=2$) with $f_0(z) = 0.5 z - 0.35, f_1(z)=1.1z + 0.35$.
\end{itemize}  
For these Gaussian models, the $V_{\cc}$'s population value has a closed form without referring to an optimization problem (see appendix for a proof), and is computable through a simple Monte Carlo algorithm with parametric convergence rate. Therefore, these models enable the evaluation of the estimation accuracy. 

We compare the estimation accuracy of our adapted COT value estimator with the method in \cite{ji2023model}, implemented in their Python package \texttt{DualBounds}\footnote{\url{https://dualbounds.readthedocs.io}}. Their approach relies on estimating nuisance functions associated with the covariate--outcome model. In our comparison, we evaluate both the ridge regression-based and \(k\)-nearest neighbor (KNN)-based \texttt{DualBounds} methods. Our implementation is based on the Python Optimal Transport (\texttt{POT}\footnote{\url{https://pythonot.github.io/}}) library \cite{flamary2021pot}. 

In Figure~\ref{fig:three-wide2}, we present the comparison in the setting of: (i) Bernoulli design ((a)(b)(c)) and (ii) Covariate-dependent treatment ((d)(e)(f)) with $e(z) = (1 + e^{-\frac{3z}{2}})^{-1}\ \forall z \in \RR$, which we assume to be fully known for all methods under evaluation. The figure shows that: (i) For the linear location model, the ridge-based \texttt{DualBounds} method performs best due to its use of a linear nuisance estimator that aligns with the true model structure. Nonetheless, the adapted COT value estimator achieves comparable accuracy despite not relying on model-specific assumptions (see (a)). (ii) For the quadratic location model, the adapted COT value estimator outperforms both versions of \texttt{DualBounds} and the KNN-based outperforms the ridge-based, as the latter cannot capture the nonlinearity in the true model. (iii) For the scale model, the adapted COT estimator also has the best performance.  

\textbf{Real data}. We also consider the real-world dataset from the Student Achievement and Retention (STAR) Demonstration Project \cite{angrist2009incentives}, where the academic performance (outcome) is measured by the GPA recorded in the first academic year and the baseline GPA serves as the covariate. To make the COT value accessible at the population level, we evaluate the methods using hypothetical data generated from a model fitted to the real data. Table~\ref{tab:STAR} presents the accuracy on estimating the PI bounds of the correlation between two potential outcomes, and our adapted COT value estimator consistently outperforms the others.

\section{Conclusion and Discussion}
\paragraph{Conclusion.} In this paper, we study the optimal covariate-assisted partial identification sets for causal estimands by solving COT problems.
Our finite-sample COT estimator avoids nuisance function estimation, does not require well-specified models, and enjoys a provable statistical convergence rate. Furthermore, our adapted COT value estimator answers the question proposed in the discussion of \cite{lin2025tightening} on how to combine the adapted Wasserstein distance into the COT value estimation. Specifically, our adapted estimator is their $V_{\text{causal}}(\eta)$ when $\eta$ approached infinity.

\paragraph{Triangular transport maps.}
We next discuss the connection between our COT framework and triangular transport maps. When the transport cost is quadratic and the conditional outcome distributions admit densities, the optimal coupling can be represented by a triangular transport map \cite{carlier2010knothe,carlier2016vector,hosseini2023conditional}. Such maps arise from a sequential alignment of conditional distributions and coincide, in the one-dimensional outcome case, with the classical Knothe--Rosenblatt rearrangement \cite{rosenblatt1952remarks,knothe1957contributions}. 

Triangular transport has been used in several causal modeling and inference frameworks \cite{charpentier2023optimal,de2024transport,balakrishnan2023conservative}, where it is often interpreted as a structural mechanism for generating counterfactual outcomes. Our perspective is complementary. Rather than modeling unit-level counterfactuals via an estimated transport map, we use the induced causal coupling structure to define and estimate a distributional functional. In particular, our estimator targets the COT value directly, without requiring explicit construction of the underlying triangular map.

Nevertheless, we conjecture that our discretization approach (Section~\ref{sec:estimator}) could be adapted to recover a triangular transport map in a manner analogous to plug-in constructions in the unconditional setting; see, for example, the one-nearest-neighbor estimator in \cite[Section~4.2]{manole2024plugin}. In that case, the estimation error of the induced triangular transport gap would be expected to match the order of the estimation error for the COT value itself.

\paragraph{Multiple treatment levels.} We discuss here the possibilities of an extension to settings with multiple treatment levels. Suppose the treatment variable takes more than two levels. A natural extension of our framework is to consider multi-marginal optimal transport, which constructs a joint coupling across all conditional outcome distributions given the covariates. This provides a geometric framework for jointly aligning multiple potential outcome distributions. The unconditional multi-treatment setting has been studied in \cite{gao2024bridging}.

At the same time, the use of transport maps in counterfactual modeling has important limitations. Recent work \cite{dance2025counterfactual,de2025good} shows that optimal transport maps generally cannot serve as structural models for counterfactual assignments when more than two treatment levels are present. We emphasize that our methodology does not interpret OT maps as structural counterfactual mechanisms. Instead, OT and COT are employed as distributional comparison tools: they measure discrepancies between conditional outcome laws and construct PI intervals via geometric constraints on admissible couplings. Consequently, conditional OT remains meaningful in multi-treatment settings for constructing PI intervals and bounding functionals of potential outcomes.

\clearpage
\subsubsection*{Acknowledgements}
The material in this paper is based upon work supported by the Air Force Office of Scientific Research under award number FA9550-20-1-0397. Additional support is gratefully acknowledged from NSF 2118199, 2229012, 2312204, 2403007, and ONR 13983111.

\bibliographystyle{plain}
\bibliography{ref}

@article{knothe1957contributions,
  title={Contributions to the theory of convex bodies.},
  author={Knothe, Herbert},
  journal={Michigan Mathematical Journal},
  volume={4},
  number={1},
  pages={39--52},
  year={1957},
  publisher={University of Michigan, Department of Mathematics}
}

@article{rosenblatt1952remarks,
  title={Remarks on a multivariate transformation},
  author={Rosenblatt, Murray},
  journal={The annals of mathematical statistics},
  volume={23},
  number={3},
  pages={470--472},
  year={1952},
  publisher={JSTOR}
}

@article{carlier2016vector,
  title={Vector quantile regression: an optimal transport approach},
  author={Carlier, Guillaume and Chernozhukov, Victor and Galichon, Alfred},
  journal={The Annals of Statistics},
  pages={1165--1192},
  year={2016},
  publisher={JSTOR}
}

@article{de2025good,
  title={What is a good matching of probability measures? A counterfactual lens on transport maps},
  author={De Lara, Lucas and Ganassali, Luca},
  journal={arXiv preprint arXiv:2509.16027},
  year={2025}
}

@article{dance2025counterfactual,
  title={Counterfactual cocycles: A framework for robust and coherent counterfactual transports},
  author={Dance, Hugh and Bloem-Reddy, Benjamin},
  journal={arXiv preprint arXiv:2405.13844},
  year={2025}
}

@article{pflug2012distance,
  title={A distance for multistage stochastic optimization models},
  author={Pflug, Georg Ch and Pichler, Alois},
  journal={SIAM Journal on Optimization},
  volume={22},
  number={1},
  pages={1--23},
  year={2012},
  publisher={SIAM}
}

@article{backhoff2020adapted,
  title={Adapted Wasserstein distances and stability in mathematical finance},
  author={Backhoff-Veraguas, Julio and Bartl, Daniel and Beiglb{\"o}ck, Mathias and Eder, Manu},
  journal={Finance and Stochastics},
  volume={24},
  number={3},
  pages={601--632},
  year={2020},
  publisher={Springer}
}

@article{lassalle2018causal,
  title={Causal transport plans and their Monge--Kantorovich problems},
  author={Lassalle, R{\'e}mi},
  journal={Stochastic Analysis and Applications},
  volume={36},
  number={3},
  pages={452--484},
  year={2018},
  publisher={Taylor \& Francis}
}

@article{Johnson2024,
    doi = {10.21105/joss.05976},
    url = {https://doi.org/10.21105/joss.05976},
    year = {2024},
    publisher = {The Open Journal},
    volume = {9},
    number = {93},
    pages = {5976},
    author = {Reid A. Johnson},
    title = {quantile-forest: A Python Package for Quantile Regression Forests},
    journal = {Journal of Open Source Software}
}

@article{kawakami2025moments,
  title={Moments of Causal Effects},
  author={Kawakami, Yuta and Tian, Jin},
  journal={arXiv preprint arXiv:2505.04971},
  year={2025}
}

@article{fan2025partial,
  title={Partial Identification in Moment Models with Incomplete Data via Optimal Transport},
  author={Fan, Yanqin and Pass, Brendan and Shi, Xuetao},
  journal={arXiv preprint arXiv:2503.16098},
  year={2025}
}

@article{angrist2009incentives,
  title={Incentives and services for college achievement: Evidence from a randomized trial},
  author={Angrist, Joshua and Lang, Daniel and Oreopoulos, Philip},
  journal={American Economic Journal: Applied Economics},
  volume={1},
  number={1},
  pages={136--163},
  year={2009},
  publisher={American Economic Association}
}

@article{aronow2014sharp,
  title={Sharp bounds on the variance in randomized experiments},
  author={Aronow, Peter M and Green, Donald P and Lee, Donald KK},
  volume={42},
  number={3},
  pages={850-871},
  year={2014},	
  journal = {Annals of Statistics},
  publisher = {Institute of Mathematical Statistics}
}

@article{ji2023model,
  title={Model-agnostic covariate-assisted inference on partially identified causal effects},
  author={Ji, Wenlong and Lei, Lihua and Spector, Asher},
  journal={arXiv preprint},
  year={2023},
  archivePrefix = {arXiv},
  eprint = {2310.08115},
}

@book{villani2009optimal,
  title={Optimal {T}ransport: {O}ld and {N}ew},
  author={Villani, C{\'e}dric and others},
  volume={338},
  year={2009},
  publisher={Springer}
}

@article{hosseini2023conditional,
  title={Conditional optimal transport on function spaces},
  author={Hosseini, Bamdad and Hsu, Alexander W and Taghvaei, Amirhossein},
  journal={SIAM/ASA Journal on Uncertainty Quantification},
  volume={13},
  number={1},
  pages={304--338},
  year={2025},
  publisher={SIAM}
}

@article{manole2024plugin,
  title={Plugin estimation of smooth optimal transport maps},
  author={Manole, Tudor and Balakrishnan, Sivaraman and Niles-Weed, Jonathan and Wasserman, Larry},
  journal={Annals of Statistics},
  volume={52},
  number={3},
  pages={966--998},
  year={2024},
  publisher={Institute of Mathematical Statistics}
}

@article{manski1997mixing,
  title={The mixing problem in programme evaluation},
  author={Manski, Charles F},
  journal={The Review of Economic Studies},
  volume={64},
  number={4},
  pages={537--553},
  year={1997},
  publisher={Wiley-Blackwell}
}

@article{heckman1997making,
    title = {Making the most out of programme evaluations and social experiments: Accounting for heterogeneity in programme impacts},
   author={Heckman, James J and Smith, Jeffrey and Clements, Nancy},
  journal={The Review of Economic Studies},
  volume={64},
  number={4},
  pages={487--535},
  year={1997},
  publisher={Wiley-Blackwell}
}

@article{balakrishnan2023conservative,
  title={Conservative inference for counterfactuals},
  author={Balakrishnan, Sivaraman and Kennedy, Edward and Wasserman, Larry},
  journal={Journal of Causal Inference},
  volume={13},
  number={1},
  pages={20230071},
  year={2025},
  publisher={De Gruyter}
}

@book{nelsen2006introduction,
  title={An {I}ntroduction to {C}opulas},
  author={Nelsen, Roger B},
  year={2006},
  publisher={Springer}
}

@article{fan2010sharp,
  title={Sharp bounds on the distribution of treatment effects and their statistical inference},
  author={Fan, Yanqin and Park, Sang Soo},
  journal={Econometric Theory},
  volume={26},
  number={3},
  pages={931--951},
  year={2010},
  publisher={Cambridge University Press}
}

@book{imbens2015causal,
  title={Causal {I}nference in {S}tatistics, {S}ocial, and {B}iomedical {S}ciences},
  author={Imbens, Guido W and Rubin, Donald B},
  year={2015},
  publisher={Cambridge university press}
}

@article{flamary2021pot,
  title={{POT}: Python optimal transport},
  author={Flamary, R{\'e}mi and Courty, Nicolas and Gramfort, Alexandre and Alaya, Mokhtar Z and Boisbunon, Aur{\'e}lie and Chambon, Stanislas and Chapel, Laetitia and Corenflos, Adrien and Fatras, Kilian and Fournier, Nemo and others},
  journal={Journal of Machine Learning Research},
  volume={22},
  number={78},
  pages={1--8},
  year={2021}
}

@article{torous2024optimal,
  title={An optimal transport approach to estimating causal effects via nonlinear difference-in-differences},
  author={Torous, William and Gunsilius, Florian and Rigollet, Philippe},
  journal={Journal of Causal Inference},
  volume={12},
  number={1},
  pages={20230004},
  year={2024},
  publisher={De Gruyter}
}

@inproceedings{gao2024bridging,
  author    = {Zijun Gao and Shu Ge and Jian Qian},
  title     = {Bridging Multiple Worlds: Multi-Marginal Optimal Transport for Causal Partial-Identification Problem},
  booktitle = {Proceedings of the 28th International Conference on Artificial Intelligence and Statistics},
  year      = {2025}
}

@article{neyman1923application,
  title={On the application of probability theory to agricultural experiments. {E}ssay on principles. Section 9.},
  author={Splawa-Neyman, Jerzy},
  journal={Statistical Science},
  pages={465--472},
  year={1923},
  publisher={JSTOR}
}

@article{rubin1974estimating,
  title={Estimating causal effects of treatments in randomized and nonrandomized studies.},
  author={Rubin, Donald B},
  journal={Journal of Educational Psychology},
  volume={66},
  number={5},
  pages={688},
  year={1974},
  publisher={American Psychological Association}
}

@inproceedings{manupriya2024consistent,
  author    = {Piyushi Manupriya and Rachit K. Das and Sayantan Biswas and SakethaNath N. Jagarlapudi},
  title     = {Consistent Optimal Transport with Empirical Conditional Measures},
  booktitle = {Proceedings of the 27th International Conference on Artificial Intelligence and Statistics},
  pages     = {3646--3654},
  year      = {2024},
}

@article{chemseddine2024conditional,
  title={Conditional {W}asserstein distances with applications in {B}ayesian {OT} flow matching},
  author={Chemseddine, Jannis and Hagemann, Paul and Steidl, Gabriele and Wald, Christian},
  journal={arXiv preprint arXiv:2403.18705},
  year={2024}
}

@article{fournier2015rate,
  title={On the rate of convergence in {W}asserstein distance of the empirical measure},
  author={Fournier, Nicolas and Guillin, Arnaud},
  journal={Probability Theory and Related Fields},
  volume={162},
  number={3},
  pages={707--738},
  year={2015},
  publisher={Springer}
}

@article{carlier2010knothe,
  title={From {K}nothe's transport to {B}renier's map and a continuation method for optimal transport},
  author={Carlier, Guillaume and Galichon, Alfred and Santambrogio, Filippo},
  journal={SIAM Journal on Mathematical Analysis},
  volume={41},
  number={6},
  pages={2554--2576},
  year={2010},
  publisher={SIAM}
}

@incollection{charpentier2023optimal,
  title={Optimal transport for counterfactual estimation: A method for causal inference},
  author={Charpentier, Arthur and Flachaire, Emmanuel and Gallic, Ewen},
  booktitle={Optimal Transport Statistics for Economics and Related Topics},
  pages={45--89},
  year={2023},
  publisher={Springer}
}

@article{de2024transport,
  title={Transport-based counterfactual models},
  author={De Lara, Lucas and Gonz{\'a}lez-Sanz, Alberto and Asher, Nicholas and Risser, Laurent and Loubes, Jean-Michel},
  journal={Journal of Machine Learning Research},
  volume={25},
  number={136},
  pages={1--59},
  year={2024}
}

@inproceedings{black2020fliptest,
  title={Fliptest: fairness testing via optimal transport},
  author={Black, Emily and Yeom, Samuel and Fredrikson, Matt},
  booktitle={Proceedings of the 2020 Conference on Fairness, Accountability, and Transparency},
  pages={111--121},
  year={2020}
}

@article{tabak2021data,
  title={Data driven conditional optimal transport},
  author={Tabak, Esteban G and Trigila, Giulio and Zhao, Wenjun},
  journal={Machine Learning},
  volume={110},
  pages={3135--3155},
  year={2021},
  publisher={Springer}
}

@inproceedings{bunne2022supervised,
  author    = {Charlotte Bunne and Andreas Krause and Marco Cuturi},
  title     = {Supervised Training of Conditional {M}onge Maps},
  booktitle = {Advances in Neural Information Processing Systems},
  volume    = {35},
  year      = {2022},
  pages     = {6859--6872}
}

@article{rakotomamonjy2022optimal,
  title={Optimal transport for conditional domain matching and label shift},
  author={Rakotomamonjy, Alain and Flamary, R{\'e}mi and Gasso, Gilles and Alaya, M El and Berar, Maxime and Courty, Nicolas},
  journal={Machine Learning},
  pages={1--20},
  year={2022},
  publisher={Springer}
}

@article{baptista2024representation,
  title={On the representation and learning of monotone triangular transport maps},
  author={Baptista, Ricardo and Marzouk, Youssef and Zahm, Olivier},
  journal={Foundations of Computational Mathematics},
  volume={24},
  number={6},
  pages={2063--2108},
  year={2024},
  publisher={Springer}
}

@article{wang2023efficient,
  title={Efficient neural network approaches for conditional optimal transport with applications in {B}ayesian inference},
  author={Wang, Zheyu Oliver and Baptista, Ricardo and Marzouk, Youssef and Ruthotto, Lars and Verma, Deepanshu},
  journal={arXiv preprint},
  year={2023},
  archivePrefix = {arXiv},
  eprint = {2310.16975}
}

@article{kerrigan2024dynamic,
  title={Dynamic conditional optimal transport through simulation-free flows},
  author={Kerrigan, Gavin and Migliorini, Giosue and Smyth, Padhraic},
  journal={arXiv preprint},
  year={2024},
  archivePrefix = {arXiv},
  eprint = {2404.04240}
}

@article{backhoff2017causal,
  title={Causal transport in discrete time and applications},
  author={Backhoff, Julio and Beiglbock, Mathias and Lin, Yiqing and Zalashko, Anastasiia},
  journal={SIAM Journal on Optimization},
  volume={27},
  number={4},
  pages={2528--2562},
  year={2017},
  publisher={SIAM}
}

@inproceedings{lin2025tightening,
  title={Tightening Causal Bounds via Covariate-Aware Optimal Transport},
  booktitle = {Proceedings of the 42nd International Conference on Machine Learning},
  author={Lin, Sirui and Gao, Zijun and Blanchet, Jose and Glynn, Peter},
  year={2025}
}

@article{backhoff2022estimating,
  title={Estimating processes in adapted {W}asserstein distance},
  author={Backhoff, Julio and Bartl, Daniel and Beiglb{\"o}ck, Mathias and Wiesel, Johannes},
  journal={Annals of Applied Probability},
  volume={32},
  number={1},
  pages={529--550},
  year={2022},
  publisher={Institute of Mathematical Statistics}
}

@article{pflug2016empirical,
  title={From empirical observations to tree models for stochastic optimization: convergence properties},
  author={Pflug, Georg Ch and Pichler, Alois},
  journal={SIAM Journal on Optimization},
  volume={26},
  number={3},
  pages={1715--1740},
  year={2016},
  publisher={SIAM}
}

@article{blanchet2024bounding,
  title={Bounding adapted {W}asserstein metrics},
  author={Blanchet, Jose and Larsson, Martin and Park, Jonghwa and Wiesel, Johannes},
  journal={arXiv preprint arXiv:2407.21492},
  year={2024}
}

@book{wasserman2006all,
  title     = {All of Nonparametric Statistics},
  author    = {Wasserman, Larry},
  year      = {2006},
  publisher = {Springer Science \& Business Media}
}

@book{kallenberg1997foundations,
  author    = {Olav Kallenberg},
  title     = {Foundations of Modern Probability},
  edition   = {2nd},
  series    = {Probability and Its Applications},
  publisher = {Springer},
  address   = {Berlin},
  year      = {2002}
}

@book{bertsekas1978stochastic,
  author    = {D. P. Bertsekas and S. E. Shreve},
  title     = {Stochastic Optimal Control: The Discrete Time Case},
  series    = {Mathematics in Science and Engineering},
  volume    = {139},
  publisher = {Academic Press},
  address   = {New York},
  year      = {1978}
}

@article{balakrishnan2025conservative,
  title={Conservative inference for counterfactuals},
  author={Balakrishnan, Sivaraman and Kennedy, Edward and Wasserman, Larry},
  journal={Journal of Causal Inference},
  volume={13},
  number={1},
  pages={20230071},
  year={2025},
  publisher={De Gruyter}
}

@article{tchen1980inequalities,
  title={Inequalities for distributions with given marginals},
  author={Tchen, Andr{\'e} H},
  journal={Annals of Probability},
  pages={814--827},
  year={1980},
  publisher={JSTOR}
}




\clearpage
\section*{Checklist}

\begin{enumerate}

  \item For all models and algorithms presented, check if you include:
  \begin{enumerate}
    \item A clear description of the mathematical setting, assumptions, algorithm, and/or model. [Yes]
    \item An analysis of the properties and complexity (time, space, sample size) of any algorithm. [Yes]
    \item (Optional) Anonymized source code, with specification of all dependencies, including external libraries. [Yes]
  \end{enumerate}

  \item For any theoretical claim, check if you include:
  \begin{enumerate}
    \item Statements of the full set of assumptions of all theoretical results. [Yes]
    \item Complete proofs of all theoretical results. [Yes]
    \item Clear explanations of any assumptions. [Yes]     
  \end{enumerate}

  \item For all figures and tables that present empirical results, check if you include:
  \begin{enumerate}
    \item The code, data, and instructions needed to reproduce the main experimental results (either in the supplemental material or as a URL). [Yes]
    \item All the training details (e.g., data splits, hyperparameters, how they were chosen). [Yes]
    \item A clear definition of the specific measure or statistics and error bars (e.g., with respect to the random seed after running experiments multiple times). [Yes]
    \item A description of the computing infrastructure used. (e.g., type of GPUs, internal cluster, or cloud provider). [Yes]
  \end{enumerate}

  \item If you are using existing assets (e.g., code, data, models) or curating/releasing new assets, check if you include:
  \begin{enumerate}
    \item Citations of the creator If your work uses existing assets. [Yes]
    \item The license information of the assets, if applicable. [Yes]
    \item New assets either in the supplemental material or as a URL, if applicable. [Yes]
    \item Information about consent from data providers/curators. [Yes]
    \item Discussion of sensible content if applicable, e.g., personally identifiable information or offensive content. [Yes]
  \end{enumerate}

  \item If you used crowdsourcing or conducted research with human subjects, check if you include:
  \begin{enumerate}
    \item The full text of instructions given to participants and screenshots. [Yes]
    \item Descriptions of potential participant risks, with links to Institutional Review Board (IRB) approvals if applicable. [Yes]
    \item The estimated hourly wage paid to participants and the total amount spent on participant compensation. [Yes]
  \end{enumerate}

\end{enumerate}

\clearpage

\clearpage
\appendix
\thispagestyle{empty}

\onecolumn
\aistatstitle{Supplementary Materials}

\noindent\textbf{Organization}. The appendix is organized as follows.
Section~\ref{sec:details_realdata} presents the details of the real data experiment discussed in Section~\ref{sec:sim_setting}.
Section~\ref{sec:supp_experiment} presents supplementary experiments to show the robustness and effectiveness of our method.
Section~\ref{sec:technical_background} reviews technical background knowledge. 
Section~\ref{sec:proof_population} provides proofs for the results in Section~\ref{sec:problem_form}.
Section~\ref{sec:proof_basic} provides proofs for the results in Section~\ref{sec:connection}.
Section~\ref{sec:proof_finite_sample} provides proofs for the results in Section~\ref{sec:main_result}.
Section~\ref{sec:proof_lemma} collects proofs for supporting lemmas.

\section{More Details on the Real Data Experiment}\label{sec:details_realdata}
\begin{figure}[ht]
    \centering
    \includegraphics[width=1\linewidth]{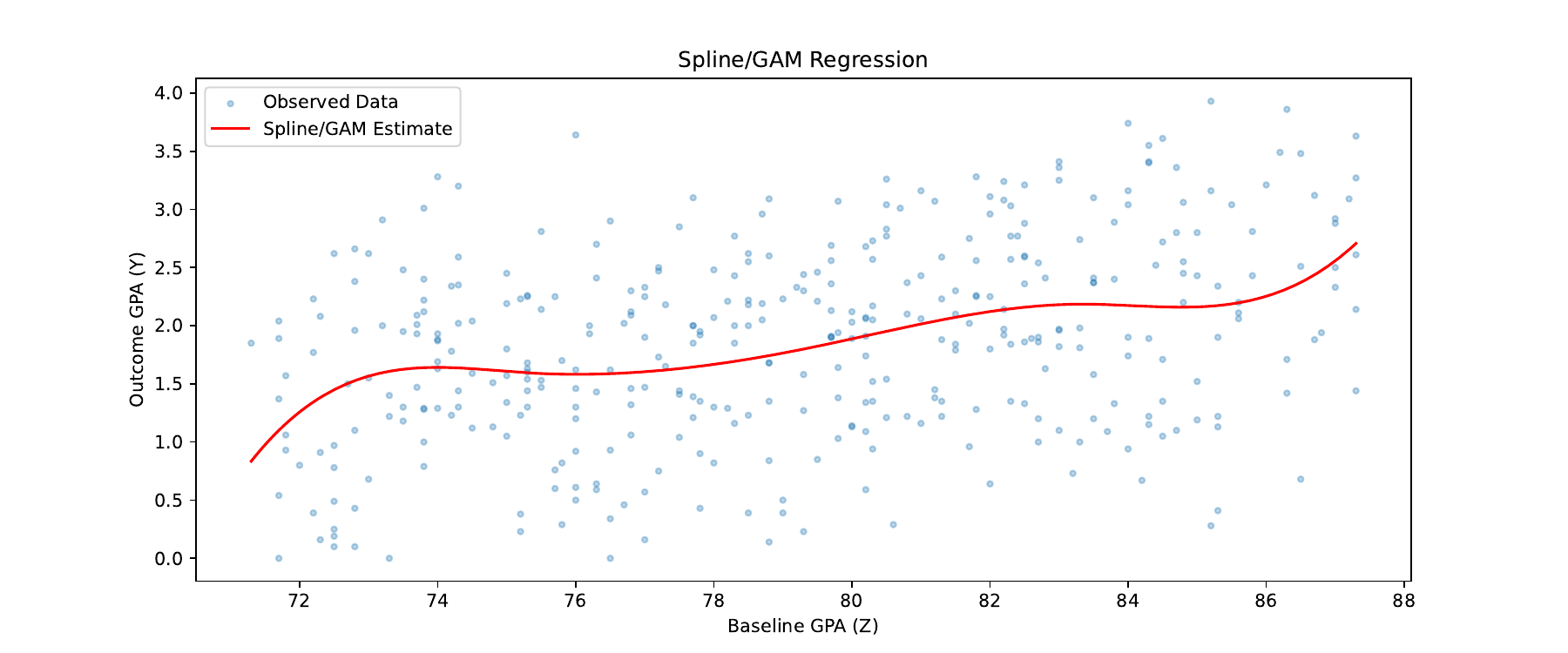}
    \caption{Cubic spline regression of the outcome variable $Y$ (GPA) versus the covariate $Z$ (baseline GPA) for the treatment group. The distribution of $Z$ is modeled using Gaussian kernel density estimation (KDE), while the relationship between the outcome $Y$ and $Z$ is modeled using cubic spline regression. The Wasserstein distance between the empirical and KDE-generated distributions of $Z$ is less than $0.2$. The Wasserstein distance between the observed and model-generated distributions of $Y$ is less than $0.07$ for the treatment group and less than $0.09$ for the control group.}
    \label{fig:treat_fit}
\end{figure}

In this section, we present more details of the real data experiment in Section~\ref{sec:sim_setting}, which is based on the Student Achievement and Retention (STAR) Demonstration Project \cite{angrist2009incentives}, an initiative designed to assess the impact of scholarship incentive programs on academic performance. Experiments were run on a MacBook Air (Apple M3, 8-core CPU, 16 GB RAM), without GPU acceleration. 

In the STAR study, the treatment is access to a scholarship incentive and was randomly assigned.
Academic performance is measured by the GPA recorded at the end of the first academic year.
In addition to the treatment indicator and outcome variable, the dataset includes baseline GPA (measured prior to treatment assignment), which serves as a key covariate due to its strong predictive influence on academic outcomes.

We are interested in the correlation between two potential outcomes of each unit $i$, which is defined as
\[
\rho \Let \frac{\mathbb{E}[Y_i(1) Y_i(0)] - \EE[Y_i(1)] \EE[Y_i(0)]}{\sqrt{\mathrm{Var}(Y_i(0)) \mathrm{Var}(Y_i(1))}}.
\]  
The unidentifiable part of $\rho$ arises from the estimand $\mathbb{E}[Y_i(1) Y_i(0)]$. Then, to obtain the corresponding $V_{\cc}$ value for $\rho$, we could utilize the cost function $h(y_0, y_1) = (y_0 + y_1)^2$. To ensure the PI bounds are accessible at the population level, we generate hypothetical data from a model fitted to the real data.
Particularly, we fit the real data using the model
$Y(w) = f_w(Z(w)) + \calN(0, \sigma_w^2)\ w=0,1,$
and subsequently generate data based on the estimated functions $f_w$ and noise level $\sigma_w$.

To ensure the true (population-level) PI bounds are accessible, we generate hypothetical data from a model fitted to the real data.
Particularly, we fit the real data using the model
\[
    Y(k) = f_k(Z(k)) + \calN(0, \sigma_k^2)\ \ \ \  k=0,1.
\]
and subsequently generate data based on the estimated functions $f_k$ and noise level $\sigma_k$ for $k=0,1$ (see Figure~\ref{fig:treat_fit}). In addition, we model the distribution of baseline GPA ($Z$) using Gaussian kernel density estimation.
Specifically, we consider
\[
    f_k(Z) = \sum_{j=1}^6 \hat \beta_{kj} \phi_j(Z)\ \ \ \ k=0,1,
\]
where $(\phi_j, j=1,...,6)$ are cubic spline basis functions. The regression coefficients $(\hat \beta_{kj}, j=1,...,6, k=0,1)$ are estimated via ridge regression applied to the real data, with the regularization parameter selected through cross-validation.

\section{Supplementary Experiments}\label{sec:supp_experiment}
\subsection{Robustness to Covariate Distributional Shift}
\begin{figure}[ht]
    \centering
    \begin{minipage}{0.5\textwidth}
    \centering
    \includegraphics[width = 1 \textwidth]{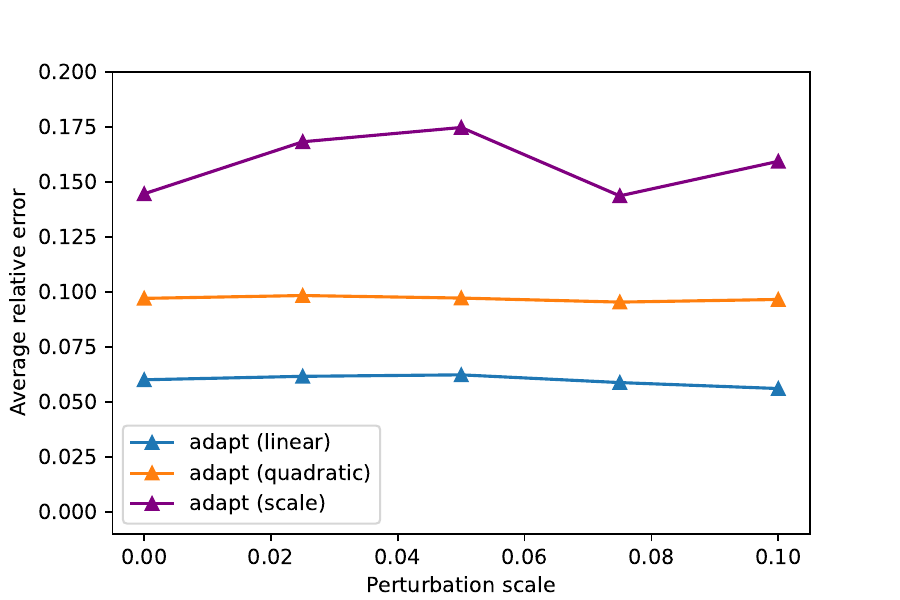}
    \end{minipage}
    \hfill   

    \caption{
    Robustness of our method to covariate mismatch.  
    The average relative error is defined as \( \mathbb{E}\left[ \left| \text{estimator} - V_{\cc} \right| \right] / V_{\cc} \), where the expectation \( \mathbb{E} \) is approximated by averaging over 500 Monte Carlo repetitions. To quantify the uncertainty, we compute the standard error of the mean (SEM) as $\textup{SD} / \sqrt{500}$, where $\textup{SD}$ denotes the standard deviation of the relative errors across repetitions. The maximum SEM values are less than 0.006 in this setting.}
    \label{fig:experiment_robust}
\end{figure}

In this section, we consider a setting in which the covariates of the treatment group are shifted. Let $\mu_{Z(0)}, \mu_{Z(1)}$ denote the (marginal) generating distribution of $Z(0), Z(1)$. Specifically, we set \( \mu_Z = \mu_{Z(0)} = \mathcal{N}(0,1) \) and perturb the treatment covariate distribution as \( \mu_{Z(1)} = \mathcal{N} (0,1) + \eta \varepsilon \), where $\varepsilon$ is an independent $\calN(0,1)$ noise and \( \eta \) controls the magnitude of the perturbation. Using the same three models described in Section~\ref{sec:sim_setting}, we evaluate the estimation accuracy of the adapted COT value estimator under varying levels of \( \eta \). The results are reported in Figure~\ref{fig:experiment_robust}. We observe that the adapted COT value estimator remains stable across small perturbation levels (\( \eta \in [0, 0.1] \)), illustrating its robustness to covariate distribution shift (see also Theorem~\ref{rmk:robust}). Intuitively, this robustness stems from the projection of covariates onto a finite set of representative values: small perturbations in the covariates are often absorbed during coarsening, as they rarely alter cell assignments. Moreover, the use of optimal coupling to match the discretized marginals helps mitigate the impact of systematic perturbations such as location shifts.

\subsection{Comparison with a Fr\'echet--Hoeffding-Type Approach}
When the outcomes are one-dimensional, the copula models \cite{nelsen2006introduction}, and the Fréchet–Hoeffding bounds \cite{heckman1997making, manski1997mixing, fan2010sharp, kawakami2025moments} can provide partial identification bounds in closed forms without solving an OT-type optimization problem. Particularly, for the case $d_Y = 1$ and $h(y_0, y_1) = (y_0 - y_1)^2$, Hoeffding-type results yield a closed-form optimizer for \eqref{eq:COT.lower}.
(Note that Hoeffding-type bounds do not provide consistent partial identification sets for $d_Y>1$, whereas our procedure does.)

\begin{proposition}[Fr\'echet--Hoeffding bounds for $h(y_0, y_1) = (y_0 - y_1)^2$]
Let $F_w(\cdot | z)\ w=0,1$ be the marginal CDFs of $Y_w$ conditioned on $Z =z$ on $\mathbb{R}$ 
with finite second moments, and denote their quantile functions by 
$F_w^{-1}(u | z) = \inf\{x : F_w(x | z) \ge u\}$ for $w=0,1$.
Then, for any coupling $\pi$ of $(Y_0, Y_1, Z)$ with these marginals,
\begin{align*}
\int_{\calZ} \int_0^1 \!\big(F_0^{-1}(u | z) - F_1^{-1}(u | z)\big)^2 \, \diff u \diff z
&\;\le\;
\mathbb{E}_{\pi}\!\big[(Y_0 - Y_1)^2\big]
\;\le\;
\int_{\calZ} \int_0^1 \!\big(F_0^{-1}(u | z) - F_1^{-1}(1-u | z)\big)^2 \, \diff u\diff z.
\end{align*}
Conditioned on $Z = z$, the lower bound is attained by the comonotone coupling 
$Y_0 = F_0^{-1}(U | z)$, $Y_1 = F_1^{-1}(U | z)$, 
and the upper bound by the countermonotone coupling 
$Y_0 = F_0^{-1}(U | z)$, $Y_1 = F_1^{-1}(1-U | z)$,
where $U \sim \mathrm{Uniform}(0,1)$.
\end{proposition}

Based on the above result, a Fr\'echet-Hoeffding-Type approach \cite{aronow2014sharp, balakrishnan2023conservative} for estimating $V_{\cc}$ proceeds as follows:
\begin{enumerate}[label=(\roman*)]
    \item \textbf{Quantile regression:} 
    Within the control and treatment groups, estimate the conditional quantile functions by running quantile regressions. For example, in the following, we use a non-parametric quantile regression method, random forest quantile regression (python package: quantile\_forest \cite{Johnson2024}\footnote{https://pypi.org/project/quantile-forest/}).
    
    \item \textbf{Fr\'echet--Hoeffding bound:} 
    For each covariate value (in either the control or treatment group), use the estimated conditional quantiles to compute the squared loss.
    
    \item \textbf{Aggregation:} 
    Approximate the overall value by taking the sample average of the values of the conditional squared loss derived in (ii) over the distinct covariate values.
\end{enumerate}

To compare our adapted COT value estimator and the above Fr\'echet--Hoeffding--based approach, we consider the scale model (Section~\ref{sec:sim_setting} (c)) with the same model parameter as there. We run a Monte Carlo simulation with 200 repetitions. The mean absolute error (standard error) results are collected in Table~\ref{tab:compare_FH} (In this setting we could compute the oracle true value (=1.92) of the PI bounds with high accuracy). The results show that our method is comparable with the Fr\'echet--Hoeffding--based method in this special one-dimensional outcome setting.

\begin{table}[h!]
\centering
\caption{Comparison of our method and Fr\'echet--Hoeffding--based method using the scale model.}\label{tab:compare_FH}
\begin{tabular}{lcccc}
\toprule
 & N = 100 & N = 500 & N = 1000 & N = 1500 \\
\midrule
\textbf{Our Method} & 0.356 (0.025) & 0.176 (0.011) & 0.138 (0.007) & 0.122 (0.007) \\
F--H Based & 0.276 (0.019) & 0.180 (0.009) & 0.156 (0.007) & 0.148 (0.006) \\
\bottomrule
\end{tabular}
\end{table}

\section{Technical Background}\label{sec:technical_background}
\subsection{Regular Kernel}\label{app:disintegration}
In this paper, we always assume that joint distribution like $\mu_{Y,Z}$ that can be written as $\mu_{Y,Z} = \mu_{Z} \otimes \mu_Y^Z$, i.e. for any measurable function $g$, $\int g(y,z) \diff \mu = \int_{\calZ} \int_{\calY} g(y,z) \diff \mu_Y^z(y) \diff \mu_Z(z)$, with a regular kernel $\mu_Y^z$. The definition of a regular kernel is stated as follows. 

\begin{definition}[Regular kernel]
    A kernel $\mu_Y^z$ is regular if (i) for any $z \in \calZ$, $\int \mathbf{1}(y \in \cdot)\diff  \mu_Y^z (y)$ is a probability measure, and (ii) for any measurable set $B \subseteq \calY$, $z \mapsto \int \mathbf{1}(y \in B)\diff  \mu_Y^z (y)$ is a measurable function.
\end{definition}

We can always assume that $\mu$ has a regular kernel due to the following result.
\begin{proposition}[Disintegration {\cite[Theorem 5.3, 5.4]{kallenberg1997foundations}}]
    If $\calY \subseteq \RR^{d_Y}$ and is equipped with the Borel $\sigma$-algebra, then for a measure $\mu_{Y,Z} \in \calP(\calY \times \calZ)$, there is a regular kernel $\mu_Y^z$ such that $\int g(y,z) \diff \mu = \int_{\calZ} \int_{\calY} g(y,z) \diff \mu_Y^z(y) \diff \mu_Z(z)$ for any measurable function $g$. Further, $\mu_Y^z$ is unique $\mu_Z$-a.e.
\end{proposition}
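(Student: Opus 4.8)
The plan is to carry out the standard cumulative-distribution-function construction of a regular conditional probability; the statement itself is classical, which is why it is simply attributed in the paper to \cite[Theorems 5.3 and 5.4]{kallenberg1997foundations}. First I would reduce to a one-dimensional problem. Since $\calY \subseteq \RR^{d_Y}$ equipped with its Borel $\sigma$-algebra is a standard Borel space, it is either countable --- in which case the kernel is obtained by elementary conditioning on the countable atoms --- or Borel isomorphic to a Borel subset of $\RR$; fixing such an isomorphism, I may assume without loss of generality that $\calY$ is a Borel subset of $\RR$. For each rational $q$, let $F(q, \cdot)$ denote a fixed version of the conditional expectation $\EE_\mu[\mathbf{1}(Y \le q) \mid Z = \cdot]$; this exists as a $\mu_Z$-measurable function by applying the Radon--Nikodym theorem to the finite measure $A \mapsto \mu(Y \le q,\, Z \in A)$, which is absolutely continuous with respect to $\mu_Z$.

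Next I would show that, off a single $\mu_Z$-null set $N$, the family $(F(q,z))_{q \in \QQ}$ is the restriction to the rationals of a genuine distribution function: for rationals $q < q'$ one has $F(q,z) \le F(q',z)$ for $\mu_Z$-a.e.\ $z$, and $F(q,z) \to 0$ as $q \to -\infty$ while $F(q,z) \to 1$ as $q \to +\infty$ for $\mu_Z$-a.e.\ $z$, by monotone and dominated convergence along the rationals. Each of these is an almost-sure statement over a countable index set, so their common exceptional set $N$ remains $\mu_Z$-null. For $z \notin N$ I define $\mu_Y^z$ to be the law on $\RR$ with distribution function $t \mapsto \inf_{q \in \QQ,\, q > t} F(q,z)$ (then carried back to a law on $\calY$ through the fixed isomorphism), and for $z \in N$ I set $\mu_Y^z = \delta_{y_0}$ for an arbitrary fixed $y_0 \in \calY$.

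Finally I would verify the two defining properties of a regular kernel together with the disintegration identity. Property (i) holds by construction. For property (ii), $z \mapsto \mu_Y^z((-\infty,t])$ is measurable for every $t$, being a countable infimum of the measurable maps $F(q,\cdot)$, and the family of Borel sets $B$ for which $z \mapsto \mu_Y^z(B)$ is measurable is a $\lambda$-system containing the $\pi$-system of half-lines, hence equals $\calB(\calY)$ by Dynkin's lemma. The identity $\int g\,\diff\mu = \int_\calZ \int_\calY g(y,z)\,\diff\mu_Y^z(y)\,\diff\mu_Z(z)$ holds for $g = \mathbf{1}_{(-\infty,t] \times A}$ directly from the definition of $F$, hence for finite linear combinations, then for all bounded measurable $g$ by the monotone class theorem, then for nonnegative measurable $g$ by monotone convergence, and for $\mu$-integrable $g$ by splitting into positive and negative parts. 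Uniqueness follows because any other disintegrating regular kernel $\nu_Y^z$ satisfies $\int_A \mu_Y^z((-\infty,t])\,\diff\mu_Z = \mu(Y \le t,\, Z \in A) = \int_A \nu_Y^z((-\infty,t])\,\diff\mu_Z$ for every rational $t$ and measurable $A$, so the two functions of $z$ agree $\mu_Z$-a.e.; intersecting over $t \in \QQ$ yields a $\mu_Z$-full set on which the distribution functions, and therefore the measures, coincide.

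The step I expect to be the main obstacle is the null-set bookkeeping in the second paragraph: monotonicity and the normalizations at $\pm\infty$ each fail only on a $\mu_Z$-null set, and the construction hinges on taking these exceptions over a \emph{countable} dense parameter set so that a single $\mu_Z$-full set suffices on which $(F(q,z))_{q}$ genuinely extends to a CDF --- and, relatedly, on the fact that reducing to a standard Borel $\calY$ is precisely what makes such a pointwise construction possible. Once the pointwise object is in place, everything else (measurability via Dynkin's lemma, the disintegration identity via the monotone class theorem, and uniqueness) is routine.
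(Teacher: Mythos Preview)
Your proof sketch is correct and is precisely the classical CDF-based construction of regular conditional distributions. Note, however, that the paper does not actually prove this proposition: it is stated as background in Section~\ref{app:disintegration} with a direct citation to \cite[Theorems~5.3, 5.4]{kallenberg1997foundations} and no accompanying argument. So there is no ``paper's own proof'' to compare against; you have simply supplied the standard proof that the cited reference contains. One minor point worth tightening: after constructing $\mu_Y^z$ as a law on $\RR$ and pulling back through the Borel isomorphism, you should check that for $\mu_Z$-a.e.\ $z$ the resulting law is concentrated on $\calY$ (equivalently, that the law on $\RR$ is concentrated on the isomorphic image of $\calY$); this follows from $\mu(Y \in \calY)=1$ and the same countable-exceptional-set bookkeeping, but deserves a sentence.
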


\subsection{Wasserstein vs. Adapted Wasserstein: A Distance Metric Comparison}\label{sec:adapted_wass}
This section presents the two metrics employed to study the continuity of COT.

\begin{definition}[Wasserstein distance]\label{defi:wass}
    The $1$-Wasserstein distance between $\mu, \nu \in \calP(\calX)$ is defined as
    \begin{align*}
        \Wass_1(\mu, \nu) \Let \min_{\pi \in \Pi(\mu, \nu)} \EE_{\pi}[\lnorm{X(0) - X(1)}{2}].
    \end{align*}
\end{definition}

The $1$-Wasserstein distances possess key properties that will be instrumental in our proof.
\begin{lemma}[Triangle inequality]\label{lem:tri_ineq}
    $\Wass_1(\mu, \mu') \leq \Wass_{1}(\mu, \nu) + \Wass_{1}(\nu, \mu')$.
\end{lemma}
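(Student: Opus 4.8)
The plan is to use the classical gluing construction for couplings. Since the minimum in Definition~\ref{defi:wass} is attained, choose an optimal coupling $\pi_{12} \in \Pi(\mu, \nu)$ for $\Wass_1(\mu, \nu)$ and an optimal coupling $\pi_{23} \in \Pi(\nu, \mu')$ for $\Wass_1(\nu, \mu')$; if one prefers to avoid existence of optimizers, one can instead take $\varepsilon$-optimal couplings and send $\varepsilon \to 0$ at the end. Disintegrating each of these along its common $\nu$-marginal (legitimate by the Disintegration theorem recalled in Section~\ref{app:disintegration}, since $\calX$ is a Borel subset of a Euclidean space), obtain regular kernels $y \mapsto \kappa_1^{y}$ and $y \mapsto \kappa_2^{y}$ on $\calX$ such that $\pi_{12}$ is the law of $(X_1, X_2)$ with $X_2 \sim \nu$ and $X_1 \mid X_2 = y \sim \kappa_1^{y}$, and $\pi_{23}$ is the law of $(X_2, X_3)$ with $X_2 \sim \nu$ and $X_3 \mid X_2 = y \sim \kappa_2^{y}$.

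Next I would form the glued measure $\pi_{123} \in \calP(\calX^3)$ characterized, for measurable $g \ge 0$, by
\[
    \int g \,\diff \pi_{123} = \int_{\calX} \left( \int_{\calX}\int_{\calX} g(x_1, x_2, x_3)\, \diff \kappa_1^{x_2}(x_1)\, \diff \kappa_2^{x_2}(x_3) \right) \diff \nu(x_2),
\]
that is, conditionally on the middle coordinate $x_2 \sim \nu$, draw $x_1$ and $x_3$ independently from $\kappa_1^{x_2}$ and $\kappa_2^{x_2}$. By construction the marginal of $\pi_{123}$ on coordinates $(1,2)$ is $\pi_{12}$ and on $(2,3)$ is $\pi_{23}$, so its marginal $\pi_{13}$ on coordinates $(1,3)$ satisfies $\pi_{13} \in \Pi(\mu, \mu')$. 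Since $\pi_{13}$ is then a feasible coupling,
\[
    \Wass_1(\mu, \mu') \le \EE_{\pi_{13}}\!\left[\lnorm{X_1 - X_3}{2}\right] = \EE_{\pi_{123}}\!\left[\lnorm{X_1 - X_3}{2}\right] \le \EE_{\pi_{123}}\!\left[\lnorm{X_1 - X_2}{2}\right] + \EE_{\pi_{123}}\!\left[\lnorm{X_2 - X_3}{2}\right],
\]
where the last step uses the pointwise triangle inequality for $\lnorm{\cdot}{2}$ together with linearity of expectation. The two terms on the right equal $\EE_{\pi_{12}}[\lnorm{X_1 - X_2}{2}] = \Wass_1(\mu, \nu)$ and $\EE_{\pi_{23}}[\lnorm{X_2 - X_3}{2}] = \Wass_1(\nu, \mu')$, which is exactly the asserted inequality.

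The only point that needs care is the construction of $\pi_{123}$: one must know that $\kappa_1, \kappa_2$ may be taken to be regular kernels and that $x_2 \mapsto \kappa_1^{x_2} \otimes \kappa_2^{x_2}$ is again a regular kernel, so that the displayed formula indeed defines a probability measure with the stated two-dimensional marginals. This is precisely what the Disintegration theorem cited in Section~\ref{app:disintegration} provides (it applies since the relevant spaces $\calY$, $\calZ$, $\calY \times \calZ$ are Borel subsets of Euclidean spaces), so I do not anticipate any genuine obstacle; checking the marginals of $\pi_{123}$ and $\pi_{13}$ directly from the formula is then routine.
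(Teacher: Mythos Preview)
Your proof is correct and is essentially the same as the paper's: both glue optimal couplings along the common $\nu$-marginal and then apply the pointwise triangle inequality for $\lnorm{\cdot}{2}$. The only cosmetic difference is that the paper invokes the gluing lemma from Chapter~1 of \cite{villani2009optimal} as a black box, whereas you unpack its proof via disintegration and conditional independence.
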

\begin{lemma}[Convexity]\label{lem:convex}
    The map $\mu \mapsto \Wass_1(\mu, \mu')$ is convex.
\end{lemma}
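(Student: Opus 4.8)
The plan is to exploit the linearity, in the measure, of both the marginal (coupling) constraint and the transport cost. Fix $\mu_0, \mu_1 \in \calP(\calX)$ and $t \in [0,1]$, and set $\mu_t \Let (1-t)\mu_0 + t\mu_1$; the goal is the inequality $\Wass_1(\mu_t, \mu') \leq (1-t)\Wass_1(\mu_0, \mu') + t\,\Wass_1(\mu_1, \mu')$. If either term on the right-hand side is $+\infty$ there is nothing to prove, so we may assume both are finite.

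First I would pick, for each $\epsilon > 0$, couplings $\pi_0 \in \Pi(\mu_0, \mu')$ and $\pi_1 \in \Pi(\mu_1, \mu')$ that are $\epsilon$-optimal for the respective $1$-Wasserstein problems (using $\epsilon$-optimal rather than exactly optimal couplings sidesteps any discussion of existence of minimizers). Next I would form the mixture $\pi_t \Let (1-t)\pi_0 + t\pi_1 \in \calP(\calX^2)$ and verify that $\pi_t \in \Pi(\mu_t, \mu')$: pushforward under each of the two coordinate projections is linear in the measure, so the first marginal of $\pi_t$ equals $(1-t)\mu_0 + t\mu_1 = \mu_t$ and the second equals $(1-t)\mu' + t\mu' = \mu'$. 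Hence $\pi_t$ is a feasible coupling in the definition of $\Wass_1(\mu_t,\mu')$.

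Then, since the functional $\pi \mapsto \int \lnorm{x - x'}{2}\,\diff\pi(x,x')$ is linear in $\pi$,
\[
    \Wass_1(\mu_t, \mu') \;\leq\; \int \lnorm{x - x'}{2} \,\diff \pi_t \;=\; (1-t)\int \lnorm{x-x'}{2}\,\diff\pi_0 \;+\; t\int\lnorm{x-x'}{2}\,\diff\pi_1 \;\leq\; (1-t)\Wass_1(\mu_0,\mu') + t\,\Wass_1(\mu_1,\mu') + \epsilon .
\]
Letting $\epsilon \downarrow 0$ yields the claim. I do not anticipate any real obstacle: the only points needing (minor) care are the marginal computation for the mixture coupling and the use of $\epsilon$-optimal couplings to avoid invoking weak compactness of $\Pi(\mu,\nu)$ together with lower semicontinuity of the cost for the existence of optimizers — although such existence does hold in the present setting and could be cited instead, in which case the $\epsilon$ can be dropped and one works directly with optimal $\pi_0,\pi_1$.
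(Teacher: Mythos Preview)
Your proposal is correct and follows essentially the same argument as the paper: form a convex combination of (near-)optimal couplings and use linearity of both the marginal constraints and the cost functional. The only cosmetic difference is that the paper works directly with optimal couplings $\pi,\tilde\pi$ rather than $\epsilon$-optimal ones, exactly as you note in your final sentence.
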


For $\calX = \calZ \times \calY$, the so-called adapted Wasserstein distance induces a topology that is typically stronger than the weak topology, which can be metrized by the Wasserstein distances.
\begin{definition}[Adapted Wasserstein distance]
    The adapted Wasserstein distances between $\mu, \nu \in \calP(\calX)$ is defined as
    \begin{align*}
        \Wass_{\bc}(\mu, \nu) \Let \min_{\pi \in \Pi_{\bc}(\mu,\nu)} \EE_{\pi}[\lnorm{X(0) - X(1)}{2}],
    \end{align*}
    where $X(0) = (Z(0), Y(0)), X(1) = (Z(1), Y(1))$, and
    \begin{align*}
        \Pi_{\bc}(\mu,\nu) \Let& \left\{\pi \in \Pi(\mu,\nu): \text{under}~\pi,~ Z(1) \perp\!\!\!\perp Y(0) \text{ conditioned on } Z(0), \right. \\
        &\  \left. Z(0) \perp\!\!\!\perp Y(1) \text{ conditioned on } Z(1)\right\}.
    \end{align*}
\end{definition}

The following proposition is stated to maintain consistency with Definition~\ref{defi:adapted_wassI} introduced earlier. 
\begin{proposition}[Reformulation of adapted Wasserstein distance; see, e.g. {\cite[Proposition 5.2]{backhoff2017causal}}]
    Let $\mu = \mu_Z \otimes \mu_Y^{Z}$ and $\nu = \nu_Z \otimes \nu_Y^{Z}$, we have
    \[
        \Wass_{\bc}(\mu, \nu) = \min_{\pi \in \Pi(\mu_{Z}, \nu_Z)} \int \lnorm{z - z'}{2} + \Wass_{1}(\mu_Y^{z}, \nu_Y^{z'})~\diff \pi(z, z').
    \]
\end{proposition}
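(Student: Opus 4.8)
The plan is to establish the identity by a two-sided bound, using a gluing (disintegration-then-recombine) construction for ``$\le$'' and disintegration for ``$\ge$''. Throughout, I read the cost $\lnorm{X(0)-X(1)}{2}$ in the definition of $\Wass_{\bc}$ additively over the covariate and outcome blocks, i.e.\ as $\lnorm{z-z'}{2} + \lnorm{y-y'}{2}$ for $X(k) = (Z(k), Y(k))$, and I abbreviate the right-hand side as
\[
    R(\mu,\nu) \Let \min_{\pi_Z \in \Pi(\mu_Z, \nu_Z)} \int \Big( \lnorm{z-z'}{2} + \Wass_1(\mu_Y^{z}, \nu_Y^{z'}) \Big) \diff \pi_Z(z,z').
\]
Existence of the minimizers on both sides follows from the standard argument (tightness of the relevant coupling sets together with lower semicontinuity of the cost), so it suffices to match the two optimal values.

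For $\Wass_{\bc}(\mu,\nu) \le R(\mu,\nu)$, I would fix $\pi_Z \in \Pi(\mu_Z,\nu_Z)$ and, for $\pi_Z$-a.e.\ $(z,z')$, select an optimal coupling $\gamma^{z,z'} \in \Pi(\mu_Y^{z}, \nu_Y^{z'})$ with $\int \lnorm{y-y'}{2} \diff \gamma^{z,z'} = \Wass_1(\mu_Y^{z},\nu_Y^{z'})$; such a selection can be made measurably (the set of optimizers is a closed-valued measurable multifunction, so Kuratowski--Ryll-Nardzewski applies). Gluing, I define $\pi$ on $\calX^2 = (\calZ \times \calY)^2$ by $\diff \pi(z,y,z',y') = \diff \gamma^{z,z'}(y,y')\, \diff \pi_Z(z,z')$, and then check three things: (i) the $(Z(0),Y(0))$-marginal of $\pi$ is $\mu_Z \otimes \mu_Y^{Z} = \mu$ and the $(Z(1),Y(1))$-marginal is $\nu$, using that $\gamma^{z,z'}$ has marginals $\mu_Y^{z}, \nu_Y^{z'}$ and $\pi_Z$ has marginals $\mu_Z, \nu_Z$; (ii) $\pi \in \Pi_{\bc}(\mu,\nu)$, because conditionally on $Z(0)=z, Z(1)=z'$ the law of $Y(0)$ is $\mu_Y^{z}$ (independent of $z'$), giving $Y(0) \indep Z(1) \mid Z(0)$, and symmetrically $Y(1) \indep Z(0) \mid Z(1)$; (iii) by Fubini, $\EE_\pi[\lnorm{Z(0)-Z(1)}{2} + \lnorm{Y(0)-Y(1)}{2}] = \int ( \lnorm{z-z'}{2} + \Wass_1(\mu_Y^{z},\nu_Y^{z'}) ) \diff \pi_Z(z,z')$. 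Taking the infimum over $\pi_Z$ then yields $\Wass_{\bc}(\mu,\nu) \le R(\mu,\nu)$.

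For the reverse inequality, I would take any $\pi \in \Pi_{\bc}(\mu,\nu)$, let $\pi_Z \in \Pi(\mu_Z,\nu_Z)$ be its $(Z(0),Z(1))$-marginal, and disintegrate $\pi$ over $(Z(0),Z(1)) = (z,z')$ to obtain a kernel $\pi_Y^{z,z'} \in \calP(\calY \times \calY)$. The bicausal constraint $Y(0) \indep Z(1) \mid Z(0)$ forces the first marginal of $\pi_Y^{z,z'}$ not to depend on $z'$, and matching it against the $(Z(0),Y(0))$-marginal $\mu$ of $\pi$ identifies it as $\mu_Y^{z}$; symmetrically the second marginal is $\nu_Y^{z'}$. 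Hence $\pi_Y^{z,z'} \in \Pi(\mu_Y^{z},\nu_Y^{z'})$, so $\int \lnorm{y-y'}{2} \diff \pi_Y^{z,z'} \ge \Wass_1(\mu_Y^{z},\nu_Y^{z'})$, and integrating against $\pi_Z$ gives $\EE_\pi[\lnorm{Z(0)-Z(1)}{2} + \lnorm{Y(0)-Y(1)}{2}] \ge R(\mu,\nu)$. Infimizing over $\pi$ completes the proof.

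I expect the main obstacle to be the measurable-selection step in the ``$\le$'' direction together with the careful verification that the glued measure genuinely lies in $\Pi_{\bc}$: one must translate ``$Z(1) \indep Y(0) \mid Z(0)$'' and its symmetric counterpart into precise statements about conditional marginals of a disintegration, which requires some bookkeeping with regular kernels (see Appendix~\ref{app:disintegration}). One can sidestep explicit selection by instead working with $\varepsilon$-optimal couplings that are piecewise constant in $(z,z')$ over a fine partition of $\calZ^2$ and letting the mesh tend to $0$. Since the identity is classical, it also suffices to invoke \cite[Proposition 5.2]{backhoff2017causal}.
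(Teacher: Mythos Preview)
The paper does not supply its own proof of this proposition; it is stated as background with a citation to \cite[Proposition 5.2]{backhoff2017causal} and nothing more. Your two-sided argument---glue a measurably selected optimal $Y$-coupling $\gamma^{z,z'}$ onto a $Z$-coupling $\pi_Z$ for the upper bound, and disintegrate an arbitrary bicausal $\pi$ over $(Z(0),Z(1))$ for the lower bound---is correct and is precisely the standard route to this identity, so there is no discrepancy to discuss.

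One remark is worth keeping: your explicit flag that the cost $\lnorm{X(0)-X(1)}{2}$ must be read additively across the $Z$- and $Y$-blocks is not cosmetic. With the literal Euclidean norm on $\calZ \times \calY$, i.e.\ $\sqrt{\lnorm{z-z'}{2}^2 + \lnorm{y-y'}{2}^2}$, the inner integral after disintegration would not separate into $\lnorm{z-z'}{2} + \Wass_1(\mu_Y^{z},\nu_Y^{z'})$, and the identity as stated would fail. The additive reading is exactly what the paper's Definition~\ref{defi:adapted_wassI} (the main-text definition of $\Wass_{\bc}$) encodes, so your interpretation is the one that reconciles the two definitions the proposition is meant to match.
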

Since $\Pi_{\bc}(\mu, \nu) \subseteq \Pi(\mu, \nu)$, clearly $\Wass_1 \leq \Wass_{\bc}$, but the reverse may not hold. Consider an i.i.d.~sample $(X_i = (Z_i, Y_i), i \in [n])$ drawn from a distribution $\mu$, and let $\hat \mu_n \Let \frac{1}{n}\sum_{i=1}^n\delta_{X_i}$. It is well known that $\Wass_1(\hat \mu_n, \mu)$ converges to zero almost surely as $n$ goes to infinity. However, as shown in \cite[Proposition 1]{pflug2016empirical}, $\Wass_{\bc}(\hat \mu_n, \mu)$ may fail to converge to zero, and a remedy to ensure convergence is to replace $\hat \mu_n$ with its convolution with a suitable kernel function. The intuition behind this approach is to ensure that the kernel of the convoluted empirical distribution becomes smooth in the variable it is conditioned on.
This idea aligns with the findings of \cite{blanchet2024bounding}, which show that for distributions with Lipschitz kernels, the weak topology and the topology induced by the adapted Wasserstein distance are equivalent.

However, since the convolved empirical distribution is not discrete, its practical implementation requires an additional sampling step for computation. To address this, \cite{backhoff2022estimating} propose a discrete alternative---the adapted empirical distribution---which mitigates the convergence issue by assigning the sample \( (X_i)_{i \in [n]} \) to a finite number of cells, which is the approach we adopt in this paper.

\subsection{Fournier and Guillin's Result}\label{sec:rate.wasserstein}
This section provides the convergence rate of the empirical distribution under the Wasserstein distance, as established by \cite{fournier2015rate}.

Suppose that the sample $(X_i, i \in [n])$ i.i.d.~follows $\mu$ and denote $\hat \mu_n = \frac{1}{n}\sum_{i=1}^n \delta_{X_i}$. Theorem 1 in \cite{fournier2015rate} characterizes a nearly sharp convergence rate for empirical Wasserstein distances.
\begin{theorem}[Empirical Wasserstein convergence rate, Theorem 1 \cite{fournier2015rate}]\label{thm:fournier.guillin}
    Let $\mu \in \mathcal{P}(\mathbb{R}^d)$ and let $q > 0$. Assume that $M_k(\mu) := \int \lnorm{x}{2}^k \diff \mu < \infty$ for some $k > q$. There exists a constant $C$ depending only on $q, d, k$ such that, for all $n \geq 1$,
    \[
    \EE\big( \Wass_q(\hat \mu_n, \mu)^q \big) 
    \leq CM_k^{q/k}(\mu)
    \begin{cases}
    n^{-1/2} + n^{-(k-q)/k} & \text{if } q > d/2, k \neq 2q, \\[6pt]
    n^{-1/2} \log(1+n) + n^{-(k-q)/k} & \text{if } q = d/2, k \neq 2q, \\[6pt]
    n^{-q/d} + n^{-(k-q)/k} & \text{if } q \in (0, d/2), k \neq d/(d-q),
    \end{cases}
    \]
    where $ \Wass_q(\mu, \nu) \Let \min_{\pi \in \Pi(\mu, \nu)} \EE_{\pi}[\lnorm{X(0) - X(1)}{2}^q]^{\frac{1}{q}}$.
\end{theorem}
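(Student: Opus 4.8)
The plan is to reproduce the multiscale (dyadic) argument of Fournier and Guillin. The heart of it is a purely deterministic upper bound for $\Wass_q^q$ between two probability (or equal-mass) measures in terms of how much their masses disagree on dyadic cubes at every scale; the randomness then enters only through elementary binomial variance estimates, and the three regimes come from optimizing a geometric series over scales.

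\textbf{Step 1: deterministic dyadic bound on a cube.} First I would treat measures $\mu,\nu$ supported in $[0,1)^d$. Let $\mathcal P_\ell$ be the partition of $[0,1)^d$ into $2^{d\ell}$ dyadic subcubes of side $2^{-\ell}$. I would build a transport plan between $\mu$ and $\nu$ hierarchically: at scale $\ell$, inside each cube $F\in\mathcal P_{\ell-1}$ one couples as much of the remaining $\mu$-mass to the remaining $\nu$-mass as possible across the $2^d$ children of $F$, so that the residual mass that must still be relocated at scale $\ell$ is at most $\tfrac12\sum_{F'\in\mathcal P_\ell,\,F'\subseteq F}|\mu(F')-\nu(F')|$, and this mass moves a distance $\lesssim 2^{-\ell}$. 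Summing the $q$-th-power costs over all scales (which needs no triangle inequality, so it works for every $q>0$) yields
\[
\Wass_q^q(\mu,\nu)\;\le\;C_{q,d}\sum_{\ell\ge 0}2^{-\ell q}\sum_{F\in\mathcal P_\ell}|\mu(F)-\nu(F)|,
\]
valid, after an obvious normalization, for any two finite measures of equal mass on $[0,1)^d$ — which is what we will need for the tail pieces below.

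\textbf{Step 2: taking expectations on the cube.} Setting $\nu=\hat\mu_n$, each $n\hat\mu_n(F)$ is $\mathrm{Binomial}(n,\mu(F))$, so $\EE|\hat\mu_n(F)-\mu(F)|\le\sqrt{\mu(F)/n}$, and trivially $\EE|\hat\mu_n(F)-\mu(F)|\le 2\mu(F)$. By Cauchy--Schwarz over the $2^{d\ell}$ cells, $\sum_{F\in\mathcal P_\ell}\sqrt{\mu(F)/n}\le 2^{d\ell/2}n^{-1/2}$, while $\sum_{F}2\mu(F)=2$; hence, from Step 1,
\[
\EE\,\Wass_q^q(\hat\mu_n,\mu)\;\le\;C\sum_{\ell\ge 0}2^{-\ell q}\min\!\big(2^{d\ell/2}n^{-1/2},\,2\big).
\]
Splitting at the scale $\ell_0$ with $2^{\ell_0}\asymp n^{1/d}$, the low-scale part equals $n^{-1/2}\sum_{\ell\le\ell_0}2^{\ell(d/2-q)}$, which is $O(n^{-1/2})$ if $q>d/2$, $O(n^{-1/2}\log n)$ if $q=d/2$, and $O(n^{-q/d})$ if $q<d/2$; the high-scale part is $\lesssim 2^{-\ell_0 q}\asymp n^{-q/d}$, dominated by the low-scale part in every regime. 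This is the three-case bound when $\mu$ is compactly supported.

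\textbf{Step 3: passing to $\RR^d$ via concentric shells.} For general $\mu$ I would decompose $\RR^d=\bigcup_{j\ge 0}\mathcal R_j$ with $\mathcal R_0=[-1,1]^d$ and $\mathcal R_j=[-2^j,2^j]^d\setminus[-2^{j-1},2^{j-1}]^d$, write $\mu=\sum_j\mu|_{\mathcal R_j}$ (likewise $\hat\mu_n$), transport shell-by-shell after rescaling each $\mathcal R_j$ to a unit cube, and repair the (random) mass mismatches $\hat\mu_n(\mathcal R_j)-\mu(\mathcal R_j)$ by a coarse cross-shell transport. Because distances in $\mathcal R_j$ scale like $2^j$, the contribution of shell $j$ carries a prefactor $2^{jq}$ times a unit-cube quantity of the form $\sum_\ell 2^{-\ell q}\min\!\big(2^{d\ell/2}\sqrt{\mu(\mathcal R_j)/n},\,2\mu(\mathcal R_j)\big)$; because $M_k(\mu)<\infty$, $\mu(\mathcal R_j)\le 2^{-(j-1)k}M_k(\mu)$, which makes the series over $j$ converge and produces the extra $M_k^{q/k}(\mu)\,n^{-(k-q)/k}$ term (the $M_k^{q/k}$ prefactor falling out of the rescaling of the $q$-th-power cost against the $k$-th-moment decay). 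The exclusions $k\ne 2q$ and $k\ne d/(d-q)$ are exactly the conditions under which this outer geometric series does not land on its critical exponent, so no extra logarithm appears. Adding the shell contributions to the compactly-supported bound of Step 2 gives the theorem.

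\textbf{Main obstacle.} The delicate part is Step 1 — constructing the hierarchical coupling and tracking the residual masses carefully enough that the cost at each scale is genuinely bounded by $2^{-\ell q}\sum_{F}|\mu(F)-\nu(F)|$ — together with the shell surgery in Step 3, where one must rescale each annulus, re-apply the unit-cube estimate uniformly in $j$, handle the random cross-shell mass corrections, and keep the two geometric series (over the inner scale $\ell$ and over the shell index $j$) balanced so that the stated rates, and in particular the roles of $k\ne 2q$ and $k\ne d/(d-q)$, come out exactly.
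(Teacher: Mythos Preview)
The paper does not prove this theorem: it is quoted verbatim as Theorem~1 of Fournier and Guillin (2015) and used as a black-box background result, so there is no ``paper's own proof'' to compare against. Your proposal is a faithful outline of the original Fournier--Guillin argument (dyadic multiscale coupling on a unit cube, binomial variance bounds, and the concentric-shell decomposition to handle unbounded support), and as a sketch of that proof it is correct in spirit; but for the purposes of this paper no proof is expected---the result is simply cited.
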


In this work, we will utilize the following corollary.
\begin{corollary}[$k > 2$]\label{cor:founier}
    Under the same conditions of Theorem~\ref{thm:fournier.guillin}, if $k > 2$, then
    \[
        \EE[\Wass_1(\hat \mu_n, \mu)] \leq C R_{d}(n),
    \]
    where $C$ is a constant that depends on $k, q, d, \EE_{\mu}[\lnorm{x}{2}^k]$, and 
    \begin{align}\label{eq:R}
        R_{d}(n)=
    \begin{cases}
    n^{-\frac{1}{2 \vee d}} & \text{if } d \neq 2, \\[6pt]
    n^{-\frac{1}{2}} \log(3+n) & \text{if } d = 2.
    \end{cases}
    \end{align}
\end{corollary}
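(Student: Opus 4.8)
The plan is to invoke Theorem~\ref{thm:fournier.guillin} (Fournier--Guillin) with the exponent choice $q = 1$ and then collapse the resulting three-case bound into a single expression using the extra hypothesis $k > 2$. Setting $q = 1$ is the natural move because the target quantity is $\Wass_1(\hat\mu_n,\mu) = \Wass_1(\hat\mu_n,\mu)^1$; the standing hypothesis $M_k(\mu) < \infty$ holds for the given $k$, and since $2q = 2$ the assumption $k > 2$ guarantees $k \neq 2q$, so the theorem applies without modification.

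First I would split according to how $q = 1$ compares with $d/2$: the regime $d = 1$ (so $q > d/2$), the critical regime $d = 2$ (so $q = d/2$), and $d \geq 3$ (so $q \in (0, d/2)$). In the last regime the theorem additionally excludes $k = d/(d-1)$, but this is automatic because $d/(d-1) \leq 3/2 < 2 < k$ for every $d \geq 3$. Theorem~\ref{thm:fournier.guillin} then gives, in the three regimes respectively,
\[
\EE[\Wass_1(\hat \mu_n, \mu)] \leq C\, M_k(\mu)^{1/k}
\begin{cases}
n^{-1/2} + n^{-(k-1)/k}, & d = 1,\\
n^{-1/2}\log(1+n) + n^{-(k-1)/k}, & d = 2,\\
n^{-1/d} + n^{-(k-1)/k}, & d \geq 3.
\end{cases}
\]
The remaining step is to absorb the moment-tail term $n^{-(k-1)/k}$ into the leading term. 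The key elementary observation is that $k > 2$ is equivalent to $(k-1)/k > 1/2$, hence $n^{-(k-1)/k} \leq n^{-1/2}$ for all $n \geq 1$; this disposes of the case $d = 1$ (where $2 \vee d = 2$) directly, and of $d = 2$ after noting $n^{-1/2} \leq n^{-1/2}\log(3+n)$. For $d \geq 3$ one instead checks $(k-1)/k \geq 1/d$, i.e. $k(d-1) \geq d$, which again follows from $k > 2 \geq d/(d-1)$; thus $n^{-(k-1)/k} \leq n^{-1/d} = n^{-1/(2 \vee d)}$. Folding $M_k(\mu)^{1/k}$ and the harmless factor $2$ into the constant $C$ yields the asserted bound with $R_d(n)$ as in~\eqref{eq:R}.

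I do not expect a genuine obstacle: this is a bookkeeping corollary of Theorem~\ref{thm:fournier.guillin}. The only points demanding a moment's care are verifying the excluded-exponent side conditions ($k \neq 2q$ and, for $d \geq 3$, $k \neq d/(d-1)$), and recognizing that the hypothesis $k > 2$ is exactly what makes the moment-tail rate $n^{-(k-1)/k}$ no slower than the dimension-driven rate $n^{-1/(2\vee d)}$, with the usual logarithmic correction appearing only in the critical dimension $d = 2$.
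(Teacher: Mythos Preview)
Your argument is correct and is exactly the intended derivation: the paper states this result as an immediate corollary of Theorem~\ref{thm:fournier.guillin} without supplying a proof, and your specialization to $q=1$ together with the observation that $k>2$ forces $n^{-(k-1)/k}$ to be dominated by the dimension-driven term is precisely what is needed.
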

Note that here we take $\log(3+n)$ rather than $\log(1+n)$ to make $x \mapsto x R_{d}(x)$ a concave function. 


\section{Proofs of Section \ref{sec:problem_form}}\label{sec:proof_population}
\subsection{Proof of Proposition~\ref{prop:ID.marginal}}
\begin{proof}[Proof of Proposition~\ref{prop:ID.marginal}]

We prove the equality for the treatment potential outcome ($k=1$); the result for the control potential outcome ($k=0$) follows analogously.
In fact,
\begin{align*}
    \PP(Y_i(1) = y, Z_i = z)
    &= \frac{\PP(Y_i(1) = y, Z_i = z, W_i = 1)}{\PP(W_i = 1 \mid Y_i(1) = y, Z_i = z)} \\
    &= \frac{\PP(Y_i(1) = y, Z_i = z \mid W_i = 1) \cdot \PP(W_i = 1)}{\PP(W_i = 1 \mid Z_i = z)}\\
    &=  \PP(Y_i(1) = y, Z_i = z \mid W_i = 1) \cdot \frac{\PP(W_i = 1)}{e(z)},
    \end{align*}
where we use Assumption~\ref{assu:unconfoundedness} in the second last equality.
\end{proof}

\subsection{Proof of Proposition~\ref{prop:recursive_form}}
\begin{remark}
In this proof, we will use the notions of Borel measurability, lower semianalyticity, and universal measurability, which are discussed in detail in \cite[Section 7.7]{bertsekas1978stochastic}.
\end{remark}

\begin{proof}[Proof of Proposition~\ref{prop:recursive_form}]
    We first show that the right-hand side (RHS) $\int \Wass_h(\mu_{Y(0)}^z, \mu_{Y(1)}^z) \diff \mu_{Z}(z)$ is well-defined.

    Since $h$ is a measurable function bounded from below, then the map
    \[
        \calL_h \Let (z, \gamma) \mapsto \int h(y_0, y_1) \diff \gamma(y_0, y_1)
    \]
    is a Borel (measurable) and thus lower semianalytic (l.s.a.). 

    Note that the maps $z \mapsto \mu_{Y(0)}^z$ and $z \mapsto \mu_{Y(1)}^z$ are both Borel since $\mu_{Y(0)}^z, \mu_{Y(1)}^z$ are  regular kernels (see Section~\ref{app:disintegration}). Further, $\{(p,q,\gamma): \gamma \in \Pi(p,q)\}$ is weakly closed. Then, the set
    \[
        D \Let \{(z, \gamma): \gamma \in \Pi(\mu_{Y(0)}^z, \mu_{Y(1)}^z)\}
    \]
    is Borel.

    As a result, $\Wass_h(\mu_{Y(0)}^z, \mu_{Y(1)}^z)$ is the infimum of the function $\calL_h$ over the fiber of $D$ at $z$. By \cite[Proposition 7.47]{bertsekas1978stochastic}, the map  $z \mapsto \Wass_h(\mu_{Y(0)}^z, \mu_{Y(1)}^z)$ is l.s.a. and in particular universally measurable. Therefore, the RHS $\int \Wass_h(\mu_{Y(0)}^z, \mu_{Y(1)}^z) \diff \mu_{Z}(z) + \epsilon.$ is well-defined.

    Next, we prove the equation from two directions as follows.

    \textbf{LHS $\leq$ RHS}:
    For any $\epsilon > 0$ By \cite[Proposition 7.50(b)]{bertsekas1978stochastic}, there is a universally measurable $\epsilon$-optimizer for 
    \[
        \inf_{\gamma \in \Pi(\mu_{Y(0)}^z, \mu_{Y(1)}^z)} \int h(y_0, y_1) \diff \gamma(y_0, y_1).
    \]
    That is, there exists a universally measurable map $z \mapsto \gamma^z_{\epsilon} \in \Pi(\mu_{Y(0)}^z, \mu_{Y(1)}^z)$ such that
    \begin{equation}\label{eq:eps_opt}
        \int h(y_0, y_1) \diff \gamma_{\varepsilon}^{z}(y_0, y_1) \leq \inf_{\gamma \in \Pi(\mu_{Y(0)}^z, \mu_{Y(1)}^z)} \int h(y_0, y_1) \diff \gamma(y_0, y_1) + \varepsilon.
    \end{equation}
    Then, apply \cite[Proposition 7.45]{bertsekas1978stochastic}, we can construct a measure $\pi_{\epsilon} \in \calP(\calY^2 \times \calZ)$ such that
    \[
        \int g(y_0, y_1, z) \diff \pi_{\epsilon}(y_0, y_1, z) = \int \int_{\calY^2} g(y_0, y_1, z) \diff \gamma_{\varepsilon}^{z}(y_0, y_1) \diff \mu_{Z}(z)
    \]
    for any measurable function $g$.
    
    As a result, $\pi \in \Pi_{\cc}$, and thus
    \[
        V_{\cc} = \min_{_{\pi \in \Pi_{\cc}}} \int h(y_0, y_1) \diff \pi \leq \int h(y_0, y_1) \diff \pi_{\epsilon} \leq \int \Wass_h(\mu_{Y(0)}^z, \mu_{Y(1)}^z) \diff \mu_{Z}(z) + \epsilon,
    \]
    where the second inequality is due to \eqref{eq:eps_opt}.

    Send $\epsilon \rightarrow 0^+$, we get $V_{\cc} \leq \int \Wass_h(\mu_{Y(0)}^z, \mu_{Y(1)}^z) \diff \mu_{Z}(z)$.
    
    \textbf{LHS $\geq$ RHS}: For any $\pi \in \Pi_{\cc}$, it can be written as $\pi = \mu_Z \otimes \pi_{Y(0), Y(1)}^{Z}$, and $\pi_{Y(0), Y(1)}^{z} \in \Pi(\mu_{Y(0)}^z, \mu_{Y(1)}^z)$ $\mu_Z$-almost everywhere. This is because for any measurable function $g$,
    \[
        \int g(y, z) \diff \mu_{Y(0), Z}(y, z) = \int g(y, z) \diff \pi_{Y(0), Z}(y, z) = \int_{\calZ} \int_{\calY} g(y_0, z) \diff \pi_{Y(0)}^z(y) \diff \mu_{Z}(z),
    \]
    and the argument on $(Y(1), Z)$ is similar.

    As a result, 
    \[
        \begin{aligned}
            \int h(y, y') \diff \pi(y,y',z) 
            = &  \int_{\calZ} \diff \mu_Z(z) \int_{\calY \times \calY} h(y, y') \diff \pi_{Y(0), Y(1)}^z(y,y') \\
            \geq & \int_{\calZ} \Wass_{h} (\mu_{Y(0)}^z, \mu_{Y(1)}^z) \diff \mu_Z(z).
        \end{aligned}
    \]
    Taking infimum on $\pi$ completes this part of proof.

\end{proof}

\section{Proofs of Section~\ref{sec:connection}}\label{sec:proof_basic}

\subsection{Proof of Theorem~\ref{thm:optimal_causalbounds}}
\begin{lemma}[$\Pi_{\cc}$ is compact]\label{lem:Pic_compact}
    $\Pi_{\cc}$ is a compact subset of $\calP(\calY^2 \times \calZ)$ under the weak topology.
\end{lemma}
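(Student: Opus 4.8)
The plan is to invoke Prokhorov's theorem. Since $\calY \subseteq \RR^{d_Y}$ and $\calZ \subseteq \RR^{d_Z}$ are closed, both are Polish spaces, hence so is $\calY^2 \times \calZ$; in particular the weak topology on $\calP(\calY^2 \times \calZ)$ is metrizable (e.g.\ by the L\'evy--Prokhorov metric), closedness can be checked along sequences, and a subset is weakly compact iff it is uniformly tight and weakly closed. So it suffices to verify that $\Pi_{\cc}$ is (i) uniformly tight and (ii) weakly closed.

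\emph{Tightness.} Fix $\epsilon > 0$. The prescribed marginals $\mu_{Y(0),Z}$ and $\mu_{Y(1),Z}$ and their common covariate marginal $\mu_Z$ are each tight, so choose compact sets $K_0, K_1 \subseteq \calY\times\calZ$ and $B \subseteq \calZ$ with $\mu_{Y(0),Z}(K_0^c) < \epsilon/3$, $\mu_{Y(1),Z}(K_1^c) < \epsilon/3$ and $\mu_Z(B^c) < \epsilon/3$. Let $A_k \subseteq \calY$ be the (compact) projection of $K_k$ onto the $\calY$-coordinate, and set $C := A_0 \times A_1 \times B$, a compact subset of $\calY^2\times\calZ$. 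For any $\pi \in \Pi_{\cc}$, using $K_k \subseteq A_k \times \calZ$ and a union bound,
\begin{align*}
\pi(C^c) &\le \pi\bigl(Y(0)\notin A_0\bigr) + \pi\bigl(Y(1)\notin A_1\bigr) + \pi\bigl(Z\notin B\bigr)\\
&= \mu_{Y(0),Z}\bigl(A_0^c\times\calZ\bigr) + \mu_{Y(1),Z}\bigl(A_1^c\times\calZ\bigr) + \mu_Z(B^c)\\
&\le \mu_{Y(0),Z}(K_0^c) + \mu_{Y(1),Z}(K_1^c) + \mu_Z(B^c) < \epsilon,
\end{align*}
which is the required uniform tightness of $\Pi_{\cc}$.

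\emph{Closedness.} For each $k$ the marginal map $\pi \mapsto \pi_{Y(k),Z}$ is the pushforward of $\pi$ under the continuous coordinate projection $(y_0,y_1,z) \mapsto (y_k,z)$, hence it is weak-to-weak continuous from $\calP(\calY^2\times\calZ)$ to $\calP(\calY\times\calZ)$. Concretely, if $\pi_n \to \pi$ weakly with $\pi_n \in \Pi_{\cc}$, then for every bounded continuous $f$ on $\calY\times\calZ$ one has $\int f\, d\pi_{Y(k),Z} = \lim_n \int f(y_k,z)\, d\pi_n = \lim_n \int f\, d\mu_{Y(k),Z} = \int f\, d\mu_{Y(k),Z}$, so $\pi_{Y(k),Z} = \mu_{Y(k),Z}$ and $\pi \in \Pi_{\cc}$. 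Thus $\Pi_{\cc}$ is weakly closed, and Prokhorov's theorem gives that it is weakly compact. (It is also nonempty, e.g.\ $\mu_Z \otimes (\mu_{Y(0)}^{Z} \otimes \mu_{Y(1)}^{Z}) \in \Pi_{\cc}$, though this is not needed.)

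I do not anticipate a genuine obstacle: this is the standard fact that a set of couplings with fixed marginals is weakly compact. The only point requiring mild care is the tightness estimate, where the covariate coordinate $z$ is \emph{shared} between the two prescribed marginals, so it must be controlled through $\mu_Z$ (equivalently, through either $K_0$ or $K_1$) rather than treated as an independent third factor.
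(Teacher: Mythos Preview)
Your proof is correct and follows essentially the same route as the paper: tightness plus weak closedness, then Prokhorov. The only cosmetic difference is that the paper obtains tightness by invoking the standard fact (Villani, Lemma~4.4) that sets of couplings with fixed marginals are tight, whereas you write out the explicit compact-set construction by hand; the closedness argument is identical.
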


\begin{proof}[Proof of Theorem~\ref{thm:optimal_causalbounds}]
    Given the function $h$, the estimand $V_{\cc}$ only depends on $\mu_{Y(0), Z}, \mu_{Y(1), Z}$, which are identifiable from the observed samples, thus (i) holds true. 

    For (ii), by definition of $V_{\cc}$, for any $\pi \in \Pi_{\cc}$, $\EE_{\pi}[h(Y(0), Y(1))] \geq V_{\cc}$, indicating that $V_{\cc}$ is a valid lower bound for $V^*$. Further, since $h$ is bounded and lower semicontinuous, $\EE_{\pi}[h(Y(0), Y(1))]$ is a lower semicontinuous function in $\pi$ with respect to the weak topology (\cite[Lemma 4.3]{villani2009optimal}). By Lemma~\ref{lem:Pic_compact}, $\Pi_{\cc}$ is compact, thus the minimum $V_{\cc}$ can be attained. 

    For (iii), for any $v = (1-\theta) V_{\cc} + \theta \tilde V_{\cc}$, where $\theta \in [0,1]$, there is a coupling $\pi_{v} = (1-\theta) \pi_{\cc} + \theta \tilde \pi_{\cc}$ such that $\pi_{v} \in \Pi_{\cc}$ (since $\Pi_{\cc}$ is a convex set) and $v = \EE_{\pi_{v}}[h(Y(0), Y(1))]$. Then, the PI set is convex and thus is an interval. 

    Since now $h$ is continuous, then by (ii), $\tilde V_{\cc}$ is the tightest upper bound for the PI set. As a result, the PI set is equal to $[V_{\cc}, \tilde V_{\cc}]$.
\end{proof}

\begin{remark}[Unbounded $h$]
    The results of Theorem~\ref{thm:optimal_causalbounds} can be extended to possibly unbounded function $h$ such that, there are functions $a(y) \in L^1(\mu_{Y(0)}), b(y) \in L^1(\mu_{Y(1)})$ ($f \in  L^1(\mu)$ denotes that $\int |f| \diff \mu < \infty$) and $|h(y_0, y_1)| \leq a(y_0) + b(y_1)$. For the proof, we can consider $\tilde h = h(y_0, y_1) - (a(y_0) + b(y_1))$ for the lower bound and $\tilde h' = a(y_0) + b(y_1) - h(y_0, y_1)$ for the upper bound.
\end{remark}

\subsection{Proof of Theorem~\ref{prop:con_adapted_topology}}

\begin{lemma}[Uniformly continuous kernel]\label{lemm:unicont_kernel}
    Suppose that $\calY, \calZ$ are both compact and $g_w(z) = \mu_{Y(w)}^{z}: \calZ \rightarrow \calP(\RR^{d_{Y}}) \ w=0,1$ are continuous under the weak topology. Then, for any $\delta > 0$, there is a constant $C(\delta)$ such that $\Wass_1\left(\mu_{Y(k)}^{z_0}, \mu_{Y(k)}^{z_1}\right) \leq \delta + C(\delta)\|z_0 - z_1\|_2$.
\end{lemma}

\begin{lemma}[Lipschitz continuity of $\Wass_h$]\label{lemm:LipWassh}
    For any $\epsilon \geq 0$, suppose that $h: \calY\times\calY \rightarrow \RR$ satisfies that for any $y, y' \in \calY$,
    \begin{align}\label{eq:hsmoothassp}
        \left| h(y) - h(y') \right| \leq L_h \lnorm{y - y'}{2} + \epsilon.
    \end{align}
    Then, for any $(\mu, \nu, \mu', \nu') \in \calP(\calY)$,
    \[  
        \left| \Wass_{h}(\mu, \nu) - \Wass_{h}(\mu', \nu') \right| \leq L_h  \left(\Wass_1(\mu, \mu') + \Wass_1(\nu, \nu')\right) + 2\epsilon.
    \]
\end{lemma}
\begin{proof}[Proof of Theorem~\ref{prop:con_adapted_topology}]
    For $\nu = \nu_{Y(0), Y(1), Z}$, we bound the following gap:
    \[
        \begin{aligned}
            & \left|V_{\cc} - \int \Wass_{h}\left(\nu_{Y(0)}^{z}, \nu_{Y(1)}^{z}\right) \diff \nu_{Z}(z)\right|\\
            = & \left|\int \Wass_{h}\left(\mu_{Y(0)}^{z}, \mu_{Y(1)}^{z}\right) \diff \mu_{Z}(z) - \int \Wass_{h}\left(\nu_{Y(0)}^{z}, \nu_{Y(1)}^{z}\right) \diff \nu_{Z}(z)\right|\\ 
            \leq & \underbrace{\left|\int \Wass_{h}\left(\mu_{Y(0)}^{z}, \mu_{Y(1)}^{z}\right) \diff \mu_{Z}(z) - \int \Wass_{h}\left(\mu_{Y(0)}^{z}, \nu_{Y(1)}^{z}\right) \diff \nu_{Z}(z)\right|}_{\text{(Term A)}}\\ 
            & +  \underbrace{\left|\int \Wass_{h}\left(\mu_{Y(0)}^{z}, \nu_{Y(1)}^{z}\right) \diff \nu_{Z}(z) - \int \Wass_{h}\left(\nu_{Y(0)}^{z}, \nu_{Y(1)}^{z}\right) \diff \nu_{Z}(z) \right|}_{\text{(Term B)}}.
        \end{aligned}
    \]
    Here, the first equation is due to Proposition~\ref{prop:recursive_form}.

    By Assp.~\ref{a:compact_domain}-\ref{a:cont_obj}, $h$ is continuous and supported on a compact domain, then $h$ is uniformly continuous. Therefore, for any $\epsilon > 0$, there is a constant $\widetilde C(\epsilon)$ such that for any $y, y' \in \calY$,
    \[
        |h(y) - h(y')| \leq \widetilde C(\epsilon)\lnorm{y - y'}{2} + \epsilon.
    \]
    Thus, Lemma~\ref{lemm:LipWassh} is applicable and for any $(\mu, \nu, \mu', \nu') \in \calP(\calY)$,
    \begin{align}\label{eq:eqA1}
        \left| \Wass_{h}(\mu, \nu) - \Wass_{h}(\mu', \nu') \right| \leq \widetilde C(\epsilon) \left(\Wass_1(\mu, \mu') + \Wass_1(\nu, \nu')\right) + 2\epsilon.
    \end{align}

    Similarly, by Lemma~\ref{lemm:unicont_kernel}, we have
    \begin{align}\label{eq:eqA2}
        \Wass_1\left(\mu_{Y(k)}^{z_0}, \mu_{Y(k)}^{z_1}\right) \leq  C(\epsilon)\|z_0 - z_1\|_2 + \epsilon.
    \end{align}

    For \textbf{(Term A)}: we highlight that we consider the optimal coupling $\pi \in \calP(\calY^2 \times \calZ^2)$ that is attained in the adapted Wasserstein distance $\Wass_{\bc}(\mu_{Y(1), Z}, \nu_{Y(1), Z})$ to connect $\mu_{Z}$ and $\nu_{Z}$, then for $\epsilon, \epsilon' > 0$,
    \[
        \begin{aligned}
            (\text{Term A}) = &\left|\int \Wass_{h}\left(\mu_{Y(0)}^{z}, \mu_{Y(1)}^{z}\right) \diff \mu_{Z}(z) - \int \Wass_{h}\left(\mu_{Y(0)}^{z}, \nu_{Y(1)}^{z}\right) \diff \nu_{Z}(z)\right|\\
            = & \left|\int \Wass_{h}\left(\mu_{Y(0)}^{z}, \mu_{Y(1)}^{z}\right) -  \Wass_{h}\left(\mu_{Y(0)}^{z'}, \nu_{Y(1)}^{z'}\right) \diff \pi_{Z, Z'}(z, z')\right|\\
            \leq & \int \left|\Wass_{h}\left(\mu_{Y(0)}^{z}, \mu_{Y(1)}^{z}\right) -  \Wass_{h}\left(\mu_{Y(0)}^{z'}, \nu_{Y(1)}^{z'}\right) \right|\diff \pi_{Z, Z'}(z, z')\\
            \leq & \widetilde  C(\epsilon) \int \Wass_{1}\left(\mu_{Y(0)}^{z}, \mu_{Y(0)}^{z'}\right) + \Wass_{1}\left(\mu_{Y(1)}^{z}, \nu_{Y(1)}^{z'}\right) \diff \pi_{Z, Z'}(z, z') + 2\epsilon\\
            \leq & \widetilde C(\epsilon) \int (C(\epsilon') \|z - z'\|_2 + \epsilon') + \Wass_{1}\left(\mu_{Y(1)}^{z}, \nu_{Y(1)}^{z'}\right) \diff \pi_{Z, Z'}(z, z') + 2 \epsilon\\
            \leq & \widetilde C(\epsilon) (C(\epsilon') \vee 1) \Wass_{\bc}(\mu_{Y(1), Z}, \nu_{Y(1), Z}) + 2 \epsilon + \widetilde C(\epsilon) \epsilon'.
        \end{aligned}
    \]
    Here, the second inequality is due to \eqref{eq:eqA1}; the third inequality is due to \eqref{eq:eqA2}; the last inequality is due to the definition of $\pi$, which achieves the \textit{optimality} in $\Wass_{\bc}(\mu_{Y(1), Z}, \nu_{Y(1), Z})$.

    For \textbf{(Term B)}: we have
    \[
        \begin{aligned}
            (\text{Term B}) = & \left|\int \Wass_{h}\left(\mu_{Y(0)}^{z}, \nu_{Y(1)}^{z}\right) \diff \nu_{Z}(z) - \int \Wass_{h}\left(\nu_{Y(0)}^{z}, \nu_{Y(1)}^{z}\right) \diff \nu_{Z}(z) \right|\\
            \leq & \int \left| \Wass_{h}\left(\mu_{Y(0)}^{z}, \nu_{Y(1)}^{z}\right)  - \Wass_{h}\left(\nu_{Y(0)}^{z}, \nu_{Y(1)}^{z}\right) \right| \diff \nu_{Z}(z)\\
            \leq & \widetilde  C(\epsilon) \int \Wass_1(\mu_{Y(0)}^{z}, \nu_{Y(0)}^{z}) \diff \nu_Z(z) + 2\epsilon,
        \end{aligned}
    \]
    where the last inequality is due to \eqref{eq:eqA1}.

    Now, we consider the optimal coupling $\tilde \pi \in \calP(\calY^2 \times \calZ^2)$ that is attained in $\Wass_{\bc}(\mu_{Y(0), Z}, \nu_{Y(0), Z})$, then
    \begin{align}\label{eq:marginal_to_couple}
    \begin{split}
            &\int \Wass_1(\mu_{Y(0)}^{z}, \nu_{Y(0)}^{z}) \diff \nu_Z(z)\\
            =&\int \Wass_1(\mu_{Y(0)}^{z'}, \nu_{Y(0)}^{z'}) \diff \tilde \pi(z,z')\\
            =&\int \Wass_1(\mu_{Y(0)}^{z'}, \nu_{Y(0)}^{z'}) - \Wass_1(\mu_{Y(0)}^{z'}, \mu_{Y(0)}^{z}) \diff \tilde \pi(z,z') + \int \Wass_1(\mu_{Y(0)}^{z'}, \mu_{Y(0)}^{z}) \diff \tilde \pi(z,z')\\
            \leq & \int \Wass_1(\nu_{Y(0)}^{z'}, \mu_{Y(0)}^{z}) \diff \tilde \pi(z,z') + \int \Wass_1(\mu_{Y(0)}^{z'}, \mu_{Y(0)}^{z}) \diff \tilde \pi(z,z')\\
            \leq & \int \Wass_1(\nu_{Y(0)}^{z'}, \mu_{Y(0)}^{z}) \diff \tilde \pi(z,z') + C(\epsilon') \int \lnorm{z - z'}{2} \diff \tilde \pi(z,z') + \epsilon'\\
            \leq & (C(\epsilon') \vee 1) \Wass_{\bc}(\mu_{Y(0), Z}, \nu_{Y(0), Z}) + \epsilon'. 
    \end{split}
    \end{align}
    Here, the first inequality is due to the triangle inequality of the metric $\Wass_1$; the second inequality is due to \eqref{eq:eqA2}; the last inequality is due to the definition of $\tilde \pi$.
    
    Therefore, we get
    \[
        (\text{Term B}) \leq \widetilde C(\epsilon) (C(\epsilon') \vee 1) \Wass_{\bc}(\mu_{Y(1), Z}, \nu_{Y(1), Z}) + 2 \epsilon + \widetilde C(\epsilon) \epsilon'.
    \]

    Finally, combine (Term A) and (Term B), and we get
    \[
        \begin{aligned}
            &\left|V_{\cc} - \int \Wass_{h}\left(\nu_{Y(0)}^{z}, \nu_{Y(1)}^{z}\right) \diff \nu_{Z}(z)\right| \\
            \leq & \widetilde C(\epsilon) (C(\epsilon') \vee 1) (\Wass_{\bc}(\mu_{Y(0), Z}, \nu_{Y(0), Z}) + \Wass_{\bc}(\mu_{Y(1), Z}, \nu_{Y(1), Z})) + 2(2 \epsilon + \widetilde C(\epsilon) \epsilon').
        \end{aligned}  
    \]
    Replace $\nu$ by the measures from the sequence of $(\nu_n, n \geq 1)$ and send $\epsilon', \epsilon \rightarrow 0$, we get the desired result.
\end{proof}

\section{Proofs of Section~\ref{sec:main_result}}\label{sec:proof_finite_sample}
\textbf{Notation.} In this section and the following, to simplify the notation, we will let $n_0 = n, n_1 = m$, and $\xi_0 = w, \xi_1 = \xi$.

\subsection{Proof of Theorem~\ref{thm:finite_sample}}
The proof of Theorem~\ref{thm:finite_sample} requires several useful results from \cite{backhoff2022estimating}. Since our definition of adapted empirical distribution is slightly different from that of \cite{backhoff2022estimating}, to make the paper self-contained, we include proofs of the lemmas in Section~\ref{sec:proof_lemma}.

\begin{remark}[Notation]\label{rmk:notation}
    We introduce the following notation, which is aligned with \cite{backhoff2022estimating}. For notational convenience, in this proof and the proofs of the associated lemmas and corollaries, we write $\mmu^{z_0}_{Y(0)}$ to represent $\mmu^{z_0}_{Y(0), n}$ and $\mmu^{z_1}_{Y(1)}$ to represent $\mmu^{z_1}_{Y(1), m}$. We also write $\hat \pi$ to represent $\hat \pi_{n,m}$. The constants, if not specified explicitly, are independent of $n,m$ but may depend on $d_Z, d_Y$, $L_Z, L_h$.
\end{remark}

Let $\Phi_r^n$ be the small cubes that partition $[0,1]^{d_Z}$ as defined in Definition \ref{defi:adapt_emp}, i.e.,
\begin{align*}
    \Phi_r^n \Let \left\{ (\varphi_r^n)^{-1}(\{x\}): x \in \varphi_r^n([0,1]^{d_Z}) \right\}.
\end{align*}
For the sets $G \subseteq [0,1]^{d_Z}$, we define the conditional distribution of $\mu$ given $G$ as
\[
    \mu^{G}_{Y(0)} = \frac{\int_{G} \mu^{\zeta}_{Y(0)} \diff \mu_{Z}(\zeta)}{\mu_{Z}(G)}.
\]

We also define the conditional distributions of $\hat \mu, \mmu$ given $G$ as
\begin{align*}
    \hat \mu^{G}_{Y(0)} \Let \frac{1}{|I_G|} \sum_{i \in I_G} \delta_{Y_i(0)}, \quad \mmu^{G}_{Y(0)} \Let \frac{1}{|\tilde I_G|} \sum_{i \in \tilde I_G} \delta_{Y_i(0)},
\end{align*}
where $I_G \Let \{i \in [n]: Z_i(0) \in G\}, \tilde I_G \Let \{i \in [n]: \varphi_r^n(Z_i(0)) \in G\}$.

To prove the finite-sample complexity result, we introduce the following lemmas.
\begin{lemma}[Variance of $\Wass_h$]\label{lemm:var_wassh}
    Under Assp.~\ref{a:h_func}, the following holds.
    \[
        \int \Wass_h(\mu_{Y(0)}^{z}, \mu_{Y(1)}^{z})^2 \diff \mu_Z(z) \leq 16 (L_hL_Z)^2 \int \|z\|_2^2 \diff \mu_Z(z) + C_{\mu, h},
    \]
    where $C_{\mu, h}$ is a constant that depends on $\mu, h$.
\end{lemma}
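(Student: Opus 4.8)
The plan is to reduce the claim to a pointwise-in-$z$ estimate showing that $\Wass_h(\mu_{Y(0)}^z,\mu_{Y(1)}^z)$ grows at most linearly in $\lnorm{z}{2}$, and then to square and integrate. First I would fix a reference covariate value $z^\star\in\calZ$ belonging to the $\mu_Z$-full-measure set on which the bound of Assumption~\ref{a:Lip_kernel} holds. Applying Lemma~\ref{lemm:LipWassh} to the pairs $(\mu_{Y(0)}^z,\mu_{Y(1)}^z)$ and $(\mu_{Y(0)}^{z^\star},\mu_{Y(1)}^{z^\star})$, and then invoking Assumption~\ref{a:Lip_kernel}, gives for $\mu_Z$-a.e.\ $z$
\[
    \Wass_h\!\left(\mu_{Y(0)}^z,\mu_{Y(1)}^z\right)\ \le\ \Wass_h\!\left(\mu_{Y(0)}^{z^\star},\mu_{Y(1)}^{z^\star}\right)+2L_hL_Z\lnorm{z-z^\star}{2}.
\]

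Next I would control the constant $A\Let\Wass_h(\mu_{Y(0)}^{z^\star},\mu_{Y(1)}^{z^\star})$. Using the product measure $\mu_{Y(0)}^{z^\star}\otimes\mu_{Y(1)}^{z^\star}$ as a feasible transport plan, the Lipschitz bound $|h(y_0,y_1)|\le|h(0,0)|+L_h(\lnorm{y_0}{2}+\lnorm{y_1}{2})$ implied by Assumption~\ref{a:h_func}, and Assumption~\ref{a:mom_Y} together with Jensen's inequality (to pass from the bounded third conditional moments to first moments), one obtains $A\le|h(0,0)|+2L_hM^{1/3}<\infty$, a finite constant determined by $h$, $L_h$, and $M$. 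Squaring the pointwise bound using $(a+b)^2\le 2a^2+2b^2$ and $\lnorm{z-z^\star}{2}^2\le 2\lnorm{z}{2}^2+2\lnorm{z^\star}{2}^2$, and integrating against $\mu_Z$, then yields
\[
    \int\Wass_h\!\left(\mu_{Y(0)}^z,\mu_{Y(1)}^z\right)^2\diff\mu_Z(z)\ \le\ 16(L_hL_Z)^2\int\lnorm{z}{2}^2\diff\mu_Z(z)+2A^2+16(L_hL_Z)^2\lnorm{z^\star}{2}^2 .
\]
Setting $C_{\mu,h}\Let 2A^2+16(L_hL_Z)^2\lnorm{z^\star}{2}^2$ gives the stated inequality; since $\calZ\subseteq[0,1]^{d_Z}$ by Assumption~\ref{a:compact_domain}, $\lnorm{z^\star}{2}^2\le d_Z$, so $C_{\mu,h}$ is finite and depends only on $\mu$ (through $L_Z$, $M$, $d_Z$) and $h$.

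There is no serious obstacle here; the argument is elementary. The only points that need care are choosing $z^\star$ inside the $\mu_Z$-almost-sure set of Assumption~\ref{a:Lip_kernel}, so that the displayed pointwise inequality remains valid under the integral, and verifying that the reference value $\Wass_h(\mu_{Y(0)}^{z^\star},\mu_{Y(1)}^{z^\star})$ is finite — which is exactly where the third-moment hypothesis (Assumption~\ref{a:mom_Y}) is used.
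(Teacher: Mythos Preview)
Your proposal is correct and follows essentially the same route as the paper: fix a reference point $z^\star$ (the paper's $z_0$), use Lemma~\ref{lemm:LipWassh} together with Assumption~\ref{a:Lip_kernel} to get $|\Wass_h(\mu_{Y(0)}^{z},\mu_{Y(1)}^{z})-\Wass_h(\mu_{Y(0)}^{z^\star},\mu_{Y(1)}^{z^\star})|\le 2L_hL_Z\lnorm{z-z^\star}{2}$, then apply $(a+b)^2\le 2a^2+2b^2$ twice and integrate. The only difference is that you take extra care to verify explicitly that the reference value $A=\Wass_h(\mu_{Y(0)}^{z^\star},\mu_{Y(1)}^{z^\star})$ is finite (via Assumptions~\ref{a:mom_Y} and~\ref{a:compact_domain}), whereas the paper simply absorbs it into the constant $C_{\mu,h}$ without comment.
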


\begin{lemma}[Gap between $\mmu$ and $\hat \mu$]\label{lemm:gap_mmu}
    Under Assp.~\ref{a:compact_domain}, we have
    \begin{align*}
        \Wass_{1}(\mmu_{Z(0)}, \hat \mu_{Z(0)}) \leq \sqrt{d_Z}n^{-r}.
    \end{align*}
\end{lemma}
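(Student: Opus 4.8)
The plan is to bound $\Wass_1(\mmu_{Z(0)},\hat\mu_{Z(0)})$ from above by the cost of a single explicit (non-optimal) coupling, namely the one that matches the two empirical measures index by index. Both measures are built from the same points $Z_1(0),\dots,Z_n(0)$ — one keeping them as they are, the other replacing each by the center of its cell — so this synchronous coupling has transport cost equal to the average displacement produced by the cell-center projection $\varphi_r^n$.

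Concretely, I would take
\[
    \pi \Let \frac1n\sum_{i=1}^n \delta_{(Z_i(0),\,\varphi_r^n(Z_i(0)))},
\]
and check, directly from Definitions~\ref{defi:adapt_emp} and~\ref{defi:dist_est}, that its marginals are $\hat\mu_{Z(0)}=\frac1n\sum_i\delta_{Z_i(0)}$ and $\mmu_{Z(0)}=\frac1n\sum_i\delta_{\varphi_r^n(Z_i(0))}$, so that $\pi\in\Pi(\mmu_{Z(0)},\hat\mu_{Z(0)})$ is admissible in the minimization defining $\Wass_1$. This already gives
\[
    \Wass_1(\mmu_{Z(0)},\hat\mu_{Z(0)}) \le \frac1n\sum_{i=1}^n \lnorm{Z_i(0)-\varphi_r^n(Z_i(0))}{2}.
\]

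To conclude, I would note that under Assumption~\ref{a:compact_domain} each $Z_i(0)$ lies in $[0,1]^{d_Z}$, hence in exactly one cube of the partition, and $\varphi_r^n(Z_i(0))$ is the center of that same cube; since the cube has edge length $n^{-r}$, both points lie in a set of diameter $\sqrt{d_Z}\,n^{-r}$, so $\lnorm{Z_i(0)-\varphi_r^n(Z_i(0))}{2}\le\sqrt{d_Z}\,n^{-r}$ (in fact at most half of this, but the weaker bound suffices). Averaging over $i$ yields the stated inequality. I do not expect a genuine obstacle here; the only points deserving a sentence of care are that $\varphi_r^n$ is well defined on all of $[0,1]^{d_Z}$ through a fixed tie-breaking rule on cell boundaries — which does not affect the displacement estimate — and that the data indeed lie in $[0,1]^{d_Z}$, which is exactly Assumption~\ref{a:compact_domain}.
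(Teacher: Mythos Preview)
Your proposal is correct and essentially identical to the paper's proof: the paper observes that $\mmu_{Z(0)}=(\varphi_r^n)_\#\hat\mu_{Z(0)}$ and bounds $\Wass_1$ by $\int\lnorm{\zeta-\varphi_r^n(\zeta)}{2}\,\diff\hat\mu_{Z(0)}(\zeta)\le\sup_{\zeta\in[0,1]^{d_Z}}\lnorm{\zeta-\varphi_r^n(\zeta)}{2}\le\sqrt{d_Z}\,n^{-r}$, which is exactly your synchronous coupling $\pi=(\mathrm{id},\varphi_r^n)_\#\hat\mu_{Z(0)}$ written out pointwise.
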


\begin{lemma}[Gap between $\mmu$ and $\mu$]\label{lemm:gap_mmu2}
Under Assp.~\ref{a:compact_domain} and~\ref{a:Lip_kernel}, we have
    \begin{align*}
        \int \Wass_{1}(\mmu^{z}_{Y(0)}, \mu^{z}_{Y(0)}) \diff \mmu_{Z(0)}(z) 
        \leq & \sum_{G \in \Phi_r^n} \mmu_{Z(0)}(G) \Wass_{1}(\mmu_{Y(0)}^{G}, \mu_{Y(0)}^{G}) + L_Z \sqrt{d_Z}  n^{-r}\\
        =& \sum_{G \in \Phi_r^n} \hat \mu_{Z(0)}(G) \Wass_{1}(\hat \mu_{Y(0)}^{G}, \mu_{Y(0)}^{G}) + L_Z \sqrt{d_Z}  n^{-r}. 
    \end{align*}
\end{lemma}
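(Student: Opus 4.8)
The plan is to exploit that $\mmu_{Z(0)}$ is a finitely supported measure sitting on the cell centers $\{c_G : G \in \Phi_r^n\}$, where $c_G = \varphi_r^n(z)$ for any $z \in G$, so the left-hand integral collapses to a finite sum. Concretely, because $\varphi_r^n(Z_i(0)) = c_G$ exactly when $Z_i(0) \in G$, we have $\tilde I_G = I_G$, hence $\mmu_{Z(0)}(\{c_G\}) = \hat\mu_{Z(0)}(G) = |I_G|/n$ and the conditional $\mmu^{c_G}_{Y(0)}$ equals $\mmu^G_{Y(0)} = \frac1{|I_G|}\sum_{i \in I_G}\delta_{Y_i(0)} = \hat\mu^G_{Y(0)}$. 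Consequently $\int \Wass_1(\mmu^z_{Y(0)},\mu^z_{Y(0)})\,\diff\mmu_{Z(0)}(z) = \sum_{G \in \Phi_r^n}\mmu_{Z(0)}(G)\,\Wass_1(\mmu^G_{Y(0)},\mu^{c_G}_{Y(0)})$, and the asserted equality between the two right-hand expressions of the lemma is immediate from $\tilde I_G = I_G$. It then suffices to prove the single inequality.

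For that I would control, cellwise, the discrepancy between the kernel evaluated at the center, $\mu^{c_G}_{Y(0)}$, and the kernel averaged over the cell, $\mu^G_{Y(0)} = \mu_Z(G)^{-1}\int_G \mu^\zeta_{Y(0)}\,\diff\mu_Z(\zeta)$. By the triangle inequality for $\Wass_1$ (Lemma~\ref{lem:tri_ineq}), $\Wass_1(\mmu^G_{Y(0)},\mu^{c_G}_{Y(0)}) \le \Wass_1(\mmu^G_{Y(0)},\mu^G_{Y(0)}) + \Wass_1(\mu^G_{Y(0)},\mu^{c_G}_{Y(0)})$. For the last term I would view $\mu^G_{Y(0)}$ as a mixture of the kernels $\{\mu^\zeta_{Y(0)}\}_{\zeta \in G}$ against the probability measure $\mu_Z(\cdot\cap G)/\mu_Z(G)$, apply the convexity/Jensen property of $\Wass_1$ (Lemma~\ref{lem:convex}, in its mixture form obtained by gluing optimal couplings), and then invoke the Lipschitz-kernel bound $\Wass_1(\mu^\zeta_{Y(0)},\mu^{c_G}_{Y(0)}) \le L_Z\|\zeta - c_G\|_2$ (Assp.~\ref{a:Lip_kernel}). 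Since every $\zeta \in G$ lies within $\ell_\infty$-distance $n^{-r}/2$ of $c_G$, we get $\|\zeta - c_G\|_2 \le \tfrac{\sqrt{d_Z}}{2}n^{-r} \le \sqrt{d_Z}\,n^{-r}$, so $\Wass_1(\mu^G_{Y(0)},\mu^{c_G}_{Y(0)}) \le L_Z\sqrt{d_Z}\,n^{-r}$ for every $G$. Summing over $G$ against the weights $\mmu_{Z(0)}(G)$, which sum to one, then yields the claimed bound, with the residual term $\sum_G\mmu_{Z(0)}(G)\Wass_1(\mmu^G_{Y(0)},\mu^G_{Y(0)})$ left intact.

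The argument is essentially bookkeeping once these three ingredients (discreteness of $\mmu_{Z(0)}$, triangle inequality, convexity plus the Lipschitz kernel) are assembled; the only genuinely delicate point is that Assp.~\ref{a:Lip_kernel} asserts the Lipschitz estimate only for $\mu_Z$-a.e. pair $(z_0,z_1)$, whereas the step above needs it at the specific, possibly $\mu_Z$-null, centers $c_G$. I would dispose of this by fixing, once and for all, a version of the kernel $z \mapsto \mu^z_{Y(k)}$ that is globally $L_Z$-Lipschitz on $\calZ$ for $\Wass_1$: the a.e.-Lipschitz kernel is uniformly continuous on a $\mu_Z$-full, hence dense, subset of $\mathrm{supp}(\mu_Z)$, so it extends to a Lipschitz map on the compact set $\calZ$ (Assp.~\ref{a:compact_domain}), and this redefinition changes the kernel only on a $\mu_Z$-null set, affecting nothing else. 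With this convention the cellwise estimate holds for all $\zeta \in G$, and the proof is complete.
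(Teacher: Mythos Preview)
Your proposal is correct and follows essentially the same route as the paper: reduce the integral to a finite sum over cell centers, split each summand via the triangle inequality into $\Wass_1(\mmu^G_{Y(0)},\mu^G_{Y(0)})$ plus $\Wass_1(\mu^G_{Y(0)},\mu^{c_G}_{Y(0)})$, bound the latter by convexity of $\Wass_1$ together with the Lipschitz-kernel assumption, and obtain the final equality from $\tilde I_G = I_G$ (the paper packages this as its Lemma~\ref{lemm:C7}). Your explicit treatment of the $\mu_Z$-a.e.\ qualifier in Assumption~\ref{a:Lip_kernel}---replacing the kernel by a globally Lipschitz version so that the estimate applies at the cell centers $c_G$---is a point the paper passes over silently, so your version is in fact slightly more careful.
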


\begin{lemma}[{\cite[Lemma 3.4]{backhoff2022estimating}}]\label{lemm:rate_emp}
    The following inequalities hold.
    \begin{enumerate}[label=(\roman*)]
        \item Under Assp.~\ref{a:compact_domain} and~\ref{a:sample_dist_random}, $\EE[\Wass_1(\hat \mu_{Z(0)}, \mu_{Z})] \leq C_1 R_{d_{Z}}(n)$.
        \item Under Assp.~\ref{a:compact_domain} and~\ref{a:sample_dist_random},
        \[
            \EE\left[\sum_{G \in \Phi_r^n} \hat \mu_{Z(0)}(G) \Wass_{1}(\hat \mu_{Y(0)}^{G}, \mu_{Y(0)}^{G})\right] \leq C_2R_{d_Y}(n^{1-rd_Z}).
        \]
    \end{enumerate}
    Here, $R_{d}(\cdot)$ is defined in \eqref{eq:R}, $C_1$ is a constant that depends on $d_Z$, and $C_2$ is a constant that depends on $d_Y$.
\end{lemma}

\begin{proof}[{Proof of Theorem~\ref{thm:finite_sample}}]
    We start with
    \begin{align*}
        |\hat V_{n, m} - V_{\cc}| = &  \left|\int \Wass_{h}(\mmu^{z_0}_{Y(0)}, \mmu^{z_1}_{Y(1)}) \diff \hat \pi(z_0, z_1) - V_{\cc}\right|. & (\textup{by Definition~\ref{defi:cot_est}})
    \end{align*}
    For the integrand, we apply Lemma~\ref{lemm:LipWassh} and get
    \begin{align}\label{eq:eq3}
        \left|\Wass_{h}(\mmu^{z_0}_{Y(0)}, \mmu^{z_1}_{Y(1)}) - \Wass_{h}(\mu^{z_0}_{Y(0)}, \mu^{z_1}_{Y(1)})\right| \leq L_h(\Wass_{1}(\mmu^{z_0}_{Y(0)}, \mu^{z_0}_{Y(0)}) + \Wass_{1}( \mmu^{z_1}_{Y(1)}, \mu^{z_1}_{Y(1)})).
    \end{align}
    As a result, we have
    \begin{align}
        & |\hat V_{n, m} - V_{\cc}|  \notag\\
        = & 
        \left|\int \Wass_{h}(\mmu^{z_0}_{Y(0)}, \mmu^{z_1}_{Y(1)}) - \Wass_{h}(\mu^{z_0}_{Y(0)}, \mu^{z_1}_{Y(1)}) \diff \hat \pi(z_0, z_1) + \int \Wass_{h}(\mu^{z_0}_{Y(0)}, \mu^{z_1}_{Y(1)}) \diff \hat \pi(z_0, z_1) - V_{\cc}\right|\notag\\
        \leq & \int \left|\Wass_{h}(\mmu^{z_0}_{Y(0)}, \mmu^{z_1}_{Y(1)}) - \Wass_{h}(\mu^{z_0}_{Y(0)}, \mu^{z_1}_{Y(1)})\right| \diff \hat \pi(z_0, z_1) + \left| \int \Wass_{h}(\mu^{z_0}_{Y(0)}, \mu^{z_1}_{Y(1)}) \diff \hat \pi(z_0, z_1) - V_{\cc}\right| \notag\\
        \leq 
        & L_h \underbrace{\int \Wass_{1}(\mmu^{z_0}_{Y(0)}, \mu^{z_0}_{Y(0)}) \diff \mmu_{Z(0)}(z_0) + \int \Wass_{1}(\mu^{z_1}_{Y(1)}, \mmu^{z_1}_{Y(1)}) \diff \mmu_{Z(1)}(z_1)}_{(\textup{Term A})} \qquad\ (\textup{by Equation~\eqref{eq:eq3}})\notag\\
        & + \underbrace{\left|\int \Wass_{h}(\mu^{z_0}_{Y(0)}, \mu^{z_1}_{Y(1)}) \diff \hat \pi(z_0, z_1) - V_{\cc}\right|}_{(\textup{Term B})}.\label{eq:eq494}
    \end{align}

    For \textbf{(Term B)}, again, we apply Lemma~\ref{lemm:LipWassh} to the integrand and get
    \begin{align*}
        \left|\Wass_{h}(\mu^{z_0}_{Y(0)}, \mu^{z_1}_{Y(1)}) - \Wass_{h}(\mu^{z_0}_{Y(0)}, \mu^{z_0}_{Y(1)})\right| \leq & L_h \Wass_1(\mu^{z_0}_{Y(1)}, \mu^{z_1}_{Y(1)}) \leq  L_h L_Z \lnorm{z_0 - z_1}{2},
    \end{align*}
    where the second inequality is due to Assp.~\ref{a:Lip_kernel}. Therefore, we have
    \begin{align}
        & (\textup{Term B}) \notag\\
        \leq & \left|\int \Wass_{h}(\mu^{z_0}_{Y(0)}, \mu^{z_1}_{Y(1)}) - \Wass_{h}(\mu^{z_0}_{Y(0)}, \mu^{z_0}_{Y(1)}) \diff \hat \pi(z_0, z_1)\right| + \left| \int \Wass_{h}(\mu^{z_0}_{Y(0)}, \mu^{z_0}_{Y(1)}) \diff \hat \pi(z_0, z_1) - V_{\cc}\right|\notag\\
        \leq &\underbrace{L_h L_Z \int \lnorm{z_0 - z_1}{2} \diff \hat \pi(z_0, z_1)}_{\textup{(Term C)}} + \underbrace{\left| \int \Wass_{h}(\mu^{z_0}_{Y(0)}, \mu^{z_0}_{Y(1)}) \diff \mmu_{Z(0)}(z_0) - V_{\cc}\right|}_{\textup{(Term D)}}. \label{eq:518}
    \end{align}

    For \textbf{(Term D)}, we have
    \begin{align*}
        & \EE\left[\left| \int \Wass_{h}(\mu^{z_0}_{Y(0)}, \mu^{z_0}_{Y(1)}) \diff \mmu_{Z(0)}(z_0) - V_{\cc}\right|\right]\\
        \leq & \EE\left[\left| \int \Wass_{h}(\mu^{z_0}_{Y(0)}, \mu^{z_0}_{Y(1)}) (\diff \mmu_{Z(0)}(z_0) - \diff \hat \mu_{Z(0)}(z_0))\right|\right] + \EE\left[\left| \int \Wass_{h}(\mu^{z_0}_{Y(0)}, \mu^{z_0}_{Y(1)}) \diff \hat \mu_{Z(0)}(z_0) - V_{\cc}\right|\right]. 
    \end{align*}
    By Lemma~\ref{lemm:LipWassh} and Assp.~\ref{a:Lip_kernel}, the mapping $z \mapsto \Wass_{h}(\mu^{z}_{Y(0)}, \mu^{z}_{Y(1)})$ is $(2L_h L_Z)$-Lipschitz continuous. Then, by Lemma~\ref{lemm:gap_mmu}, 
    \begin{align*}
        \EE\left[\left| \int \Wass_{h}(\mu^{z_0}_{Y(0)}, \mu^{z_0}_{Y(1)}) (\diff \mmu_{Z(0)}(z_0) - \diff \hat \mu_{Z(0)}(z_0))\right|\right]
        \leq 2 L_h L_Z \Wass_1(\mmu_{Z(0)}, \hat \mu_{Z(0)}) \leq 2L_h L_Z \sqrt{d_Z} n^{-r}.
    \end{align*}
    By Assp.~\ref{a:sample_dist_random}, Lemma~\ref{lemm:var_wassh} and Markov's inequality, 
    \begin{align*}
        \EE\left[\left| \int \Wass_{h}(\mu^{z_0}_{Y(0)}, \mu^{z_0}_{Y(1)}) \diff \hat \mu_{Z(0)}(z_0) - V_{\cc}\right|\right]
        \leq & \left(16 (L_hL_Z)^2 \int \|z\|_2^2 \diff \mu_Z(z) + C_{\mu, h}\right)^{\frac{1}{2}} n^{-\frac{1}{2}}\\
        \leq & \hat C n^{-r},
    \end{align*}
    where $r = \frac{1}{d_Z + 2 \vee d_Y} \leq \frac{1}{3}$.
    
    Note that, the $n^{-\frac{1}{2}}$ above can be replaced by $(m \vee n)^{-\frac{1}{2}}$ since $Z(0), Z(1)$ are symmetric thus can be swapped in the derivation above.

    For \textbf{(Term C)}, we apply Assp.~\ref{a:coupling_gap}.
    Finally, combine (Term C) and (Term D), and we arrive at
    \begin{align*}
        \EE[\textup{(Term B)}] \leq & C' (m \wedge n)^{-r}.
    \end{align*}
    
    For \textbf{(Term A)}, we present the upper bound for $\int \Wass_{1}(\mmu^{z_0}_{Y(0)}, \mu^{z_0}_{Y(0)}) \diff \mmu_{Z(0)}(z_0)$ and the remaining part can be derived similarly.

    By Lemma~\ref{lemm:gap_mmu2}, 
    \begin{align*}
        \int \Wass_{1}(\mmu^{z}_{Y(0)}, \mu^{z}_{Y(0)}) \diff \mmu_{Z(0)}(z) \leq \sum_{G \in \Phi_r^n} \hat \mu_{Z(0)}(G) \Wass_{1}(\hat \mu_{Y(0)}^{G}, \mu_{Y(0)}^{G}) + L_Z \sqrt{d_Z}  n^{-r}.
    \end{align*}
    Then, apply Lemma~\ref{lemm:rate_emp}, we get
    \begin{align*}
        \EE\left[\int \Wass_{1}(\mmu^{z}_{Y(0)}, \mu^{z}_{Y(0)}) \diff \mmu_{Z(0)}(z)\right]
        \leq & \EE\left[\sum_{G \in \Phi_r^n} \hat \mu_{Z(0)}(G) \Wass_{1}(\hat \mu_{Y(0)}^{G}, \mu_{Y(0)}^{G})\right] + L_Z \sqrt{d_Z}  n^{-r} \\
        \leq & C_2 R_{d_Y}(n^{1-rd_Z}) + L_Z \sqrt{d_Z}  n^{-r}.
    \end{align*}
    $R_{d_Y}(\cdot)$ is defined in \eqref{eq:R}, and $C_2$ is a constant that depends on $d_Y$ and $\max_{y \in \calY} \lnorm{y}{2}$.

    Therefore, we get
    \begin{align*}
        \EE[(\textup{Term A})] \leq \tilde C(R_{d_Y}((m \wedge n)^{1-rd_Z}) + (m \wedge n)^{-r}).
    \end{align*}
    
    Finally, combine (Term A) and (Term B), we get
    \begin{align*}
        \EE\left[|\hat V_{n,m} - V_{\cc}|\right] \leq \bar C(R_{d_Y}((m \wedge n)^{1-rd_Z}) + (m \wedge n)^{-r}).
    \end{align*}

    Let $r = \frac{1}{d_Z + 2 \vee d_Y}$, which minimizes the right-hand side on the order of $m \wedge n$, we get
    \[
        \EE\left[|\hat V_{n,m} - V_{\cc}|\right] \leq 
    \begin{cases}
        \bar C (m \wedge n)^{-\frac{1}{d_Z + 2 \vee d_Y}} \log(m \wedge n) & \textup{if}\  d_Y \neq 2,\\
        \bar C (m \wedge n)^{-\frac{1}{d_Z + 2 \vee d_Y}} & \textup{if}\  d_Y = 2.
    \end{cases}
    \]
\end{proof}

\begin{remark}[Choice of $\hat \pi_{n,m}$ in \eqref{eq:pi_choice_ot}]\label{rmk:G3} When $\hat \pi_{n,m}$ is the optimal coupling of the marginals, by Lemma~\ref{lemm:gap_mmu} and the triangle inequality (Lemma~\ref{lem:tri_ineq}), 
    \begin{align*}
        & \int \lnorm{z_0 - z_1}{2} \diff \hat \pi(z_0, z_1)\\
        = & \EE\left[\Wass_1(\mmu_{Z(0)}, \mmu_{Z(1)})\right] \\
        \leq & \EE\left[\Wass_1(\mmu_{Z(0)}, \hat \mu_{Z(0)})\right] + \EE\left[\Wass_1(\hat \mu_{Z(0)}, \mu_{Z})\right] + \EE\left[\Wass_1(\hat \mu_{Z(1)}, \mu_{Z})\right] + \EE\left[\Wass_1(\hat \mu_{Z(1)}, \mmu_{Z(1)})\right]\\
        \leq & 2 \sqrt{d_Z}(n^{-r} + m^{-r}) + \EE\left[\Wass_1(\hat \mu_{Z(0)}, \mu_{Z})\right] +  \EE\left[\Wass_1(\hat \mu_{Z(1)}, \mu_{Z})\right].
    \end{align*}

    Then, by Lemma~\ref{lemm:rate_emp} (i), we get
    \[
        \int \lnorm{z_0 - z_1}{2} \diff \hat \pi(z_0, z_1) \leq 2 \sqrt{d_Z}(n^{-r} + m^{-r}) + 2 C_1 R_{d_Z}(n \wedge m).
    \]
    Then, for $r = \frac{1}{d_Z + 2 \vee d_Y}$, we have 
    \[
        R_{d_Z}(n \wedge m) \leq C'(n \wedge m)^{-\frac{1}{2 \vee d_Z + 1}} \leq C'(n \wedge m)^{-r}.
    \]
    Therefore, $\hat \pi_{n,m}$ in \eqref{eq:pi_choice_ot} satisfies Assp.~\ref{a:coupling_gap} when $r = \frac{1}{d_Z + 2 \vee d_Y}$.
\end{remark}

\subsection{Proof of Theorem~\ref{rmk:robust}}
\begin{proof}[Proof of Theorem~\ref{rmk:robust}]
    If the generating distribution of $Z(0), Z(1)$ are not equal to $\mu_Z$, i.e., $\mu_{Z(0)} \neq \mu_{Z}$ and $\mu_{Z(1)} \neq \mu_{Z}$, then for $r = \frac{1}{d_Z + 2 \vee d_Y}$, according to the derivation in Remark~\ref{rmk:G3}, we have
    \[
        \begin{aligned}
            & \int \lnorm{z_0 - z_1}{2} \diff \hat \pi(z_0, z_1)\\
            \leq & 2 \sqrt{d_Z}(n^{-r} + m^{-r}) + \EE\left[\Wass_1(\hat \mu_{Z(0)}, \mu_{Z})\right] +  \EE\left[\Wass_1(\hat \mu_{Z(1)}, \mu_{Z})\right]\\
            \leq & 2 \sqrt{d_Z}(n^{-r} + m^{-r}) + \EE\left[\Wass_1(\hat \mu_{Z(0)}, \mu_{Z(0)})\right] +  \EE\left[\Wass_1(\hat \mu_{Z(1)}, \mu_{Z(1)})\right] + \Wass_1(\mu_{Z(0)}, \mu_{Z}) + \Wass_1(\mu_{Z(1)}, \mu_{Z})\\
            \leq & C'(n \wedge m)^{-r} + 2\epsilon.
        \end{aligned}
    \]

    Plug this condition back to the derivation in the proof of Theorem~\ref{thm:finite_sample}, specifically \eqref{eq:518}, we get the desired result.
\end{proof}

\subsection{Proof of Theorem~\ref{thm:consist}}
\begin{lemma}[Convergence of adapted empirical distribution]\label{lemm:conv_emp_dist}
    Under Assp.~\ref{a:compact_domain} - \ref{a:Lip_kernel}, then almost surely, 
    \[
    \lim_{n \rightarrow \infty} \Wass_{\bc}(\mmu_{Y(0), Z(0), n}, \mu_{Y(0), Z}) = \lim_{m \rightarrow \infty} \Wass_{\bc}(\mmu_{Y(1), Z(1), m}, \mu_{Y(1), Z}) = 0.
    \]
\end{lemma}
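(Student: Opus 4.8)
\noindent\emph{Proof plan.} The plan is to treat only the control group (the treatment case is verbatim the same) and to abbreviate $\mmu_n \Let \mmu_{Y(0),Z(0),n}$ and $\mu \Let \mu_{Y(0),Z}$, working throughout with the reformulation of $\Wass_{\bc}$ in Definition~\ref{defi:adapted_wassI} and with the i.i.d.\ samples of Assp.~\ref{a:sample_dist_random}. Writing $G(z)\in\Phi_r^n$ for the cell containing $z$ and $\varphi_r^n(z)$ for its center, I would introduce the auxiliary measure $\tmu_n$ whose $Z$-marginal is $\mmu_{Z(0),n}$ and whose kernel at a cell center $x$ is the true cell average $\mu_{Y(0)}^{G(x)}$, and then split, using that $\Wass_{\bc}$ is a metric,
\[
\Wass_{\bc}(\mmu_n,\mu) \;\le\; \Wass_{\bc}(\mmu_n,\tmu_n) \;+\; \Wass_{\bc}(\tmu_n,\mu).
\]

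For the first term, since $\mmu_n$ and $\tmu_n$ have the same $Z$-marginal I would insert the identity coupling on $Z$ into the reformulation, giving $\Wass_{\bc}(\mmu_n,\tmu_n)\le \sum_{G\in\Phi_r^n}\mmu_{Z(0),n}(G)\,\Wass_1(\mmu_{Y(0),n}^G,\mu_{Y(0)}^G)$, which equals $\sum_{G}\hat\mu_{Z(0)}(G)\,\Wass_1(\hat\mu_{Y(0)}^G,\mu_{Y(0)}^G)$ because projecting covariates onto cell centers does not change cell membership (exactly as used in the proof of Lemma~\ref{lemm:gap_mmu2}). Lemma~\ref{lemm:rate_emp}(ii) bounds the expectation of this quantity by $C_2 R_{d_Y}(n^{1-rd_Z})\to 0$ (for $r$ with $rd_Z<1$); upgrading this to almost-sure convergence is the crux, discussed last. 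For the second term I would take the pushforward through $(\varphi_r^n\circ\mathrm{pr}_1,\mathrm{pr}_2)$ of an optimal $\Wass_1$-coupling $\gamma$ of $\hat\mu_{Z(0)}$ and $\mu_Z$; since $\lnorm{\varphi_r^n(z)-z}{2}\le\sqrt{d_Z}\,n^{-r}$, convexity of $\Wass_1$ (Lemma~\ref{lem:convex}) applied to the average defining $\mu_{Y(0)}^{G(z)}$, together with the Lipschitz-kernel bound (Assp.~\ref{a:Lip_kernel}), yields $\Wass_1(\mu_{Y(0)}^{G(z)},\mu_{Y(0)}^{z'})\le L_Z\sqrt{d_Z}\,n^{-r}+L_Z\lnorm{z-z'}{2}$, and hence
\[
\Wass_{\bc}(\tmu_n,\mu)\;\le\;(1+L_Z)\sqrt{d_Z}\,n^{-r}\;+\;(1+L_Z)\,\Wass_1(\hat\mu_{Z(0)},\mu_Z).
\]
The first summand is deterministic and vanishes; the second vanishes almost surely because $\calZ\subseteq[0,1]^{d_Z}$ is bounded (Assp.~\ref{a:compact_domain}), so the Fournier--Guillin mean bound combined with McDiarmid's inequality (the map $\hat\mu_{Z(0)}\mapsto\Wass_1(\hat\mu_{Z(0)},\mu_Z)$ has bounded differences of order $1/n$) and Borel--Cantelli force almost-sure convergence.

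The hard part will be the almost-sure — not merely $L^1$ or in-probability — control of the within-cell kernel error $\sum_{G}\hat\mu_{Z(0)}(G)\,\Wass_1(\hat\mu_{Y(0)}^G,\mu_{Y(0)}^G)$: since the cells shrink with $n$ this is not a direct strong law, and Lemma~\ref{lemm:rate_emp}(ii) gives only a bound in expectation. I would handle it in one of two ways. Option (a): reduce to the almost-sure convergence of the adapted empirical measure of \cite{backhoff2022estimating} by observing that additionally discretizing $Y$ onto a fine regular grid changes $\mmu_n$ by at most the grid mesh in $\Wass_{\bc}$ (same $Z$-marginal, identity coupling on $Z$), so their result transfers through the triangle inequality after choosing matching discretization scales. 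Option (b): give a self-contained Borel--Cantelli argument after truncating each $Y_i(0)$ at a slowly growing level, using the uniform third-moment bound of Assp.~\ref{a:mom_Y} to make the truncation bias summably small and McDiarmid's inequality on the now bounded truncated samples to obtain summable deviation probabilities. Everything else in the argument is either a deterministic coarsening error of order $n^{-r}$ or the classical almost-sure convergence of the un-adapted empirical covariate measure on a bounded domain.
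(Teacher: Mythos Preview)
Your proposal is correct and, in fact, more detailed than what the paper offers. The paper states this lemma immediately before the proof of Theorem~\ref{thm:consist} and invokes it there, but gives no separate proof in the appendix; the lemma is effectively justified by the earlier sentence that ``\cite{backhoff2022estimating} shows that $(\mmu_{Y(k), Z(k), n}, k = 0,1)$ converges to their weak limits under $\Wass_{\bc}$'', together with the observation (following Definition~\ref{defi:dist_est}) that the present construction differs from theirs only in not discretizing $Y$. In other words, the paper's implicit argument is precisely your Option~(a): compare $\mmu_n$ to the fully discretized adapted empirical measure of \cite{backhoff2022estimating} via the triangle inequality for $\Wass_{\bc}$ and cite their almost-sure convergence result.

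Your decomposition through the auxiliary $\tmu_n$, the treatment of $\Wass_{\bc}(\tmu_n,\mu)$ via Assp.~\ref{a:Lip_kernel} and the a.s.\ convergence of $\Wass_1(\hat\mu_{Z(0)},\mu_Z)$ on the bounded domain, and your identification of the a.s.\ (rather than in-$L^1$) upgrade for the within-cell error $\sum_G \hat\mu_{Z(0)}(G)\Wass_1(\hat\mu_{Y(0)}^G,\mu_{Y(0)}^G)$ as the only nontrivial step, are all absent from the paper. Option~(b) is a genuinely self-contained alternative the paper does not pursue: with truncation level $M_n\to\infty$ and $M_n = o(\sqrt{n/\log n})$, the bounded-difference constant for the truncated functional is $O(M_n/n)$, so McDiarmid plus Borel--Cantelli handles the concentration, while the truncation remainder is controlled a.s.\ through $\lnorm{Y}{2}\,\mathbf{1}\{\lnorm{Y}{2}>M_n\}\le \lnorm{Y}{2}^3/M_n^2$ and the SLLN for $\tfrac{1}{n}\sum_i\lnorm{Y_i(0)}{2}^3$ (using Assp.~\ref{a:mom_Y}). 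This is more work than citing \cite{backhoff2022estimating}, but it buys independence from the details of their compactness and discretization hypotheses on the $Y$-coordinate.
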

\begin{proof}[Proof of Theorem~\ref{thm:consist}]
    The proof strategy of Theorem~\ref{thm:consist} is similar to that of Theorem~\ref{thm:finite_sample}. By \eqref{eq:eq494}, we have
    \[
        \begin{aligned}
        & |\hat V_{n,m} - V_{\cc}| \\
        \leq 
        & \underbrace{\int \Wass_{1}(\mmu^{z_0}_{Y(0), n}, \mu^{z_0}_{Y(0)}) \diff \mmu_{Z(0), n}(z_0) + \int \Wass_{1}(\mu^{z_1}_{Y(1)}, \mmu^{z_1}_{Y(1), m}) \diff \mmu_{Z(1), m}(z_1)}_{(\textup{Term A})}\\
        & + \underbrace{\left|\int \Wass_{h}(\mu^{z_0}_{Y(0)}, \mu^{z_1}_{Y(1)}) \diff \hat \pi_{n, m}(z_0, z_1) - V_{\cc}\right|}_{(\textup{Term B})}.
    \end{aligned}
    \]

    For \textbf{(Term A)}, using a similar argument for Eq.~\eqref{eq:marginal_to_couple}, we get
    \[
        \text{(Term A)} \leq (L_Z \vee 1) (\Wass_{\bc}(\mmu_{Y(0), Z(0), n}, \mu_{Y(0), Z}) + \Wass_{\bc}(\mmu_{Y(1), Z(1), m}, \mu_{Y(1), Z})).
    \]
    By Lemma~\ref{lemm:conv_emp_dist}, we get (Term A) converges to zero almost surely.

    For \textbf{(Term B)}, by Assp.~\ref{a:compact_domain},~\ref{a:Lip_kernel}, $(z_0, z_1) \mapsto \Wass_{h}(\mu^{z_0}_{Y(0)}, \mu^{z_1}_{Y(1)})$ is a bounded continuous function. Indeed, by Assp.~\ref{a:Lip_kernel} and Lemma~\ref{lemm:LipWassh},
    \[
        \begin{aligned}
             \Wass_{h}(\mu^{z_0}_{Y(0)}, \mu^{z_1}_{Y(1)}) -  \Wass_{h}(\mu^{z_0'}_{Y(0)}, \mu^{z_1'}_{Y(1)})
             \leq & L_h (\Wass_1(\mu^{z_0}_{Y(0)}, \mu^{z_0'}_{Y(0)}) + \Wass_1(\mu^{z_1}_{Y(1)}, \mu^{z_1'}_{Y(1)})) \\
             \leq & L_h (\lnorm{z_0 - z_0'}{2} + \lnorm{z_1 - z_1'}{2}).
        \end{aligned}
    \]
    Further, by Assp.~\ref{a:compact_domain}, we get that $\Wass_{h}(\mu^{z_0}_{Y(0)}, \mu^{z_1}_{Y(1)})$ is bounded on $\calZ$.

    Then, since $\hat \pi_{n, m}$ weakly converges to $(\mathrm{id}, \mathrm{id})_{\#}\mu_Z$ a.s., we get 
    \[
        \begin{aligned}
            &\lim_{n,m\rightarrow\infty} \int \Wass_{h}(\mu^{z_0}_{Y(0)}, \mu^{z_1}_{Y(1)}) \diff \hat \pi_{n, m}(z_0, z_1) - V_{\cc}\\
            =& \lim_{n,m\rightarrow\infty} \int \Wass_{h}(\mu^{z_0}_{Y(0)}, \mu^{z_1}_{Y(1)}) \diff \hat \pi_{n, m}(z_0, z_1) - \int \Wass_{h}(\mu^{z}_{Y(0)}, \mu^{z}_{Y(1)}) \diff \mu_Z(z) = 0.
        \end{aligned}
    \]
    Thus, (Term B) converges to zero almost surely. The desired result is proved.
\end{proof}

\subsection{Proof of Theorem~\ref{thm:finite_sampleII}}
\begin{lemma}[TV bound]\label{lem:TV_bound}
    Suppose $Z, Z' \in \calZ$, and $r = \sup_{z, z' \in \calZ} \lnorm{z - z'}{2} < \infty$. Then,
    \[
        \Wass_{1}(Z,Z') \leq r \times d_{\text{TV}}(Z,Z') = \frac{r}{2}\int |\diff \mu_Z - \diff \mu_{Z'}|.
    \]
\end{lemma}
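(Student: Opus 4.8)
The plan is to exploit the classical coupling (Strassen-type) representation of total variation, namely
\[
    d_{\text{TV}}(\mu_Z,\mu_{Z'}) = \min_{\pi\in\Pi(\mu_Z,\mu_{Z'})}\pi\big(\{(z,z'): z\neq z'\}\big),
\]
combined with the trivial pointwise bound $\lnorm{z-z'}{2}\le r\,\mathbf 1(z\neq z')$, which holds on $\calZ$ since $r=\sup_{z,z'\in\calZ}\lnorm{z-z'}{2}$. Intuitively, total variation is the optimal transport cost for the $0$–$1$ cost $\mathbf 1(z\neq z')$, and on a set of diameter $r$ the Euclidean cost is dominated by $r$ times that cost, so $\Wass_1\le r\cdot d_{\text{TV}}$.

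Concretely, I would first construct the maximal coupling. Write $\delta\Let d_{\text{TV}}(\mu_Z,\mu_{Z'})$; if $\delta=0$ the two measures coincide and the inequality is trivial, so assume $\delta>0$. Let $\lambda\Let\mu_Z\wedge\mu_{Z'}$ be the largest measure dominated by both $\mu_Z$ and $\mu_{Z'}$, which satisfies $\lambda(\calZ)=1-\delta$, and set $\alpha\Let(\mu_Z-\lambda)/\delta$ and $\beta\Let(\mu_{Z'}-\lambda)/\delta$, which are probability measures. Then
\[
    \pi^*\Let(\mathrm{id},\mathrm{id})_{\#}\lambda \;+\; \delta\,(\alpha\otimes\beta)
\]
belongs to $\Pi(\mu_Z,\mu_{Z'})$ and puts mass $\lambda(\calZ)=1-\delta$ on the diagonal, hence $\pi^*(\{z\neq z'\})\le\delta$.

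Next I would bound the $\Wass_1$ cost against this particular coupling:
\[
    \Wass_1(\mu_Z,\mu_{Z'})\le\int\lnorm{z-z'}{2}\,\diff\pi^*(z,z')
    =\int_{\{z\neq z'\}}\lnorm{z-z'}{2}\,\diff\pi^*(z,z')
    \le r\,\pi^*(\{z\neq z'\})\le r\,\delta,
\]
where the first inequality uses that $\Wass_1$ is an infimum over couplings and the second uses $\lnorm{z-z'}{2}\le r$ on $\calZ$. Finally, the displayed equality in the statement is the standard identity $d_{\text{TV}}(\mu_Z,\mu_{Z'})=\tfrac12\int|\diff\mu_Z-\diff\mu_{Z'}|$, obtained by comparing densities with respect to a common dominating measure (e.g. $\mu_Z+\mu_{Z'}$).

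The only genuinely delicate point is the existence and mass of the maximal coupling in the non-discrete setting — that $\mu_Z\wedge\mu_{Z'}$ is well defined and has total mass exactly $1-\delta$, and that $\pi^*$ is a legitimate probability measure on $\calZ^2$. This is entirely classical (it follows from the Hahn–Jordan decomposition of the signed measure $\mu_Z-\mu_{Z'}$) and only warrants a citation; all remaining steps are immediate from the definitions.
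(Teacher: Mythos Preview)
Your proof is correct and follows essentially the same approach as the paper: both rely on the coupling representation $d_{\text{TV}}=\inf_{\pi\in\Pi(\mu_Z,\mu_{Z'})}\pi(\{z\neq z'\})$ together with the pointwise bound $\lnorm{z-z'}{2}\le r\,\mathbf 1(z\neq z')$. The only minor difference is that the paper compares the two integrands directly over all couplings and takes infima, so it never needs to construct the maximal coupling explicitly; your explicit construction is perfectly fine but slightly more than necessary.
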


\begin{lemma}[Linear functional gap]\label{lem:linear_gap}
    Suppose that $(Z_i, 1\leq i \leq n)$ are i.i.d.~drawn from $\mu \in \calP([0,1]^{d_Z})$, and let $\hat \mu_{Z(0), n} = \sum_{i=1}^n \hat w_i \delta_{Z_i}$, where $\hat w_i$ is defined as
    \begin{align*}
    \hat w_i = \frac{(1-\hat e(Z_i(0)))^{-1}}{\sum_{l=1}^n (1-\hat e(Z_l(0)))^{-1}},\  i \in [n].
    \end{align*}

    Additionally, let $\tilde \mu_{Z(0), n} = \sum_{i=1}^{n} w_i \delta_{Z_i(0)}$, where $w_i$ is defined as
    \[
        \begin{aligned}    
        w_i = \frac{\frac{\diff \mu}{\diff \mu_{|W = 0}}(Z_i(0))}{\sum_{i=1}^{n} \frac{\diff \mu}{\diff \mu_{|W = 0}}(Z_i(0))}\quad i\in [n].
        \end{aligned}
    \]
    
    Under Assp.~\ref{assu:overlap} and~\ref{a:est_propensity_score}, we have
    \[
        \EE[\Wass_1(\hat \mu_{Z(0), n}, \tilde \mu_{Z(0), n})] \leq \frac{\sqrt{d_Z}}{2\delta^3}n^{-\frac{1}{2}} + \frac{\sqrt{d_Z} C_{\textup{w}}}{2 \delta} n^{-r}.
    \]
\end{lemma}

\begin{lemma}[Empirical Wasserstein convergence rate for reweighted empirical distribution]\label{lem:Wass_conv_reweight}
    Under the same condition of Lemma~\ref{lem:linear_gap}, we have 
    \[
    \EE[\Wass_1(\hat \mu_{Z(0), n}, \mu_{Z(0), n})] \leq \delta^{-2} C R_{d_Z}(n) + \frac{\sqrt{d_Z}}{2\delta^3}n^{-\frac{1}{2}} + \frac{\sqrt{d_Z} C_{\textup{w}}}{2 \delta} n^{-r},\]
    where $C$ is a constant depending on $d_Z$.
\end{lemma}

\begin{lemma}[Discretization gap]\label{lem:discrete_gapII}
    Under Assp.~\ref{a:lip_propensity_score} and~\ref{a:bound_e} with
    \[
        \begin{aligned}
        \mmu_{Z(0), n} = \sum_{G \in \Phi_r^n} \frac{(1-\hat e(\varphi_r^n(G)))^{-1} |\{i: Z_i(0) \in G\}|}{\sum_{G} (1-\hat e(\varphi_r^n(G)))^{-1} |\{i: Z_i(0) \in G\}|}\delta_{\varphi_r^n(Z_i(0))},   
        \end{aligned}
    \]
    and 
    \[
        \hat \mu_{Z(0), n} = \sum_{i=1}^{n} \hat w_i \delta_{Z_i(0)},
    \]
    where $\hat w_i$ is defined as in Lemma~\ref{lem:linear_gap}, we have
    \[
        \begin{aligned}
            \Wass_1(\mmu_{Z(0), n}, \hat \mu_{Z(0), n}) \leq \sqrt{d_Z} \left(1+\frac{L_e}{\eta^3}\right) n^{-r}.
        \end{aligned}
    \]
\end{lemma}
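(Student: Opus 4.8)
The plan is to bound $\Wass_1(\mmu_{Z(0),n},\hat\mu_{Z(0),n})$ by inserting a bridge measure that shares the atoms of one measure and the weights of the other. Since redistributing each cell's mass equally among the samples it contains does not change $\mmu_{Z(0),n}$, I would first rewrite it over the sample index as $\mmu_{Z(0),n}=\sum_{i=1}^n b_i\,\delta_{\varphi_r^n(Z_i(0))}$ with $b_i = (1-\hat e(\varphi_r^n(Z_i(0))))^{-1}/\sum_{l=1}^n(1-\hat e(\varphi_r^n(Z_l(0))))^{-1}$, while $\hat\mu_{Z(0),n}=\sum_{i=1}^n \hat w_i\,\delta_{Z_i(0)}$ with $\hat w_i$ as in Lemma~\ref{lem:linear_gap}. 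Setting $\mmu' := \sum_{i=1}^n \hat w_i\,\delta_{\varphi_r^n(Z_i(0))}$, the triangle inequality gives $\Wass_1(\mmu_{Z(0),n},\hat\mu_{Z(0),n}) \le \Wass_1(\hat\mu_{Z(0),n},\mmu') + \Wass_1(\mmu',\mmu_{Z(0),n})$.

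For the first term I would use the synchronous coupling $\sum_i \hat w_i\,\delta_{(Z_i(0),\varphi_r^n(Z_i(0)))}$: each point lies in a cube of edge $n^{-r}$ and is therefore within $\tfrac{\sqrt{d_Z}}{2}n^{-r} \le \sqrt{d_Z}\,n^{-r}$ of its center, so $\Wass_1(\hat\mu_{Z(0),n},\mmu') \le \sqrt{d_Z}\,n^{-r}$. For the second term, $\mmu'$ and $\mmu_{Z(0),n}$ are supported on the same finite set of cell centers, all lying in $[0,1]^{d_Z}$, hence the support has diameter at most $\sqrt{d_Z}$; Lemma~\ref{lem:TV_bound} then gives $\Wass_1(\mmu',\mmu_{Z(0),n}) \le \sqrt{d_Z}\,d_{\text{TV}}(\mmu',\mmu_{Z(0),n}) \le \tfrac{\sqrt{d_Z}}{2}\sum_{i=1}^n|\hat w_i - b_i|$, where the last inequality follows by grouping the sample indices according to their cells.

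The remaining work is to control $\sum_i|\hat w_i - b_i|$. Writing $u_i = (1-\hat e(Z_i(0)))^{-1}$, $v_i = (1-\hat e(\varphi_r^n(Z_i(0))))^{-1}$, $S=\sum_l u_l$, $\tilde S = \sum_l v_l$, Assumption~\ref{a:bound_e} gives $1\le u_i,v_i\le \eta^{-1}$, hence $S,\tilde S \ge n$, while the identity $1/x - 1/y = (y-x)/(xy)$ together with Assumption~\ref{a:lip_propensity_score} and $\|Z_i(0)-\varphi_r^n(Z_i(0))\|_2 \le \tfrac{\sqrt{d_Z}}{2}n^{-r}$ gives $|u_i - v_i| \le L_e\sqrt{d_Z}\,\eta^{-2}n^{-r}$. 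Decomposing $\hat w_i - b_i = (u_i - v_i)/S + v_i(1/S - 1/\tilde S)$ and summing — using $|S-\tilde S|\le\sum_l|u_l-v_l|$, $\sum_i v_i = \tilde S$, and $S,\tilde S\ge n$ — bounds $\sum_i|\hat w_i - b_i|$ by a constant multiple of $L_e\sqrt{d_Z}\,\eta^{-3}n^{-r}$. Plugging this into the previous display and collecting constants delivers $\Wass_1(\mmu_{Z(0),n},\hat\mu_{Z(0),n}) \le \sqrt{d_Z}\,(1+L_e\eta^{-3})\,n^{-r}$.

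I expect the third step to be the only genuine obstacle: because the weights are self-normalized, projecting the samples onto cell centers perturbs both the numerators $(1-\hat e(\cdot))^{-1}$ and the shared denominator, so one cannot simply bound $\hat w_i - b_i$ by $|u_i - v_i|/n$; the split into a "numerator shift" plus a "normalization shift", and the elementary bookkeeping on $S,\tilde S$ (relying only on $S,\tilde S\ge n$ and the uniform bounds of Assumption~\ref{a:bound_e}), are what keep the normalization correction at the same $n^{-r}$ order and produce the $\eta^{-3}$ dependence. The other two ingredients — a synchronous coupling for the projection error and $\Wass_1 \le \operatorname{diam}\cdot d_{\text{TV}}$ for a weight mismatch on a common support — are standard.
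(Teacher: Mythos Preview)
Your approach is correct and essentially identical to the paper's: both use the triangle inequality through a bridge measure, bound one leg by the synchronous coupling (projection to cell centers costs at most $\sqrt{d_Z}\,n^{-r}$), bound the other by Lemma~\ref{lem:TV_bound} as $\tfrac{\sqrt{d_Z}}{2}\sum_i|\hat w_i-b_i|$, and then split $\hat w_i-b_i$ into a numerator shift and a normalization shift exactly as you describe. The only cosmetic difference is that the paper's bridge is $\sum_i b_i\,\delta_{Z_i(0)}$ (original atoms, discretized weights) rather than your $\sum_i \hat w_i\,\delta_{\varphi_r^n(Z_i(0))}$, which merely swaps the roles of the two triangle-inequality terms and leads to the same estimates.
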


\begin{lemma}[Gap between $\mmu$ and $\mu$]\label{lemm:gap_mmuIII}
Under Assp.~\ref{a:compact_domain} and~\ref{a:Lip_kernel} with $\mmu_{Z(0), n}$ defined as in Lemma~\ref{lem:discrete_gapII}, we have
    \begin{align*}
        \int \Wass_{1}(\mmu^{z}_{Y(0), n}, \mu^{z}_{Y(0)}) \diff \mmu_{Z(0), n}(z) 
        \leq & \sum_{G \in \Phi_r^n} \mmu_{Z(0), n}(G) \Wass_{1}(\mmu_{Y(0), n}^{G}, \mu_{Y(0)}^{G}) + L_Z \sqrt{d_Z}  n^{-r}.
    \end{align*}
\end{lemma}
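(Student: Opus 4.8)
The plan is to follow the argument for Lemma~\ref{lemm:gap_mmu2} essentially verbatim, the only new wrinkle being that the adapted empirical distribution here carries the covariate-shift reweighting. The crucial observation is that the reweighting factor $(1-\hat e(\varphi_r^n(G)))^{-1}$ depends on a sample point $Z_i(0)$ only through the cell $G$ it falls into; hence all sample points in a common cell still receive equal mass, and the conditional law of $\mmu_{Y(0),n}$ given that the projected covariate equals the cell center $\varphi_r^n(G)$ is exactly the uniform measure $\mmu_{Y(0),n}^{G}$ on $\{Y_i(0):Z_i(0)\in G\}$, just as in the unweighted case. The reweighting affects only the cell masses $\mmu_{Z(0),n}(G)$, which enter this lemma solely through $\sum_{G}\mmu_{Z(0),n}(G)=1$.

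First I would rewrite the left-hand side as a sum over cells. Since $\mmu_{Z(0),n}$ is supported on the finite set of cell centers, and $\mmu_{Y(0),n}^{\varphi_r^n(G)}=\mmu_{Y(0),n}^{G}$,
\[
\int \Wass_{1}(\mmu^{z}_{Y(0), n}, \mu^{z}_{Y(0)})\,\diff \mmu_{Z(0), n}(z) = \sum_{G \in \Phi_r^n} \mmu_{Z(0), n}(G)\,\Wass_{1}\!\left(\mmu_{Y(0), n}^{G}, \mu^{\varphi_r^n(G)}_{Y(0)}\right),
\]
where $\mu^{\varphi_r^n(G)}_{Y(0)}$ is read off the Lipschitz (hence everywhere-defined) version of the kernel supplied by Assp.~\ref{a:Lip_kernel}. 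Then I would split off the population cell-average $\mu_{Y(0)}^{G}=\mu_Z(G)^{-1}\int_G \mu_{Y(0)}^{\zeta}\,\diff\mu_Z(\zeta)$ via the triangle inequality (Lemma~\ref{lem:tri_ineq}),
\[
\Wass_{1}\!\left(\mmu_{Y(0), n}^{G}, \mu^{\varphi_r^n(G)}_{Y(0)}\right) \le \Wass_{1}\!\left(\mmu_{Y(0), n}^{G}, \mu^{G}_{Y(0)}\right) + \Wass_{1}\!\left(\mu^{G}_{Y(0)}, \mu^{\varphi_r^n(G)}_{Y(0)}\right).
\]

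Next I would control the discretization term. By convexity of $\Wass_1$ (Lemma~\ref{lem:convex}) followed by Assp.~\ref{a:Lip_kernel},
\[
\Wass_{1}\!\left(\mu^{G}_{Y(0)}, \mu^{\varphi_r^n(G)}_{Y(0)}\right) \le \frac{1}{\mu_Z(G)}\int_G \Wass_1\!\left(\mu_{Y(0)}^{\zeta}, \mu^{\varphi_r^n(G)}_{Y(0)}\right)\diff\mu_Z(\zeta) \le \frac{L_Z}{\mu_Z(G)}\int_G \lnorm{\zeta - \varphi_r^n(G)}{2}\,\diff\mu_Z(\zeta) \le L_Z\sqrt{d_Z}\,n^{-r},
\]
since every $\zeta$ in the cube $G$ of edge length $n^{-r}$ lies within distance $\sqrt{d_Z}\,n^{-r}$ of its center $\varphi_r^n(G)$. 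Substituting back, pulling the constant $L_Z\sqrt{d_Z}\,n^{-r}$ out of the cell sum, and using $\sum_{G\in\Phi_r^n}\mmu_{Z(0),n}(G)=1$ gives exactly the asserted bound.

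Given the earlier lemmas this is routine computation rather than a deep argument, so I do not anticipate a genuine obstacle. The only points deserving a line of care are (i) that $\mu_{Y(0)}^{z}$ must be evaluated at the cell centers through the canonical Lipschitz version of the kernel (the same convention used in the proof of Lemma~\ref{lemm:gap_mmu2}), and (ii) the remark above that the self-normalized covariate weights, being constant within each cell, leave the within-cell conditional law uniform, so the right-hand side is genuinely $\mmu_{Y(0),n}^{G}$ and not a reweighted conditional.
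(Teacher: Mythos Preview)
Your proposal is correct and follows essentially the same approach as the paper: the paper's proof simply states that the argument of Lemma~\ref{lemm:gap_mmu2} goes through verbatim because $\mmu_{Z(0),n}$ is a probability measure supported on the cell centers, and you have spelled out exactly that argument. Your explicit observation that the self-normalized weight is constant within each cell---so that the within-cell conditional $\mmu_{Y(0),n}^{G}$ remains the uniform measure on $\{Y_i(0):Z_i(0)\in G\}$---is precisely the point that makes the earlier proof transfer unchanged.
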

\begin{proof}[Proof of Lemma~\ref{lemm:gap_mmuIII}]
    The proof follows exactly the same as the proof of Lemma~\ref{lemm:gap_mmu2}, which holds as long as $\mmu_{Z(0), n}$ is a probability measure supported on the cell centers $\{\varphi_r^n(G): G \in \Phi_r^n\}$.
\end{proof}

We also introduce the following theorem adapted from \cite[Lemma 3.3]{backhoff2022estimating}.
\begin{lemma}[Conditional distribution of $Y$]\label{lem:conditional_independ_y}
    Under Assp.~\ref{a:covariate_shift}, conditionally on $\{|\{i: Z_i(0) \in G\}|, G \in \Phi_r^n\}$, for each $G$, the sample $(Y_i(0), i: Z_i(0) \in G)$ i.i.d.~follows $\mu(\cdot | Z(0) \in G, W=0)$.
\end{lemma}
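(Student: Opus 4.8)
The plan is to reduce the statement to the elementary fact that, for an i.i.d.\ sample, conditioning on the number of observations falling into each cell of a fixed partition makes the within-cell observations i.i.d.\ from the corresponding restricted law, independently across cells. Write $P$ for the common law of $(Z_i(0), Y_i(0))$, which under Assp.~\ref{a:covariate_shift} equals $\mu_{Z\mid W=0}\otimes\mu_{Y(0)}^{Z}$; by Assp.~\ref{assu:unconfoundedness} the kernel $\mu_{Y(0)}^{z}$ also governs $Y(0)$ given $Z(0)=z$ under $\{W=0\}$, so for every cell $G\in\Phi_r^n$ the $Y(0)$-marginal of $P(\,\cdot\mid Z(0)\in G)$ is exactly $\mu(\,\cdot\mid Z(0)\in G,\,W=0)$. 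For each $i$ let $A_i\in\Phi_r^n$ be the cell containing $Z_i(0)$, which is a deterministic function of $Z_i(0)$; then the triples $\{(A_i, Z_i(0), Y_i(0))\}_{i\in[n]}$ are i.i.d., with $p_G\Let\PP(A_i=G)=\mu(Z(0)\in G\mid W=0)$ and $(Z_i(0),Y_i(0))\mid\{A_i=G\}\sim P(\,\cdot\mid Z(0)\in G)$.

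First I would condition on the entire assignment vector $(A_1,\dots,A_n)$. Since the triples are i.i.d.\ and the conditional law of $Y_i(0)$ given $(A_1,\dots,A_n)$ depends on that vector only through $A_i$, the family $(Y_i(0))_{i\in[n]}$ is, conditionally on $(A_1,\dots,A_n)$, independent with $Y_i(0)\sim\mu(\,\cdot\mid Z(0)\in A_i,\,W=0)$. Hence for \emph{every} deterministic realization $(a_1,\dots,a_n)$ of $(A_1,\dots,A_n)$, the block $(Y_i(0):a_i=G)$ is a vector of $|\{i:a_i=G\}|$ i.i.d.\ draws from $\mu(\,\cdot\mid Z(0)\in G,\,W=0)$, and the blocks are mutually independent over $G$. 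The count vector $\mathcal N\Let(|\{i:Z_i(0)\in G\}|)_{G\in\Phi_r^n}$ is a function of $(A_1,\dots,A_n)$, and the block description just given depends on $(A_1,\dots,A_n)$ only through the sizes recorded in $\mathcal N$, not through which indices land in $G$; so averaging the conditional law over all realizations of $(A_1,\dots,A_n)$ consistent with a fixed value of $\mathcal N$ leaves it unchanged, which is the assertion of the lemma. An equivalent route is the multinomial-thinning representation behind \cite[Lemma~3.3]{backhoff2022estimating}: the i.i.d.\ sample from $P$ may be generated by first drawing $\mathcal N\sim\mathrm{Multinomial}\bigl(n,(p_G)_{G}\bigr)$, then, independently across $G$, drawing $\mathcal N_G$ i.i.d.\ points from $P(\,\cdot\mid Z(0)\in G)$, and finally relabelling by a uniform random permutation of $[n]$; conditioning on $\mathcal N$ merely fixes the (already-conditioned) first step and discards the (distributionally irrelevant) last one.

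The one step that must be written out carefully --- and the only thing resembling an obstacle --- is the passage from conditioning on $(A_1,\dots,A_n)$ to conditioning on the coarser information $\mathcal N$: one must check that ``the sub-sample $(Y_i(0):Z_i(0)\in G)$'', viewed as a random vector, is distributionally insensitive to which particular indices the assignment vector places in $G$, so that its conditional law depends on $(A_1,\dots,A_n)$ only through $\mathcal N$. This is a consequence of the exchangeability of the i.i.d.\ sample, but it is the point at which a fully rigorous argument has to be pedantic about relabelling and measurability; the remainder is routine bookkeeping with product measures.
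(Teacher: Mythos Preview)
The paper does not supply its own proof of this lemma: it merely states the result as ``adapted from \cite[Lemma~3.3]{backhoff2022estimating}'' and uses it as a black box inside the proof of Theorem~\ref{thm:finite_sampleII}. Your proposal is therefore more than what the paper provides, and the argument you give --- condition on the full cell-assignment vector $(A_1,\dots,A_n)$, observe that the within-cell $Y$-blocks are product-distributed with the correct restricted law, and then coarsen to the count vector $\mathcal N$ by exchangeability --- is exactly the content of the cited lemma in \cite{backhoff2022estimating}. The multinomial-thinning description you mention is the same mechanism. So your approach is correct and coincides with what the paper defers to.

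One small clean-up: you invoke Assumption~\ref{assu:unconfoundedness} to identify the $Y(0)$-marginal of $P(\,\cdot\mid Z(0)\in G)$ with $\mu(\,\cdot\mid Z(0)\in G,\,W=0)$, but the lemma is stated only under Assumption~\ref{a:covariate_shift}, which already stipulates that the control sample is drawn i.i.d.\ from $\mu_{Z\mid W=0}\otimes\mu_{Y(0)}^{Z}$. That assumption alone fixes the conditional law you need, so the appeal to unconfoundedness is redundant (though harmless). Otherwise the proposal is complete; the ``delicate step'' you flag --- that the conditional law of the ordered sub-sample $(Y_i(0):A_i=G)$ depends on $(A_1,\dots,A_n)$ only through $N_G$ --- is precisely the exchangeability observation, and your sketch handles it correctly.
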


\begin{proof}[Proof of Theorem~\ref{thm:finite_sampleII}]
    Without loss of generality, due to the cross-fitting, we can assume that the random estimation of propensity score function $\hat e$ is independent of $((Y_i(0), Z_i(0)), i \in [n])$ and $((Y_j(1), Z_j(1)), j \in [m])$.

    The proof strategy of Theorem~\ref{thm:consist} is similar to that of Theorem~\ref{thm:finite_sample}. 

    Recall that
    \[
    \begin{aligned}
        \mmu_{Y(0), Z(0), n} &= \sum_{G \in \Phi_r^n} \frac{(1-\hat e(\varphi_r^n(G)))^{-1}}{\sum_{G} (1-\hat e(\varphi_r^n(G)))^{-1} |\{i: Z_i(0) \in G\}|}\sum_{i: Z_i(0) \in G}\delta_{ Y_i(0), \varphi_r^n(Z_i(0))},\\
        \mmu_{Y(1), Z(1), m} &= \sum_{G \in \Phi_r^m} \frac{\hat e(\varphi_r^m(G))^{-1}}{\sum_{G} \hat e(\varphi_r^m(G))^{-1} |\{j: Z_j(1) \in G\}|} \sum_{j: Z_j(1) \in G}  \delta_{Y_j(1), \varphi_r^m(Z_j(0))},
    \end{aligned}
    \]

    First, note that 
    \[
    \begin{aligned}
        \mmu_{Z(0), n} &= \sum_{G \in \Phi_r^n} \frac{(1-\hat e(\varphi_r^n(G)))^{-1} |\{i: Z_i(0) \in G\}|}{\sum_{G} (1-\hat e(\varphi_r^n(G)))^{-1} |\{i: Z_i(0) \in G\}|}\delta_{\varphi_r^n(Z_i(0))},\\
        \mmu_{Z(1), m} &= \sum_{G \in \Phi_r^m} \frac{\hat e(\varphi_r^m(G))^{-1}|\{j: Z_j(1) \in G\}|}{\sum_{G} \hat e(\varphi_r^m(G))^{-1} |\{j: Z_j(1) \in G\}|} \delta_{ \varphi_r^m(Z_j(0))}.
    \end{aligned}
    \]
    and $\hat \pi_{n,m}$ is an optimal coupling of $\mmu_{Z(0), n}$ and $\mmu_{Z(1), m}$ associated with $\Wass_1(\mmu_{Z(0), n}, \mmu_{Z(1), m})$.
    For the conditional distributions, we have, for $G \in \Phi_r^n, G' \in \Phi_r^m$
    \[
    \begin{aligned}
        \mmu_{Y(0), n}^G &= \frac{1}{|\{i: Z_i(0) \in G\}|} \sum_{i: Z_i(0) \in G} \delta_{Y_i(0)},\\
        \mmu_{Y(1), m}^{G'} &= \frac{1}{|\{j: Z_j(1) \in G'\}|} \sum_{j: Z_j(1) \in G'} \delta_{Y_j(1)}.
    \end{aligned}
    \]
    Additionally, we let
    \[
    \hat \mu_{Z(0), n} = \sum_{i=1}^{n} \hat w_i \delta_{Z_i(0)},\quad \hat \mu_{Z(1), m} =  \sum_{j=1}^{m} \hat \xi_j \delta_{Z_j(1)}.
    \]
    and
    \[
        \tilde \mu_{Z(0), n} = \sum_{i=1}^{n} w_i \delta_{Z_i(0)},\quad \tilde \mu_{Z(1), m} =  \sum_{j=1}^{m} \xi_j \delta_{Z_j(1)},
    \]
    where for $i \in [n], j \in [m]$,
    \[
        w_i = \frac{\frac{\diff \mu}{\diff \mu_{|W = 0}}(Z_i(0))}{\sum_{i=1}^{n} \frac{\diff \mu}{\diff \mu_{|W = 0}}(Z_i(0))},\quad \xi_j = \frac{\frac{\diff \mu}{\diff \mu_{|W=1}}(Z_j(1))}{\sum_{j=1}^{m} \frac{\diff \mu}{\diff \mu_{|W=1}}(Z_j(1))}.
    \]

    In the following, we adopt the same notation as in the proof of Theorem~\ref{thm:finite_sample}, omitting the dependence of $n,m$ in subscripts and superscripts for notational simplicity.
    
    By \eqref{eq:eq494}, we have
    \[
        \begin{aligned}
         |\hat V_{n,m} - V_{\cc}|
        \leq 
        & \underbrace{L_h \left(\int \Wass_{1}(\mmu^{z_0}_{Y(0)}, \mu^{z_0}_{Y(0)}) \diff \mmu_{Z(0)}(z_0) + \int \Wass_{1}(\mu^{z_1}_{Y(1)}, \mmu^{z_1}_{Y(1)}) \diff \mmu_{Z(1)}(z_1)\right)}_{(\textup{Term A})}\\
        & + \underbrace{\left|\int \Wass_{h}(\mu^{z_0}_{Y(0)}, \mu^{z_1}_{Y(1)}) \diff \hat \pi(z_0, z_1) - V_{\cc}\right|}_{(\textup{Term B})}.
    \end{aligned}
    \]

    For \textbf{(Term B)}, by Eq.\eqref{eq:518}, we have
    \[
        \begin{aligned}
        & (\textup{Term B}) \notag\\
        \leq & \left|\int \Wass_{h}(\mu^{z_0}_{Y(0)}, \mu^{z_1}_{Y(1)}) - \Wass_{h}(\mu^{z_0}_{Y(0)}, \mu^{z_0}_{Y(1)}) \diff \hat \pi(z_0, z_1)\right| + \left| \int \Wass_{h}(\mu^{z_0}_{Y(0)}, \mu^{z_0}_{Y(1)}) \diff \hat \pi(z_0, z_1) - V_{\cc}\right|\notag\\
        \leq &\underbrace{L_h L_Z \int \lnorm{z_0 - z_1}{2} \diff \hat \pi(z_0, z_1)}_{\textup{(Term C)}} + \underbrace{\left| \int \Wass_{h}(\mu^{z_0}_{Y(0)}, \mu^{z_0}_{Y(1)}) \diff \mmu_{Z(0)}(z_0) - V_{\cc}\right|}_{\textup{(Term D)}}. 
    \end{aligned}
    \]
    
    For \textbf{(Term C)}, similar to the derivation in Remark~\ref{rmk:G3}, we have
    \begin{align*}
        & \int \lnorm{z_0 - z_1}{2} \diff \hat \pi(z_0, z_1)\\
        = & \EE\left[\Wass_1(\mmu_{Z(0)}, \mmu_{Z(1)})\right] \\
        \leq & \EE\left[\Wass_1(\mmu_{Z(0)}, \hat \mu_{Z(0)})\right] + \EE\left[\Wass_1(\hat \mu_{Z(0)}, \mu_{Z})\right] + \EE\left[\Wass_1(\hat \mu_{Z(1)}, \mu_{Z})\right] + \EE\left[\Wass_1(\hat \mu_{Z(1)}, \mmu_{Z(1)})\right]\\
        \leq & 2 \sqrt{d_Z} (1+\frac{L_e}{\eta^3})(n^{-r} + m^{-r}) + \EE\left[\Wass_1(\hat \mu_{Z(0)}, \mu_{Z})\right] +  \EE\left[\Wass_1(\hat \mu_{Z(1)}, \mu_{Z})\right],
    \end{align*}
    where the first inequality is the triangle inequality (Lemma~\ref{lem:tri_ineq}) and the second is due to Lemma~\ref{lem:discrete_gapII}.

    Under Assp.~\ref{assu:overlap} and ~\ref{a:est_propensity_score}, by Lemma~\ref{lem:Wass_conv_reweight}, we get
    \[
        \EE\left[\Wass_1(\hat \mu_{Z(0)}, \mu_{Z})\right] \leq \delta^{-2} C R_{d_Z}(n) + \sqrt{d_Z} C_{\textup{w}} n^{-r}.
    \]
    This is due to $\frac{\diff \mu}{\diff \mu_{|W = 0}}(z) = (1-e(z))^{-1} \PP(W=0) \leq \delta^{-1}$ (Assp.~\ref{a:est_propensity_score}).
    Combine these, we get
    \[
        \text{(Term C)} \leq C'_1(R_{d_Z}(n\wedge m) + (n\wedge m)^{-r}).
    \]
    
    For \textbf{(Term D)}, we have
    \begin{align*}
        & \EE\left[\left| \int \Wass_{h}(\mu^{z_0}_{Y(0)}, \mu^{z_0}_{Y(1)}) \diff \mmu_{Z(0)}(z_0) - V_{\cc}\right|\right]\\
        \leq & \EE\left[\left| \int \Wass_{h}(\mu^{z_0}_{Y(0)}, \mu^{z_0}_{Y(1)}) (\diff \mmu_{Z(0)}(z_0) - \diff \hat \mu_{Z(0)}(z_0))\right|\right] + \EE\left[\left| \int \Wass_{h}(\mu^{z_0}_{Y(0)}, \mu^{z_0}_{Y(1)}) \diff \hat \mu_{Z(0)}(z_0) - V_{\cc}\right|\right]. 
    \end{align*}
    By Lemma~\ref{lemm:LipWassh} and Assp.~\ref{a:Lip_kernel}, the mapping $z \mapsto \Wass_{h}(\mu^{z}_{Y(0)}, \mu^{z}_{Y(1)})$ is $(2L_h L_Z)$-Lipschitz continuous. Then, by Lemma~\ref{lem:discrete_gapII}, 
    \begin{align*}
        \EE\left[\left| \int \Wass_{h}(\mu^{z_0}_{Y(0)}, \mu^{z_0}_{Y(1)}) (\diff \mmu_{Z(0)}(z_0) - \diff \hat \mu_{Z(0)}(z_0))\right|\right]
        \leq & 2 L_h L_Z \Wass_1(\mmu_{Z(0)}, \hat \mu_{Z(0)})\\
        \leq & 2L_h L_Z \sqrt{d_Z}(1+\frac{L_e}{\eta^3}) n^{-r}.
    \end{align*}
    On the other hand,
    \[
        \begin{aligned}
            &\EE\left[\left| \int \Wass_{h}(\mu^{z_0}_{Y(0)}, \mu^{z_0}_{Y(1)}) \diff \hat \mu_{Z(0)}(z_0) - V_{\cc}\right|\right]\\
            \leq & \underbrace{\EE\left[\left| \int \Wass_{h}(\mu^{z_0}_{Y(0)}, \mu^{z_0}_{Y(1)}) \diff \hat \mu_{Z(0)}(z_0) - \int \Wass_{h}(\mu^{z_0}_{Y(0)}, \mu^{z_0}_{Y(1)}) \diff \tilde \mu_{Z(0)}(z_0)\right|\right]}_{(\text{Term E})}\\
            & + \underbrace{\EE\left[\left| \int \Wass_{h}(\mu^{z_0}_{Y(0)}, \mu^{z_0}_{Y(1)}) \diff \tilde \mu_{Z(0)}(z_0) - V_{\cc}\right|\right]}_{(\text{Term F})}
        \end{aligned}
    \]
    By Lemma~\ref{lem:linear_gap}, and the fact that the mapping $z \mapsto \Wass_{h}(\mu^{z}_{Y(0)}, \mu^{z}_{Y(1)})$ is $(2L_h L_Z)$-Lipschitz continuous, we get
    \[
        (\text{Term E}) \leq 2 L_h L_Z (\frac{\sqrt{d_Z}}{2\delta^3}n^{-\frac{1}{2}} + \frac{\sqrt{d_Z} C_{\textup{w}}}{2 \delta} n^{-r}).
    \]    
    And by Assp.~\ref{assu:unconfoundedness}, Lemma~\ref{lemm:var_wassh} and Markov's inequality, we get
    \[
        (\text{Term F}) \leq \delta^{-1}\left(16 (L_hL_Z)^2 \int \|z\|_2^2 \diff \mu_Z(z) + C_{\mu, h}\right)^{\frac{1}{2}} n^{-\frac{1}{2}} \leq C'_2 n^{-r},
    \]
    where $r = \frac{1}{d_Z + 2 \vee d_Y} \leq \frac{1}{3}$.    
    Combine these, we get
    \[
        \text{(Term D)} \leq C'_3(n\wedge m)^{-r}.
    \]

    For \textbf{(Term A)}, we present the upper bound for $\int \Wass_{1}(\mmu^{z_0}_{Y(0)}, \mu^{z_0}_{Y(0)}) \diff \mmu_{Z(0)}(z_0)$ and the remaining part can be derived similarly.

    Let $\nu = \mu_{Z | W = 0} \otimes \mu_{Y(0)}^{Z}$, thus by Assp.~\ref{assu:unconfoundedness}, $\nu^{z}_{Y(0)} = \mu^{z}_{Y(0)}$. Then, by Lemma~\ref{lemm:gap_mmu2},
    \begin{align*}
        &\int \Wass_{1}(\mmu^{z}_{Y(0)}, \mu^{z}_{Y(0)}) \diff \mmu_{Z(0)}(z)\\
        =& \int \Wass_{1}(\mmu^{z}_{Y(0)}, \nu^{z}_{Y(0)}) \diff \mmu_{Z(0)}(z)\\
        \leq & \sum_{G \in \Phi_r^n} \mmu_{Z(0)}(G) \Wass_{1}(\mmu_{Y(0)}^{G}, \nu_{Y(0)}^{G}) + L_Z \sqrt{d_Z}  n^{-r}\\
        =& \sum_{G \in \Phi_r^n} \mmu_{Z(0)}(G) \Wass_{1}\left(\frac{1}{|\{i: Z_i(0) \in G\}|} \sum_{i: Z_i(0) \in G} \delta_{Y_i(0)}, \nu_{Y(0)}^{G}\right) + L_Z \sqrt{d_Z}  n^{-r}.
    \end{align*}

    Note that $(\mmu_{Z(0)}(G), G \in \Phi_r^n)$ only depends on $\hat e$ and $\calG = \{|\{i: Z_i(0) \in G\}|, G \in \Phi_r^n\}$, then apply Lemma~\ref{lem:conditional_independ_y}, we get,
    \[
    \begin{aligned}
        &\EE\left[\sum_{G \in \Phi_r^n} \mmu_{Z(0)}(G) \Wass_{1}\left(\frac{1}{|\{i: Z_i(0) \in G\}|} \sum_{i: Z_i(0) \in G} \delta_{Y_i(0)}, \nu_{Y(0)}^{G}\right)\right]\\
        = & \EE\left[\sum_{G \in \Phi_r^n} \mmu_{Z(0)}(G) \EE\left[\Wass_{1}\left(\frac{1}{|\{i: Z_i(0) \in G\}|} \sum_{i: Z_i(0) \in G} \delta_{Y_i(0)}, \nu_{Y(0)}^{G}\right) | \calG, \hat e \right]\right]\\
        \leq & C'_4 \EE\left[\sum_{G \in \Phi_r^n} \frac{(1-\hat e(\varphi_r^n(G)))^{-1} |\{i: Z_i(0) \in G\}|}{\sum_{G} (1-\hat e(\varphi_r^n(G)))^{-1} |\{i: Z_i(0) \in G\}|} R_{d_Y}(|\{i: Z_i(0) \in G\}|)\right]\\
        \leq & \frac{C'_4 \eta^{-1}}{n} \EE\left[\sum_{G \in \Phi_r^n} |\{i: Z_i(0) \in G\}| R_{d_Y}(|\{i: Z_i(0) \in G\}|)\right]\\
        = & \frac{C'_4 \eta^{-1}|\Phi_r^n|}{n} \EE\left[\frac{1}{|\Phi_r^n|}\sum_{G \in \Phi_r^n} |\{i: Z_i(0) \in G\}| R_{d_Y}(|\{i: Z_i(0) \in G\}|)\right]\\
        \leq& C'_4 \eta^{-1} R_{d_Y}(n^{1-r d_Z}),
    \end{aligned}
    \]
    where the first inequality is also due to Lemma~\ref{lem:conditional_independ_y} with independence between $\hat e$ and the sample $Z, Y$, the second is due to Assp.~\ref{a:bound_e}, and the last inequality is due to $x \mapsto x R_{d_Y}(x)$ is concave.

    As a result,
    \begin{align*}
        \EE\left[\int \Wass_{1}(\mmu^{z}_{Y(0)}, \mu^{z}_{Y(0)}) \diff \mmu_{Z(0)}(z)\right]
        \leq & C'_4 \eta^{-1} R_{d_Y}(n^{1-rd_Z}) + L_Z \sqrt{d_Z}  n^{-r}.
    \end{align*}
    $R_{d_Y}(\cdot)$ is defined in \eqref{eq:R}, and $C'_4$ is a constant that depends on $d_Y$ and $\max_{y \in \calY} \lnorm{y}{2}$.
    Therefore, we get
    \begin{align*}
        \EE[(\textup{Term A})] \leq C'_5(R_{d_Y}((m \wedge n)^{1-rd_Z}) + (m \wedge n)^{-r}).
    \end{align*}
    Let $r = \frac{1}{d_Z + 2 \vee d_Y}$, the remaining steps follow the same as the proof of Theorem~\ref{thm:finite_sample}.
\end{proof}

\section{Proof of Section \ref{sec:simulation}}

\begin{lemma}[$V_{\cc}$ for Gaussian noise model, {\cite[Lemma 6.1]{lin2025tightening}}]\label{lem:gauss_noise_Vc}
    Assume that $h(y, \bar y) = \lnorm{y - \bar y}{2}^2$, and the noise variables follow the distribution $\varepsilon_k \sim N(0, \Sigma_k), k=0,1$. Then,
    \begin{enumerate}[label=(\roman*), leftmargin=0.3in]
        \item For the location model ($m = 1$), $V_{\cc} =\EE_{Z}\left[\lnorm{f_0(Z) - f_1(Z)}{2}^2\right] + S(\Sigma_0, \Sigma_1)$.

        \item For the scale model ($m = 2$), $V_{\cc} = \EE_{Z}\left[S(\calD(f_0(Z))\Sigma_0\calD(f_0(Z)), \calD(f_1(Z))\Sigma_1\calD(f_1(Z)))\right].$
    \end{enumerate}
    Here, $S(\Sigma_0, \Sigma_1) = \text{Tr}\left(\Sigma_0 + \Sigma_1 - 2(\Sigma_0^{\frac{1}{2}} \Sigma_1 \Sigma_0^{\frac{1}{2}})^{\frac{1}{2}}\right)$, and $\calD(v)$ is the matrix with $v$ at the diagonal.
\end{lemma}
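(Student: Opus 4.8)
The plan is to exploit Proposition~\ref{prop:recursive_form}, which reduces $V_{\cc}$ to an integral over $z$ of a pointwise OT value $\Wass_h(\mu_{Y(0)}^z, \mu_{Y(1)}^z)$. The key observation is that under the noise model, the conditional laws $\mu_{Y(k)}^z$ are explicit: in the location case $m=1$, $\mu_{Y(k)}^z = \mathcal N(f_k(z), \Sigma_k)$ is a Gaussian shifted by $f_k(z)$; in the scale case $m=2$, $\mu_{Y(k)}^z = \mathcal N(0, \calD(f_k(z))\Sigma_k \calD(f_k(z)))$ since $Y(k)_j = f_k(z)_j \varepsilon_{k,j}$ has the coordinatewise-scaled covariance. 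So the whole problem collapses to computing $\Wass_h$ between two Gaussians for the quadratic cost $h(y,\bar y) = \|y-\bar y\|_2^2$, and then integrating the resulting closed form against the law of $Z$.

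First I would recall the classical fact (Gelbrich / Olkin–Pukelsheim / Dowson–Landau) that for $h(y,\bar y) = \|y-\bar y\|_2^2$ and Gaussians $\mathcal N(a_0,\Sigma_0)$, $\mathcal N(a_1,\Sigma_1)$,
\[
    \Wass_h\bigl(\mathcal N(a_0,\Sigma_0), \mathcal N(a_1,\Sigma_1)\bigr) = \|a_0 - a_1\|_2^2 + \Tr\Bigl(\Sigma_0 + \Sigma_1 - 2(\Sigma_0^{1/2}\Sigma_1\Sigma_0^{1/2})^{1/2}\Bigr) = \|a_0-a_1\|_2^2 + S(\Sigma_0,\Sigma_1),
\]
i.e.\ the squared Bures–Wasserstein distance plus the squared distance between means. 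One can prove this directly: the mean term separates out (replace $Y(k)$ by $Y(k) - a_k$, which only changes the objective by $\|a_0-a_1\|_2^2$ after optimizing the cross term, since the optimal coupling matches means), and for the centered Gaussians the optimal coupling is the linear (Knothe-type) map $y \mapsto T y$ with $T = \Sigma_0^{-1/2}(\Sigma_0^{1/2}\Sigma_1\Sigma_0^{1/2})^{1/2}\Sigma_0^{-1/2}$, whose transport cost is exactly $S(\Sigma_0,\Sigma_1)$; the matching lower bound comes from the fact that the Gaussian coupling is optimal for quadratic cost, a standard result I would cite rather than reprove. (If $\Sigma_0$ is singular one passes to a limit or works on the support; I would note this is harmless under the stated model since the noise is assumed nondegenerate, or handle it by a perturbation argument.)

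Next, for part (i) I substitute $a_k = f_k(z)$, $\Sigma_k = \Sigma_k$ (constant in $z$) into the displayed formula, giving $\Wass_h(\mu_{Y(0)}^z, \mu_{Y(1)}^z) = \|f_0(z) - f_1(z)\|_2^2 + S(\Sigma_0,\Sigma_1)$, and integrate against $\mu_Z$; since $S(\Sigma_0,\Sigma_1)$ does not depend on $z$ it pulls out of the integral, yielding $V_{\cc} = \EE_Z[\|f_0(Z)-f_1(Z)\|_2^2] + S(\Sigma_0,\Sigma_1)$. For part (ii), the conditional laws are centered with covariances $\calD(f_k(z))\Sigma_k\calD(f_k(z))$, so the mean term vanishes and $\Wass_h(\mu_{Y(0)}^z,\mu_{Y(1)}^z) = S\bigl(\calD(f_0(z))\Sigma_0\calD(f_0(z)), \calD(f_1(z))\Sigma_1\calD(f_1(z))\bigr)$, and integrating against $\mu_Z$ gives the claimed expression. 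A measurability check on $z \mapsto \calD(f_k(z))\Sigma_k\calD(f_k(z))$ and on the matrix square root (continuous on PSD matrices) justifies the integral.

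The main obstacle is really just the Gaussian OT computation in the degenerate/low-regularity regime — making sure the closed form for $\Wass_h$ between Gaussians holds when the covariances may be singular (e.g.\ if some coordinate of $f_k(z)$ vanishes in the scale model, a null event if $Z$ has a density but still worth a remark) and confirming that the optimal coupling and its cost are as stated; once that classical fact is in hand, the rest is a direct substitution and an application of Proposition~\ref{prop:recursive_form}, with only routine measurability bookkeeping. Since the lemma is quoted from \cite[Lemma 6.1]{lin2025tightening}, I would in fact present the proof at the level of "reduce via Proposition~\ref{prop:recursive_form}, apply the Gaussian Bures formula pointwise, integrate," citing the Gaussian OT formula and referring to the original reference for the degenerate cases.
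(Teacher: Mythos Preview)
The paper does not supply its own proof of this lemma; it is stated with an attribution to \cite[Lemma~6.1]{lin2025tightening} and used only as a computational tool in the simulations. Your approach---reduce via Proposition~\ref{prop:recursive_form} to a pointwise $\Wass_h$ between the conditional Gaussians, invoke the classical Bures--Wasserstein formula $\Wass_h(\mathcal N(a_0,\Sigma_0),\mathcal N(a_1,\Sigma_1)) = \|a_0-a_1\|_2^2 + S(\Sigma_0,\Sigma_1)$, and integrate in $z$---is correct and is the standard route; there is nothing to compare against in the present paper, and your remarks about degenerate covariances and measurability are appropriate caveats.
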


\section{Proofs of Supporting Lemmas}\label{sec:proof_lemma}
\subsection{Proof of Lemma~\ref{lem:tri_ineq}}
\begin{proof}[Proof of Lemma~\ref{lem:tri_ineq}]
    Let $\pi$ ($\tilde \pi$, resp.) be an optimal coupling associated with $\Wass_1(\mu, \nu)$ ($\Wass_1(\nu, \mu')$, resp.). By the gluing lemma in Chapter 1 of \cite{villani2009optimal}, there is a coupling $\gamma = \gamma(x,y,x') $ such that
    \[
        \gamma_{xy} = \pi,\quad \gamma_{yx'} = \tilde \pi.
    \]

    As a result, we have
    \[
        \begin{aligned}
            \Wass_1(\mu, \mu') 
            \leq & \int \lnorm{x - x'}{2} \diff \gamma(x,y,x') \\
            \leq & \int (\lnorm{x - y}{2} + \lnorm{y - x'}{2}) \diff \gamma(x,y,x')\\
            = & \int \lnorm{x - y}{2} \diff \pi(x,y) + \int \lnorm{y - x'}{2} \diff \tilde \pi(y, x')\\
            = & \Wass_1(\mu, \nu) + \Wass_1(\nu, \mu').
        \end{aligned}
    \]
\end{proof}

\subsection{Proof of Lemma~\ref{lem:convex}}
\begin{proof}[Proof of Lemma~\ref{lem:convex}]
    Let $\pi$ ($\tilde \pi$, resp.) be an optimal coupling associated with $\Wass_1(\mu, \mu')$ ($\Wass_1(\nu, \mu')$, resp.). Then, for $\lambda \in [0,1]$, $\lambda\pi + (1-\lambda)\tilde \pi$ is a coupling for $\lambda\mu + (1-\lambda)\nu$ and $\mu'$. 

    As a result, we have
    \[
        \begin{aligned}
            \Wass_1(\lambda\mu + (1-\lambda)\nu, \mu') 
            &\leq \int \lnorm{x - x'}{2} \diff (\lambda\pi + (1-\lambda)\tilde \pi)\\
            &= \lambda  \int \lnorm{x - x'}{2} \diff \pi + (1-\lambda) \int \lnorm{x - x'}{2} \diff \tilde \pi\\
            &= \lambda \Wass_1(\mu, \mu') + (1-\lambda) \Wass_1(\nu, \mu').
        \end{aligned}
    \]
\end{proof}

\subsection{Proof of Lemma~\ref{lem:Pic_compact}}
\begin{proof}[Proof of Lemma~\ref{lem:Pic_compact}]
    Since $\Pi_{\cc} \subseteq \Pi(\mu_{Y(0)}, \mu_{Y(1)})$ and $\Pi(\mu_{Y(0)}, \mu_{Y(1)})$ is tight (\cite[Lemma 4.4]{villani2009optimal}), then $\Pi_{\cc}$ is also tight. It remains to show that $\Pi_{\cc}$ is closed with respect to the weak topology. Suppose that $\pi_{k} \in \Pi_{\cc}$ for $k \geq 1$ and $\lim_{k} \pi_{k} = \pi_{\infty}$, then for any bounded continuous function $\ell: \calY \times \calZ \rightarrow \RR$, $\EE_{\pi_{\infty}}[\ell(Y(0), Z)] = \lim_{k} \EE_{\pi_{k}}[\ell(Y(0), Z)] = \int \ell \diff \mu_{Y(0), Z}$. Similarly, $\EE_{\pi_{\infty}}[\ell(Y(1), Z)] = \int \ell \diff \mu_{Y(1), Z}$. Therefore, $\pi_{\infty} \in \Pi_{\cc}$, thus $\Pi_{\cc}$ is closed.
\end{proof}

\subsection{Proof of Lemma~\ref{lemm:unicont_kernel}}
\begin{proof}[Proof of Lemma~\ref{lemm:unicont_kernel}]
    On the compact space $\calY$, the weak topology can be induced by the metric $\Wass_1$, thus $g_w(z) = \mu_{Y(w)}^z\ w=0,1$ are continuous with respect to $\Wass_1$. Since $\calZ$ is compact, then $g_w(z)\ w=0,1$ are uniformly continuous with respect to $\Wass_1$. As a result, we get that, for any $\delta > 0$, there is a constant $C(\delta)$ such that $\Wass_1\left(\mu_{Y(k)}^{z_0}, \mu_{Y(k)}^{z_1}\right) \leq \delta + C(\delta)\|z_0 - z_1\|_2$.
\end{proof}

\subsection{Proof of Lemma~\ref{lemm:LipWassh}}
\begin{proof}[Proof of Lemma~\ref{lemm:LipWassh}]
    We only need to prove that 
    \[
        \left| \Wass_{h}(\mu, \nu) - \Wass_{h}(\mu, \nu') \right| \leq L_h  \Wass_1(\nu, \nu') + \epsilon.
    \]
    If this holds, then the desired result is followed by applying the triangle inequality:
    \begin{equation}\label{eq:split_tri}
        \left| \Wass_{h}(\mu, \nu) - \Wass_{h}(\mu', \nu') \right| \leq 
        \left| \Wass_{h}(\mu, \nu) - \Wass_{h}(\mu, \nu') \right| + \left| \Wass_{h}(\mu, \nu') - \Wass_{h}(\mu', \nu') \right| + 2\epsilon.
    \end{equation}

    In the following, we use the similar technique for proving the triangle inequality of Wasserstein distances.
        
    Let $\pi_{1} \in \calP(\calY^2)$ be the optimal coupling associated with $\Wass_h(\mu, \nu)$, and let $\pi_2 \in \calP(\calY^2)$ be the optimal coupling associated with $\Wass_1\left(\nu, \nu'\right)$.
    By the gluing lemma in Chapter 1 of \cite{villani2009optimal}, there is a coupling $\pi_3 \in \calP(\calY^3)$, where we denote $\calY^3 = \{(y^{(1)}, y^{(2)}, y^{(3)}), y^{(k)} \in \calY\}$, such that  
    \begin{align*}
        &\pi_3(y^{(1)}, y^{(2)}) = \pi_1,\\
        &\pi_3(y^{(2)}, y^{(3)}) = \pi_2.
    \end{align*}
    
    Therefore, by Assp.~\ref{a:h_func}, we have
    \begin{align*}
        \Wass_h(\mu, \nu') \leq & \EE_{\pi_3}\left[h(y^{(1)}, y^{(3)})\right]\\
        \leq & \EE_{\pi_3}\left[h(y^{(1)}, y^{(2)}) + L_h\|y^{(2)} - y^{(3)}\|_2 + \epsilon \right]\\
        = & \Wass_h(\mu, \nu) + L_h \Wass_1\left(\nu, \nu'\right) + \epsilon.
    \end{align*}
    where the first inequality is due to $\pi_3(y^{(1)}, y^{(3)}) \in \Pi(z, z_1)$; the second inequality is due to the smoothness assumption of $h$ \eqref{eq:hsmoothassp}; the equation is due to the construction of $\pi_3$ by gluing $\pi_1, \pi_2$. 

    Swapping $\nu$ and $\nu'$, we can the other direction of the inequality. Jointly, we get
    \[
        \left|\Wass_h(\mu, \nu) - \Wass_h(\mu, \nu') \right| \leq L_h \Wass_1\left(\nu, \nu'\right) + \epsilon.
    \]
    Plug it back to \eqref{eq:split_tri}, we get the desired result.
\end{proof}

\subsection{Proof of Lemma~\ref{lemm:var_wassh}}
\begin{proof}[Proof of Lemma~\ref{lemm:var_wassh}]
    Let $\pi_z$ be the optimal coupling associated with $\Wass_h(\mu_{Y(0)}^z, \mu_{Y(1)}^z)$, then we have, for a point $z_0 \in \calZ$,
    \begin{align*}
        &\int \Wass_h(\mu_{Y(0)}^{z}, \mu_{Y(1)}^{z})^2 \diff \mu_Z(z)\\
        = & \int (\Wass_h(\mu_{Y(0)}^{z}, \mu_{Y(1)}^{z}) - \Wass_h(\mu_{Y(0)}^{z_0}, \mu_{Y(1)}^{z_0}) + \Wass_h(\mu_{Y(0)}^{z_0}, \mu_{Y(1)}^{z_0}))^2 \diff \mu_Z(z)\\
        \leq & 2 \int (\Wass_h(\mu_{Y(0)}^{z}, \mu_{Y(1)}^{z}) - \Wass_h(\mu_{Y(0)}^{z_0}, \mu_{Y(1)}^{z_0}))^2 + (\Wass_h(\mu_{Y(0)}^{z_0}, \mu_{Y(1)}^{z_0}))^2 \diff \mu_Z(z)\\
        = & \underbrace{2 \int (\Wass_h(\mu_{Y(0)}^{z}, \mu_{Y(1)}^{z}) - \Wass_h(\mu_{Y(0)}^{z_0}, \mu_{Y(1)}^{z_0}))^2 \diff \mu_Z(z)}_{ \text{(Term A)} } + C.
    \end{align*}

    For \textbf{(Term A)}, by Lemma~\ref{lemm:LipWassh} and Assp.~\ref{a:Lip_kernel}, we have
    \[
        (\Wass_h(\mu_{Y(0)}^{z}, \mu_{Y(1)}^{z}) - \Wass_h(\mu_{Y(0)}^{z_0}, \mu_{Y(1)}^{z_0}))^2 \leq 4 (L_hL_Z)^2 \|z - z_0\|_2^2 \leq 8 (L_hL_Z)^2 (\|z\|_2^2 + \|z_0\|_2^2).
    \]
    Take it back, we get
    \[
        \text{(Term A)} \leq 16 (L_hL_Z)^2 \int \|z\|_2^2 \diff \mu_Z(z) + \tilde C,
    \]
    which proves the desired result.
\end{proof}

\subsection{Proof of Lemma~\ref{lemm:gap_mmu}}
\begin{proof}[Proof of Lemma~\ref{lemm:gap_mmu}]
        Note that $\mmu_{Z(0)}$ is the push-forward measure of $\hat \mu_{Z(0)}$ under $\varphi^n_r$. Then,
        \begin{align*}
            \Wass_{1}(\hat \mu_{Z(0)}, \mmu_{Z(0)}) \leq & \left(\int \lnorm{\zeta - \varphi_r^n(\zeta)}{2} \diff \hat \mu_{Z(0)}(\zeta)\right)\\
            \leq & \sup_{\zeta \in [0,1]^{d_Z}} \lnorm{\zeta - \varphi_r^n(\zeta)}{2} & (\textup{due to Assp.~\ref{a:compact_domain}})\\
            \leq & \sqrt{d_Z}  n^{-r}.
        \end{align*}
\end{proof}

\subsection{Proof of Lemma~\ref{lemm:gap_mmu2}}
\begin{lemma}\label{lemm:C7}
    For any $G \in \Phi^n_r$, we have: (i) $\hat \mu_{Y(0)}^{G} = \mmu_{Y(0)}^{G}$, and (ii) $\mmu_{Z(0)}(G) = \hat \mu_{Z(0)}(G)$.
\end{lemma}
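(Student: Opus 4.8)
The plan is to reduce both parts to a single combinatorial identity: for every cell $G \in \Phi_r^n$ the index set $\tilde I_G = \{i \in [n]: \varphi_r^n(Z_i(0)) \in G\}$ equals $I_G = \{i \in [n]: Z_i(0) \in G\}$. Granting this, part (i) is immediate, since $\hat\mu_{Y(0)}^G = \frac{1}{|I_G|}\sum_{i \in I_G}\delta_{Y_i(0)}$ and $\mmu_{Y(0)}^G = \frac{1}{|\tilde I_G|}\sum_{i \in \tilde I_G}\delta_{Y_i(0)}$ become literally the same expression. For part (ii) I would use that $\mmu_{Z(0)}$ is the pushforward of $\hat\mu_{Z(0)}$ under $\varphi_r^n$ (it is the $Z$-marginal of $\mmu_{Y(0),Z(0),n}$), so $\mmu_{Z(0)}(G) = \hat\mu_{Z(0)}\big((\varphi_r^n)^{-1}(G)\big) = \tfrac{1}{n}|\tilde I_G|$, while $\hat\mu_{Z(0)}(G) = \tfrac{1}{n}|I_G|$; equality of these two again follows from $I_G = \tilde I_G$.

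To establish $I_G = \tilde I_G$, I would first record the elementary fact about $\varphi_r^n$ from Definition~\ref{defi:adapt_emp}: for any $\zeta \in [0,1]^{d_Z}$ and any cell $G' \in \Phi_r^n$, one has $\zeta \in G'$ if and only if $\varphi_r^n(\zeta)$ is the center of $G'$. Indeed, $\Phi_r^n$ partitions $[0,1]^{d_Z}$ into disjoint cubes, so $\zeta$ belongs to a unique cell $G_\zeta$, and by construction $\varphi_r^n(\zeta)$ is exactly the center of $G_\zeta$; since the center of a small cube is an interior point of that cube, it lies in no other cell, so $\varphi_r^n(\zeta) \in G$ holds precisely when $G = G_\zeta$, i.e. precisely when $\zeta \in G$. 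Applying this with $\zeta = Z_i(0)$ for each $i \in [n]$ gives $i \in \tilde I_G \iff i \in I_G$, hence $I_G = \tilde I_G$.

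I do not expect any genuine obstacle here; the only point demanding a little care is the boundary convention in the disjoint partition of $[0,1]^{d_Z}$ into cubes of side $n^{-r}$. Whatever convention makes the cells pairwise disjoint, the center of each cube is an interior point of that cube and hence belongs to no other cell, which is exactly — and all — the geometry the argument uses; so $I_G = \tilde I_G$ holds regardless of that choice. With this identity in hand, (i) and (ii) follow as described, which completes the proof.
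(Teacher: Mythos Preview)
Your proposal is correct and follows exactly the same approach as the paper's proof: both reduce (i) and (ii) to the single identity $I_G = \tilde I_G$, which holds because $\varphi_r^n(\zeta) \in G$ if and only if $\zeta \in G$ for any cell $G \in \Phi_r^n$. Your write-up simply spells out in more detail what the paper states in one line.
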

\begin{proof}[Proof of Lemma~\ref{lemm:C7}]
    Note that, due to the construction of $\varphi_r^n$ and $G \in \Phi_r^n$, we have $I_G = \tilde I_G$. As a result, we get $\hat \mu_{Y(0)}^{G} = \mmu_{Y(0)}^{G}$, which proves (i). For (ii), note that $n \hat \mu_{Z(0)}(G) = |I_G| = |\tilde I_G| = n \mmu_{Z(0)}(G)$.
\end{proof}
\begin{proof}[Proof of Lemma~\ref{lemm:gap_mmu2}]
     We have
        \begin{align}
            \int \Wass_{1}(\mmu_{Y(0)}^{z}, \mu_{Y(0)}^{z}) \diff \mmu_{Z(0)}(z)
            =& \sum_{\zeta\in \varphi_r^n([0,1]^{d_Z})} \Wass_{1}(\mmu_{Y(0)}^{\zeta}, \mu_{Y(0)}^{\zeta}) \mmu_{Z(0)}(\zeta)\label{eq:eq26}.
        \end{align}
        
        Note that, in \eqref{eq:eq26}, we have $\mmu_{Y(0)}^{\zeta} = \mmu_{Y(0)}^{G}$, and $\mmu_{Z(0)}(\zeta) = \mmu_{Z(0)}(G)$ for $G = (\varphi_r^n)^{-1}(\zeta) \in \Phi_r^n$.

        Then, by Lemma~\ref{lemm:C7}, we have
        \begin{align*}
            \Wass_{1}(\mmu_{Y(0)}^{G}, \mu_{Y(0)}^{\zeta}) \leq & \Wass_{1}(\mmu_{Y(0)}^{G}, \mu_{Y(0)}^{G}) + \Wass_{1}(\mu_{Y(0)}^{G}, \mu_{Y(0)}^{\zeta}) & (\textup{by the triangle ineq.})\\
            \leq & \Wass_{1}(\hat \mu_{Y(0)}^{G}, \mu_{Y(0)}^{G}) + \Wass_{1}(\mu_{Y(0)}^{G}, \mu_{Y(0)}^{\zeta}). & (\textup{by Lemma~\ref{lemm:C7}})
        \end{align*}

        By the convexity of $\mu \mapsto \Wass_1(\mu, \mu_{Y(0)}^{\zeta})$ (Lemma~\ref{lem:convex}) and Assp.~\ref{a:Lip_kernel}, for $\zeta \in \varphi_r^n([0,1]^{d_Z})$, we have
        \begin{align*}
            \Wass_{1}(\mu^G_{Y(0)}, \mu^{\zeta}_{Y(0)}) = & \Wass_{1}\left(\int_{G} \mu^{\zeta'}_{Y(0)} \diff \mu_{Z}(\zeta'), \mu^{\zeta}_{Y(0)}\right)\\
            \leq & \int_G \Wass_1(\mu^{\zeta'}_{Y(0)}, \mu^{\zeta}_{Y(0)}) \diff \mu_{Z}(\zeta')\\
            \leq & \int_G L_{Z} \lnorm{\zeta' - \zeta}{2} \diff \mu_{Z}(\zeta') & (\textup{due to Assp.~\ref{a:Lip_kernel}})\\
            \leq & L_{Z} \sup_{\zeta' \in G}\lnorm{\zeta - \zeta'}{2}. & (\textup{due to Assp.~\ref{a:compact_domain}})\\
            \leq & L_{Z} \sqrt{d_Z}  n^{-r}
        \end{align*}

        Back to \eqref{eq:eq26}, by Lemma~\ref{lemm:C7} that $\hat \mu_{Z(0)}(G) = \mmu_{Z(0)}(G)$, we obtain
        \begin{align*}
            & \sum_{\zeta\in \varphi_r^n([0,1]^{d_Z})} \Wass_{1}(\mmu_{Y(0)}^{\zeta}, \mu_{Y(0)}^{\zeta}) \mmu_{Z(0)}(\zeta)\\
            \leq & \sum_{G \in \Phi_r^n} \mmu_{Z(0)}(G) \left(\Wass_{1}(\hat \mu_{Y(0)}^{G}, \mu_{Y(0)}^{G}) + L_Z \sqrt{d_Z}  n^{-r}\right)\\
            = & \sum_{G \in \Phi_r^n} \mmu_{Z(0)}(G) \Wass_{1}(\hat \mu_{Y(0)}^{G}, \mu_{Y(0)}^{G}) + L_Z \sqrt{d_Z}  n^{-r}\\
            = & \sum_{G \in \Phi_r^n} \hat \mu_{Z(0)}(G) \Wass_{1}(\hat \mu_{Y(0)}^{G}, \mu_{Y(0)}^{G}) + L_Z \sqrt{d_Z}  n^{-r}. & (\textup{by Lemma~\ref{lemm:C7}})
        \end{align*}
        Plugging back to \eqref{eq:eq26} proves the desired result.
\end{proof}

\subsection{Proof of Lemma~\ref{lem:TV_bound}}
\begin{proof}[Proof of Lemma~\ref{lem:TV_bound}]
    Note that $d_{\text{TV}}(Z,Z') = \inf_{\lambda \in \Pi(\QQ, \PP)}~\EE_{\lambda}\left[\mathbf{1}(Z \neq Z')\right]$. Then, since 
    \[
        \lnorm{Z - Z'}{2} \leq r \times \mathbf{1}(Z \neq Z'),
    \]
    then we get 
    \[
        \Wass_{1}(Z,Z') \leq r \times d_{\text{TV}}(Z,Z').
    \]
\end{proof}

\subsection{Proof of Lemma~\ref{lem:linear_gap}}
\begin{proof}[Proof of Lemma~\ref{lem:linear_gap}]
    By Lemma~\ref{lem:TV_bound}, 
    \[
        \Wass_1(\hat \mu_{Z(0), n}, \tilde \mu_{Z(0), n}) \leq \sqrt{d_{Z}} d_{\text{TV}}(\hat \mu_{Z(0), n}, \tilde \mu_{Z(0), n}).
    \]

    Then, we have
    \[
        \begin{aligned}
        &\Wass_1(\hat \mu_{Z(0), n}, \tilde \mu_{Z(0), n})\\
        \leq &
        \frac{\sqrt{d_Z}}{2} \sum_{i=1}^n \left|\hat w_i - \frac{\frac{\diff \mu_Z}{\diff \mu_{Z|W = 0}}(Z_i(0))}{\sum_{i=1}^{n} \frac{\diff \mu_Z}{\diff \mu_{Z|W = 0}}(Z_i(0))}\right|\\
        \leq & \frac{\sqrt{d_Z}}{2} \sum_{i=1}^n \left|\frac{\sum_{i=1}^n\frac{\diff \mu_Z}{\diff \mu_{Z|W = 0}}(Z_i(0))\hat w_i - \frac{\diff \mu_Z}{\diff \mu_{Z|W = 0}}(Z_i(0))}{\sum_{i=1}^{n} \frac{\diff \mu_Z}{\diff \mu_{Z|W = 0}}(Z_i(0))}\right| \\
        \leq & \frac{\sqrt{d_Z}}{2} \sum_{i=1}^n \left|\frac{\sum_{i=1}^n\frac{\diff \mu_Z}{\diff \mu_{Z|W = 0}}(Z_i(0)) - n}{\sum_{i=1}^{n} \frac{\diff \mu_Z}{\diff \mu_{Z|W = 0}}(Z_i(0))}\right| \hat w_i + \frac{\sqrt{d_Z}}{2}\sum_{i=1}^n \left|\frac{n\hat w_i - \frac{\diff \mu_Z}{\diff \mu_{Z|W = 0}}(Z_i(0))}{\sum_{i=1}^{n} 
        \frac{\diff \mu_Z}{\diff \mu_{Z|W = 0}}(Z_i(0))}\right| \\
        \leq & \underbrace{\frac{\sqrt{d_Z}}{2} \left|\frac{\sum_{i=1}^n\frac{\diff \mu_Z}{\diff \mu_{Z|W = 0}}(Z_i(0)) - n}{\sum_{i=1}^{n} \frac{\diff \mu_Z}{\diff \mu_{Z|W = 0}}(Z_i(0))}\right|}_{(\text{Term A})} + \underbrace{\frac{\sqrt{d_Z}}{2\delta n}\sum_{i=1}^n \left|n\hat w_i - \frac{\diff \mu_Z}{\diff \mu_{Z|W = 0}}(Z_i(0))\right|}_{(\text{Term B})}.
        \end{aligned}
    \]
    Here, for the last inequality, we apply $\sum_{i=1}^n w_i = 1$ and $ \frac{\diff \mu_{Z}}{\diff \mu_{Z|W=0}}(z) \geq \PP(W = 0)~\forall z$, where $\PP(W = 0) = \int \PP(W=0 | Z=z) \diff \mu_Z(z) \geq \delta$ by Assp.~\ref{assu:overlap}.

    For \textbf{(Term A)}, since for $x,y \geq \delta$, $x^{-1} - y^{-1} \leq \delta^{-2}|x - y|$, then we have
    \[
        \begin{aligned}
            \EE[(\text{Term A})] &= \frac{\sqrt{d_Z}}{2} \EE\left|1 - \frac{1}{\frac{1}{n}\sum_{i=1}^{n} \frac{\diff \mu_Z}{\diff \mu_{Z|W = 0}}(Z_i(0))}\right|\\
            &\leq \frac{\sqrt{d_Z}}{2\delta^2} \EE\left|\frac{1}{n}\sum_{i=1}^{n} \frac{\diff \mu_Z}{\diff \mu_{Z|W = 0}}(Z_i(0)) - 1\right|\\
            &\leq \frac{\sqrt{d_Z}}{2\delta^2\sqrt{n}} \sqrt{\text{Var}\left(\frac{\diff \mu_Z}{\diff \mu_{Z|W = 0}}(Z)\right)}\\
            &\leq \frac{\sqrt{d_Z}}{2\delta^3\sqrt{n}}.
        \end{aligned}
    \]
    Here, for the last inequality, we apply $\frac{\diff \mu_Z}{\diff \mu_{Z|W = 0}}(z) = \PP(W = 0)/(1-e(z)) \leq \delta^{-1}$.

    For \textbf{(Term B)}, we apply Assp.~\ref{a:est_propensity_score} and get
    \[
        \EE[(\text{Term B})] \leq \frac{\sqrt{d_Z} C_{\textup{w}}}{2 \delta} n^{-r}.
    \]

    Combine (Term A) and (Term B), we get the desired result.
\end{proof}

\subsection{Proof of Lemma~\ref{lem:Wass_conv_reweight}}
\begin{proof}[Proof of Lemma~\ref{lem:Wass_conv_reweight}]
    We write $Z_i(0)$ as $Z_i$ in this proof.
   By the triangle inequality (Lemma~\ref{lem:tri_ineq}),
    \[
        \EE[\Wass_1(\hat \mu_{Z(0), n}, \mu_{Z(0)})] \leq \underbrace{\EE[\Wass_1(\hat \mu_{Z(0), n}, \tilde \mu_{Z(0), n})]}_{(\text{Term A})} + \underbrace{\EE[\Wass_1(\tilde \mu_{Z(0), n}, \mu_{Z(0)})]}_{(\text{Term B})}.
    \]

    For \textbf{(Term A)}, apply Lemma~\ref{lem:linear_gap}, we get 
    \[
        \text{(Term A)} \leq \frac{\sqrt{d_Z}}{2\delta^3}n^{-\frac{1}{2}} + \frac{\sqrt{d_Z} C_{\textup{w}}}{2 \delta} n^{-r}.
    \]

    For \textbf{(Term B)}, we follow the similar argument of the proof of \cite[Theorem 1]{fournier2015rate}, and see that it depends on the bound of $\EE[|\tilde \mu_{Z(0), n}(A) - \mu_{Z(0)}(A)|]$ for any subset $A$ of $\RR^{d_Z}$.

    Indeed, we have,
    \begin{align*}
        &\EE[|\tilde \mu_{Z(0), n}(A) - \mu_{Z(0)}(A)|] \\
        =& \EE\left[\left|\sum_{i=1}^n w_i \delta_{Z_i}(A) - \mu_{Z(0)}(A)\right|\right]\\
        =& \EE\left[\left|\frac{\sum_{i=1}^{n} \frac{\diff \mu_{Z}}{\diff \mu_{Z|W=0}}(Z_i) (\delta_{Z_i}(A) - \mu_{Z(0)}(A)) }{\sum_{i=1}^{n} \frac{\diff \mu_{Z}}{\diff \mu_{Z|W=0}}(Z_i)}\right|\right]\\
        \leq & \frac{1}{\PP(W=0)} \EE\left[\left|\frac{1}{n} \sum_{i=1}^{n} \frac{\diff \mu_{Z}}{\diff \mu_{Z|W=0}}(Z_i) (\delta_{Z_i}(A) - \mu_{Z(0)}(A)) \right|\right]\\
        \leq & \frac{1}{\sqrt{n} \PP(W=0)} \left(\EE_{Z \sim \mu}\left[\left(\frac{\diff \mu_{Z}}{\diff \mu_{Z|W=0}}(Z)\delta_{Z}(A) - \mu_{Z(0)}(A)\right)^2\right]\right)^{\frac{1}{2}} & (\text{Jensen's inequality})\\
        \leq & \sqrt{\frac{\mu_{Z(0)}(A)}{n}} \cdot \frac{\delta^{-1}}{\PP(W=0)}.
    \end{align*}
    In addition, we have
    \begin{align*}
        \EE\left[|\tilde \mu_{Z(0)}(A) - \mu_{Z(0)}(A)|\right] \leq \EE\left[\tilde \mu_{Z(0)}(A)\right] + \mu_{Z(0)}(A) \leq \frac{2}{\PP(W = 0)} \cdot \mu_{Z(0)}(A),
    \end{align*}
    where we apply the inequality $\PP(W = 0) \leq \frac{\diff \mu_{Z}}{\diff \mu_{Z|W=0}}(Z_i) \leq \delta^{-1}$.

    Further, $\PP(W = 0) = \int \PP(W=0 | Z=z) \diff \mu_Z(z) \geq \delta$ by Assp.~\ref{assu:overlap}.
    
    Jointly, we have
    \begin{align*}
        \EE[|\tilde \mu_{Z(0)}(A) - \mu_{Z(0)}(A)|] \leq \delta^{-2}\min\left(2 \mu_{Z(0)}(A), \sqrt{\frac{\mu_{Z(0)}(A)}{n}}\right).
    \end{align*}
    This inequality is similar to the first inequality in \cite[Section 3]{fournier2015rate}, except that instead of $\delta^{-2}$, they have $1$ as the constant.

    Then, following the same argument as in the proof of \cite[Theorem 1]{fournier2015rate}, we get
    \[
        \text{(Term B)} \leq \EE[\Wass_1(\tilde \nu, \nu)] \leq C\delta^{-2} R_{d_Z}(n),
    \]
    where $C$ is a constant depending on $d_Z$.

    Combining (Term A) and (Term B), we get the desired result.    
\end{proof}

\subsection{Proof of Lemma~\ref{lem:discrete_gapII}}
\begin{proof}[Proof of Lemma~\ref{lem:discrete_gapII}]
    Let 
    \[
        W_G = \frac{(1-\hat e(\varphi_r^n(G)))^{-1} |\{i: Z_i(0) \in G\}|}{\sum_{G} (1-\hat e(\varphi_r^n(G)))^{-1} |\{i: Z_i(0) \in G\}|},
    \]
    then by the triangle inequality (Lemma~\ref{lem:tri_ineq}), we have
    \[
        \begin{aligned}
            &\Wass_1(\mmu_{Z(0), n}, \hat \mu_{Z(0), n})\\
            = & \Wass_1(\sum_{G \in \Phi_r^n} W_G\delta_{\varphi_r^n(Z_i(0))}, \sum_{i=1}^{n} \hat w_i \delta_{Z_i(0)})\\
            \leq & \Wass_1(\sum_{G \in \Phi_r^n} W_G\delta_{\varphi_r^n(Z_i(0))}, \sum_{G \in \Phi_r^n} \frac{W_G}{|\{i: Z_i(0) \in G\}|} \sum_{i\in G} \delta_{Z_i(0)})\\
            & + \Wass_1(\sum_{G \in \Phi_r^n} \frac{W_G}{|\{i: Z_i(0) \in G\}|}\sum_{i\in G} \delta_{Z_i(0)}, \sum_{i=1}^{n} \hat w_i \delta_{Z_i(0)})\\
            \leq & \sqrt{d_Z} n^{-r} + \underbrace{\Wass_1(\sum_{G \in \Phi_r^n} \frac{W_G}{|\{i: Z_i(0) \in G\}|}\sum_{i\in G}  \delta_{Z_i(0)}, \sum_{i=1}^{n} \hat w_i \delta_{Z_i(0)})}_{(\text{Term A})},
        \end{aligned}
    \]
    where the last inequality is due to Lemma~\ref{lemm:gap_mmu}.

    For \textbf{(Term A)}, by Lemma~\ref{lem:TV_bound}, we have
    \[
        \begin{aligned}
            (\text{Term A})
            \leq & \frac{\sqrt{d_Z}}{2} \sum_{G \in \Phi_r^n} \sum_{i\in G} \left| \frac{W_G}{|\{i: Z_i(0) \in G\}|} - \hat w_i\right|\\
            \leq & \frac{\sqrt{d_Z}}{2} \sum_{G \in \Phi_r^n} \sum_{i\in G} \left(\underbrace{\frac{(1-\hat e(\varphi_r^n(G)))^{-1} - (1 - \hat e(Z_i(0)))^{-1}}{\sum_{G} (1-\hat e(\varphi_r^n(G)))^{-1} |\{i: Z_i(0) \in G\}|}}_{(\text{Term B})_i} \right.\\
            & \left. + \underbrace{\frac{(1 - \hat e(Z_i(0)))^{-1} }{\sum_{G} (1-\hat e(\varphi_r^n(G)))^{-1} |\{i: Z_i(0) \in G\}|} -  \frac{(1 - \hat e(Z_i(0)))^{-1} }{\sum_{i=1}^n (1-\hat e(Z_i(0)))^{-1}}}_{(\text{Term C})_i}\right).
        \end{aligned}
    \]
    
    For \textbf{$(\text{Term B})_i$}, by Assp.~\ref{a:lip_propensity_score} and~\ref{a:bound_e}, we have
    \[
        \begin{aligned}
            |(1-\hat e(\varphi_r^n(G)))^{-1} - (1 - \hat e(Z_i(0)))^{-1}|
            \leq \frac{L_e}{\eta^2} n^{-r},\quad
            \sum_{G} (1-\hat e(\varphi_r^n(G)))^{-1} |\{i: Z_i(0) \in G\}| \geq n.
        \end{aligned}
    \]
    Thus,
    \[
        \sum_{G \in \Phi_r^n} \sum_{i\in G} |\text{(Term B)}_i| \leq \frac{L_e}{\eta^2} n^{-r}.
    \]
    For \textbf{$(\text{Term C})_i$}, by the same reasoning above, we have
    \[
        \begin{aligned}
            |(\text{Term C})_i|
            &= \frac{|\sum_{i=1}^n (1-\hat e(Z_i(0)))^{-1} - \sum_{G} (1-\hat e(\varphi_r^n(G)))^{-1} |\{i: Z_i(0) \in G\}||}{(\sum_{G} (1-\hat e(\varphi_r^n(G)))^{-1} |\{i: Z_i(0) \in G\}|)(\sum_{i=1}^n (1-\hat e(Z_i(0)))^{-1})}\\
            &\leq \frac{n \frac{L_e}{\eta^2} n^{-r}}{(\sum_{G} (1-\hat e(\varphi_r^n(G)))^{-1} |\{i: Z_i(0) \in G\}|)(\sum_{i=1}^n (1-\hat e(Z_i(0)))^{-1})} \leq \frac{1}{n} \frac{L_e}{\eta^3} n^{-r}.
        \end{aligned}
    \]
    Thus, we get
    \[
        \sum_{G \in \Phi_r^n} \sum_{i\in G} |\text{(Term C)}_i|
        \leq \frac{L_e}{\eta^3} n^{-r}.
    \]
    Combine these, we get
    \[
        \Wass_1(\mmu_{Z(0), n}, \hat \mu_{Z(0), n}) \leq \sqrt{d_Z} (1+\frac{L_e}{\eta^3}) n^{-r}.
    \]
    
\end{proof}

\end{document}